 \theoremstyle{plain}
\def\cNP{\hbox{\rm \sffamily NP}}
\newtheorem{theorem}{Theorem}[section]
\newtheorem{lemma}[theorem]{Lemma}
\newtheorem{observation}[theorem]{Observation}
\theoremstyle{definition}
\begin{document}

\title{Bounded embeddings of graphs in the plane}

\author{
{Radoslav Fulek\footnote{
The research leading to these results has received funding from the People Programme (Marie Curie Actions) of the European Union's Seventh Framework Programme (FP7/2007-2013) under REA grant agreement no [291734].}
}
}

\date{}

\maketitle

\begin{abstract}
A drawing in the plane ($\mathbb{R}^2$)  of a graph $G=(V,E)$ equipped with a function $\gamma: V \rightarrow \mathbb{N}$  is \emph{$x$-bounded} if (i) $x(u) <x(v)$ whenever $\gamma(u)<\gamma(v)$ 
and (ii) $\gamma(u)\leq\gamma(w)\leq \gamma(v)$, where $uv\in E$ and $\gamma(u)\leq \gamma(v)$, whenever $x(w)\in x(uv)$, where $x(.)$
denotes the projection to the $x$-axis.
We prove a characterization of isotopy classes of graph embeddings in the plane containing an $x$-bounded embedding.

  
Then we present an efficient algorithm, that relies on our result, for testing the existence of an $x$-bounded embedding if the given
 graph is a tree or generalized $\Theta$-graph.
 This partialy answers a question raised recently by Angelini et al. and Chang et al.,
 and proves that c-planarity testing of flat clustered graphs with
 three clusters is tractable if each connected component of the underlying abstract graph is a tree.
\end{abstract}

\newpage

\thispagestyle{empty}

\tableofcontents

\newpage
\setcounter{page}{1}

\section{Introduction}

Testing planarity of graphs with additional constraints is a popular theme in
the area of graph visualizations abundant with open problems mainly of
algorithmic nature.
Probably the most important open problem in the area
is to determine the complexity status, i.e., P, NP-hard, or IP, of the problem of deciding for a pair of (planar)  graphs $G_1$ and $G_2$, whose edge sets possibly intersect,
if there exists a drawing of $G_1\cup G_2$ in the plane,
whose restriction to both graphs, $G_1$ and $G_2$, is an embedding.
The problem, also known as SEFE-2, was introduced in 2003 by Brass et al. in~\cite{Brass2007117} and 
its prominence  was realized  by Schaefer in~\cite{S12+},
where polynomial time reductions of many problems in the area to SEFE-2 is given,
see Figure~2 therein.

Among the problems reducible to SEFE-2 in a polynomial time is a notoriously difficult open problem raised under the name of \emph{c-planarity} in 1995 by Feng, Cohen and Eades~\cite{Feng95,Feng95+}.
The problem asks for a given planar graph $G$ equipped with a hierarchical structure on its vertex 
set, i.e., clusters, to decide if a planar embedding $G$ with the following property exists:
the vertices in each cluster are drawn inside a disc so that the discs
form a laminar set family corresponding to the given hierarchical structure
and the embedding has the least possible number of edge-crossings with the boundaries of the discs.
Again we are interested in the complexity status of the problem.

On the other hand, quite well understood from the algorithmic perspective are 
upward embeddings of directed acyclic planar graphs~\cite{BBLM94,GT02} and closely related 
various layered drawings of leveled graphs~\cite{BBF04,JLM98}.
In the setting of layered drawings we  place the vertices on, e.g., parallel lines or concentric circles, corresponding to the levels of $G$. Furthermore, we  require that edges lie between the levels of their endpoints
and that edges are monotone in the sense that they intersect any line (circle) parallel to (concentric with) the chosen lines (circles) at most once.
Also these easier planarity variants are reducible in a polynomial time to SEFE-2~\cite{S12+}.
The layered drawings with  parallel lines representing levels
are called \emph{level drawings}.
The $x$-bounded planarity treated in this work sits complexity-wise 
between the level planarity and c-planarity.
Hence, a better understanding of $x$-bounded planarity is a vital step towards 
shifting the frontier between complexity-wise known and open planarity variants.

Let $(G,\gamma)$ denote a pair of a planar graph $G=(V,E)$ 
and a function  $\gamma: V \rightarrow \mathbb{N}$.
A drawing in the plane ($\mathbb{R}^2$)  of $G$ is \emph{$x$-bounded} if (i) $x(u) <x(v)$ whenever $\gamma(u)<\gamma(v)$ 
and (ii) $\gamma(u)\leq\gamma(w)\leq \gamma(v)$, where $uv\in E$ and $\gamma(u)\leq \gamma(v)$, whenever $x(w)\in x(uv)$, where $x(.)$
denotes the projection to the $x$-axis,
see Figure~\ref{fig:strips3} for an illustration.
As a consequence of the proof of Theorem~\ref{thm:linearly} (below)
there exists an $x$-bounded embedding of $(G,\gamma)$ in which
projection $x(e)$ of every edge  $e\in E$ is injective, i.e., \emph{$x$-monotone}, as  soon as there exists an arbitrary  $x$-bounded embedding of $(G,\gamma)$.

\begin{lemma}
\label{lemma:intro}
There exists an $x$-bounded embedding of $(G,\gamma)$  in which
the projection $x(e)$ of every edge  $e\in E$ is injective if there exists an arbitrary  $x$-bounded embedding of $(G,\gamma)$.
\end{lemma}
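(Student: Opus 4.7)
The strategy is to start from an arbitrary $x$-bounded embedding $\mathcal D$ of $(G,\gamma)$ and reduce it, through a sequence of local surgeries, to an $x$-bounded embedding in which every edge has injective $x$-projection. I would let $T(\mathcal D)$ denote the total number of vertical tangent points of edges, i.e., local extrema of the $x$-coordinate along edges; the condition $T(\mathcal D)=0$ exactly captures the desired conclusion, so the plan is to induct on $T(\mathcal D)$.

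\textbf{Setup.} By a sufficiently small perturbation I would put $\mathcal D$ in general position, so that each edge is a smooth curve with only finitely many vertical tangents, these tangents have pairwise distinct $x$-coordinates, and none of them coincides with the $x$-coordinate of a vertex. A small enough perturbation preserves both $x$-bounded conditions, since (i) and (ii) are open conditions relative to the positions of vertices and to the set of $x$-values met by each edge. The base case $T=0$ is exactly the statement of the lemma.

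\textbf{Inductive step.} Assume $T(\mathcal D)\geq 1$. I would pick a vertical tangent $p$ on some edge $e$ whose $x$-coordinate is leftmost among all local $x$-minima of vertical tangents on all edges (if no local minimum exists, a symmetric argument applies to a rightmost local maximum). Choose $\varepsilon>0$ smaller than the $x$-gap between $p$ and every other vertical tangent, and smaller than the $x$-gap between $p$ and every vertex. Inside the vertical strip $x(p)\leq x\leq x(p)+\varepsilon$ the two branches of $e$ emanating from $p$ cut off a topological disk $R$ whose left boundary is a small C-shaped arc of $e$ with tip $p$ and whose right boundary lies inside the line $x=x(p)+\varepsilon$. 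By the choice of $\varepsilon$, no edge has a vertical tangent inside $R$, so every edge meeting $R$ is $x$-monotone within $R$. I would replace the C-shaped arc by an $x$-monotone arc that shadows the right boundary of $R$, thereby eliminating the tangent at $p$ and producing a new drawing $\mathcal D'$ with $T(\mathcal D')<T(\mathcal D)$.

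\textbf{Main obstacle.} The delicate point is carrying out the surgery without introducing crossings and without breaking the $x$-bounded property. Because every edge meeting $R$ is $x$-monotone inside $R$, the disk $R$ is fibered cleanly by its intersections with vertical lines — each vertical line meets $\overline R$ in an interval — which lets me choose a vertical-fibre-preserving isotopy of a neighborhood of $R$ that straightens the C-arc onto the right boundary and simultaneously slides the interior contents of $R$ along these fibres. Such an isotopy introduces no new crossings. For $x$-boundedness after the surgery, condition (i) holds automatically since no vertex moves, and condition (ii) holds because the $x$-projection of every edge only shrinks under the modification: any vertex $w$ with $x(w)$ in the new $x$-projection of some edge was already in the old one, so the $\gamma$-constraint that was already satisfied still is. After the surgery $T$ has strictly decreased, and iterating terminates with $T=0$, producing the required $x$-monotone $x$-bounded embedding.
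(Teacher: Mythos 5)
Your inductive step does not work: the surgery you describe cannot decrease $T$. By your own choice of $\varepsilon$ the closed strip $x(p)\le x\le x(p)+\varepsilon$ contains no vertex and no other vertical tangent, so (since the drawing is an embedding and no arc may cross the C-arc of $e$) the region $R$ in fact contains no portion of the graph at all — the fibre-sliding isotopy is vacuous. The only content of the move is then to redraw the C-arc of $e$ inside $\overline R$, and that is impossible to do monotonically: the two branches of $e$ leave the strip through the right wall at points $a,b$ with $x(a)=x(b)=x(p)+\varepsilon$, and \emph{every} arc joining two points of equal $x$-coordinate either contains an interior local extremum of $x$ or a vertical segment; there simply is no ``$x$-monotone arc that shadows the right boundary of $R$'' with these endpoints. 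So the tangency at $p$ is merely displaced, not eliminated, and the induction on $T$ never advances. The real difficulty of the lemma is exactly what a purely local move cannot see: an interior extremum can only disappear by cancelling against an adjacent extremum of the same edge or by being dragged past vertices and nested edges, and the ``finger'' region swept in such a cancellation is generally \emph{not} empty — it may contain vertices $w$ with $\gamma(u)\le\gamma(w)\le\gamma(v)$ and whole subgraphs attached to them, and one must argue that they can be displaced while preserving conditions (i) and (ii) and planarity. Your proposal never confronts this, and indeed never uses condition (ii), the labels $\gamma$, or any structure of $G$ beyond a single edge, which is a strong sign the argument is missing the core of the problem.

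For comparison, the paper does not attempt a tangency-count induction at all: it first deforms the given $x$-bounded embedding into an $x$-bounded drawing with all pairs of edges crossing evenly, contracts every monochromatic connected component, applies the weak Hanani–Tutte theorem for strip clustered graphs from~\cite{F14} (whose proof yields $x$-monotone edges directly), and then re-inserts the contracted components inside small discs via Tutte's barycentric embedding. If you want to salvage a direct geometric argument, you would need to (a) cancel adjacent min–max pairs rather than single extrema, and (b) prove that the region enclosed by such a finger can always be evacuated (or swept along) without violating (i), (ii) or creating crossings — and that is essentially the substance the Hanani–Tutte machinery supplies.
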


Hence, we will not lose generality if we are interested only
in finding an $x$-monotone embedding that is $x$-bounded.
For that reason we call an $x$-bounded drawing \emph{an $x$-bounded embedding}
if it is edge-crossing free and $x(e)$ is injective for every edge $e\in E$, see Figure~\ref{fig:strips4} for an illustration.
Moreover, by ~\cite[Theorem 2]{PT04_monotone}
we can assume that edges in such embedding
are straight-line segments.
The main contribution of our work is a characterization of
isotopy classes of embeddings of $G$ in the plane containing
an $x$-bounded embedding Theorem~\ref{thm:main}.

\bigskip
\begin{figure}[htp]
\centering
\subfloat[]{\includegraphics[scale=0.7]{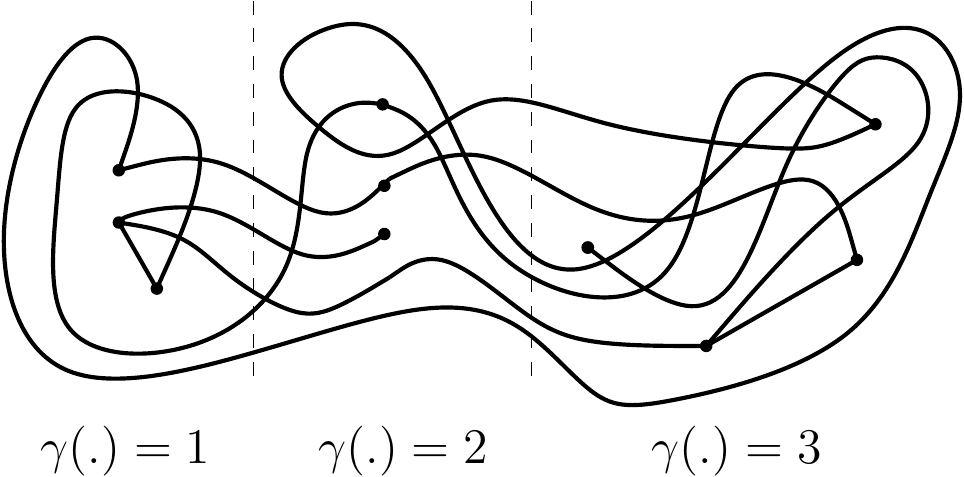}
    \label{fig:strips3}
	} \hspace{0.2cm}
\subfloat[]{\includegraphics[scale=0.7]{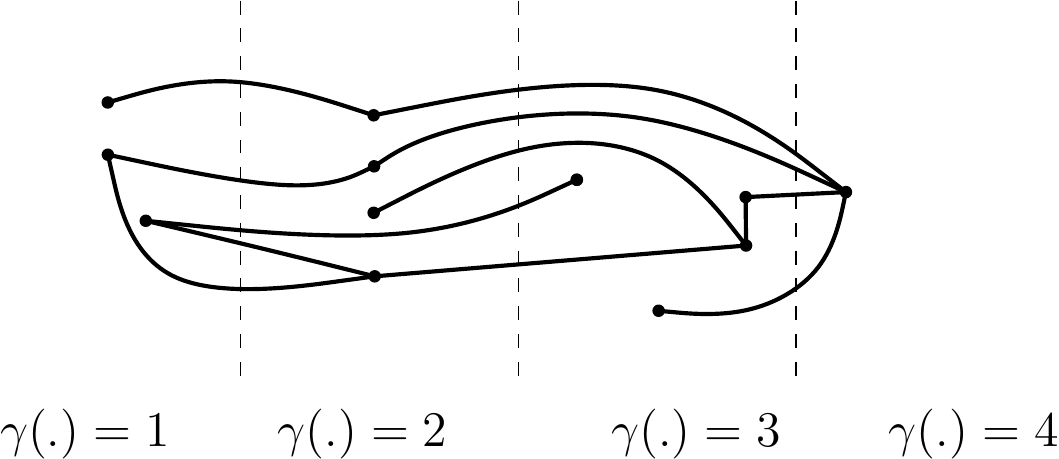}
	\label{fig:strips4}
	}
\caption{(a) An $x$-bounded drawing of a pair $(G,\gamma)$ each vertical strip contains vertices whose $\gamma$ value is the same; (b) An $x$-bounded embedding of a pair $(G,\gamma)$, $x$-monotone as required by our definition.}
\end{figure}

We use the characterization to prove the correctness of a PQ-tree based algorithm to test if an $x$-bounded embedding of $G$ exists. The characterization  turns the problem of the existence of an $x$-bounded embedding into a problem  that can be solved efficiently by employing  a PQ-tree based technique by Bl\"asius and Rutter~\cite{BR14+} at least in the case of trees and a union of internally disjoint paths between a pair of vertices. Moreover, we suspect that with additional twists the problem can be solved efficiently for any graph.
The characterization also implies a common generalization of the weak Hanani--Tutte theorem 
and its monotone variant by Pach and T\'oth, Theorem~\ref{thm:linearly}.

\subsection{Results}
Refer to Section~\ref{sec:notation} for the definitions.
Suppose that we have a pair of a graph and $\gamma$ as above, where $G$ is planar, connected, and let $\mathcal{E}$ denote the isotopy class
of an embedding of $G$ in the plane. Let us treat $\mathcal{E}$ as an embedded two-dimensional
polytopal complex, and let $\mathcal{C}=(\mathcal{E},\mathbb{Z}_2)$ be the corresponding
chain complex, i.e., in $\mathcal{C}$ two-dimensional chains are generated by the inner faces 
of $\mathcal{E}$, one-dimensional chains by the edges, etc.
The boundary operator $\partial(.)$ is defined 
as usual, i.e., we put $\partial(v)=\emptyset$, for any $v\in V$,
and hence, $\gamma(\partial(v))=\emptyset$.
Let $i_\mathcal{E}(C_1,C_2)$ denote the algebraic intersection number
of the supports of pure chains $C_1$ and $C_2$ in  $\mathcal{E}$  such that $dim(C_1)+dim(C_2)=2$,
where $dim(.)$ is dimension, and 
 the support of both $C_1$ and $C_2$ is homeomorphic to an orientable manifold of the corresponding dimension.
Our main result is the following.

\begin{theorem}
\label{thm:main}
The isotopy class $\mathcal{E}$ contains an $x$-bounded embedding  if and only if
  $i_\mathcal{E}(C_1,C_2)= 0$ whenever $\gamma(C_1) \cap \gamma(\partial C_2) = \emptyset$ and $\gamma(\partial C_1) \cap \gamma(C_2) = \emptyset$, where $\gamma(.)$ is extended over $\mathbb{R}$ linearly 
  to edges.\footnote{It is enough to
  consider  pairs $C_1$ and $C_2$, where both
  $C_1$ and $C_2$ are homeomorphic to a ball of the corresponding dimension.}
\end{theorem}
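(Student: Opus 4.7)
The plan is to prove the two implications separately. For the ``only if'' direction, take an $x$-bounded embedding in $\mathcal{E}$, pass to an $x$-monotone one using Lemma~\ref{lemma:intro}, and, after an affine change of the $x$-axis, assume $x(v)=\gamma(v)$ for every vertex $v$ (infinitesimally perturbing vertices sharing a $\gamma$-value). In this normalization the $x$-extent of the support of any pure $1$-chain $C$ equals $\gamma(C)$, and that of a $2$-chain equals the convex hull in $\mathbb{R}$ of $\gamma(\partial C)$. By the footnote it suffices to treat pairs $(C_1,C_2)$ whose supports are topological balls. When $C_1=\{v\}$ and $C_2$ is a disk, the hypothesis $\gamma(v)\notin\gamma(\partial C_2)$ says the vertical line $x=\gamma(v)$ misses $\partial C_2$, so $v$ cannot lie in the interior of the disk and $i_\mathcal{E}(\{v\},C_2)=0$. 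When both $C_1$ and $C_2$ are arcs, the two hypotheses place the endpoints of each arc strictly outside the $x$-strip of the other; the overlap of the $x$-strips is then a slab bounded by vertical lines that $C_1$ (resp.~$C_2$) enters and exits an even number of times, so a parity argument yields $i_\mathcal{E}(C_1,C_2)=0$.

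For the ``if'' direction I plan to build the embedding level by level. Process the integer $\gamma$-values $1,2,\dots,\max\gamma$ in order; at each half-integer $t+\tfrac{1}{2}$, take a representative of $\mathcal{E}$ transverse to the vertical line $x=t+\tfrac{1}{2}$ and read off the linear sequence $\sigma_t$ in which the edges of $G$ cross it. Existence of an $x$-bounded embedding is equivalent to the combinatorial statement that each $\sigma_t$ can be realized on the corresponding vertical line and that consecutive $\sigma_t$ and $\sigma_{t+1}$ differ only through the local ``merge-and-split'' dictated by the vertices at level $t+1$ (an order constraint at each vertex of a PQ-tree flavor). The proof then amounts to showing that any failure of this compatibility condition produces a pair of pure sub-chains $(C_1,C_2)$ of complementary dimensions, each supported on a ball, with disjoint $\gamma$-boundary ranges and nonzero intersection number, contradicting the hypothesis. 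Once compatibility holds everywhere, the $\sigma_t$ glue into an $x$-bounded drawing; Pach--T\'oth straightening upgrades it to an $x$-monotone embedding in $\mathcal{E}$.

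The main obstacle is the obstruction-extraction step in the ``if'' direction: from a purely combinatorial failure one must manufacture an explicit topological witness, namely a pair of ball-shaped sub-chains whose intersection number the hypothesis forces to vanish. My plan is to locate the first $t$ at which compatibility fails, pick a minimal offending sub-configuration, and trace its ``low-$\gamma$'' and ``high-$\gamma$'' parts into two ball-shaped sub-chains that meet exactly where compatibility breaks, so that the rotation obstruction becomes a topological intersection. A subsidiary technical point is the case when several vertices share a $\gamma$-value: the relevant sub-chains can then be arcs rather than cycles, and both clauses of the intersection hypothesis, on $\gamma(C_1)\cap\gamma(\partial C_2)$ and on $\gamma(\partial C_1)\cap\gamma(C_2)$, must be used symmetrically.
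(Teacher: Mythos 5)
Your ``only if'' argument is essentially fine: after normalizing so that $x$-coordinates track $\gamma$, the vertex-in-disk case is exactly the trapped-vertex observation, and the arc--arc case reduces (since nested $\gamma$-ranges are excluded by the hypotheses) to two arcs attached to opposite walls of a vertical slab, whose algebraic intersection number vanishes; this matches the paper, which treats this direction as the easy one (Observations~\ref{obs:crossing} and~\ref{obs:trap}). The problem is the ``if'' direction, which is the entire content of the theorem and which your proposal only plans rather than proves. Two concrete gaps. First, the sequences $\sigma_t$ cannot be ``read off'' from a representative of $\mathcal{E}$: an arbitrary embedding in the isotopy class has no relation between its intersections with the line $x=t+\tfrac12$ and the levels of $\gamma$; the level orders must be \emph{constructed}, and they are far from canonical. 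Second, and more seriously, your obstruction-extraction plan --- locate the first $t$ where compatibility fails and trace a minimal offending configuration into a pair $(C_1,C_2)$ with nonzero intersection number --- presupposes that a failure at level $t$ is intrinsic. Because the orders $\sigma_1,\dots,\sigma_{t-1}$ were chosen, a failure at $t$ may simply reflect a bad earlier choice and need not yield any topological witness; relating the nonexistence of \emph{every} globally consistent choice (which must moreover reproduce the rotation system and outer face of $\mathcal{E}$) to the intersection-number hypothesis is precisely the hard step, and nothing in the proposal carries it out. The sufficiency of local merge-and-split compatibility for gluing the $\sigma_t$ into an $x$-bounded drawing in the given isotopy class is also asserted, not proved.

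For comparison, the paper does not sweep at all. It first replaces Theorem~\ref{thm:main} by the equivalent Theorem~\ref{thm:characterization} (no infeasible interleaving cap--cup pair and no trapped vertex; Lemma~\ref{lemma:conjChar} bridges the two formulations), and then proves that characterization by a double induction, following Minc~\cite{M97} and Skopenkov~\cite{S03}: contract edges joining vertices of equal label (reducing $\lambda$), and when no such edges remain, split the vertices meeting both neighboring clusters, merge consecutive clusters so as to reduce their number, apply induction, and re-expand using the fact that a two-cluster c-graph whose clusters induce independent sets is c-planar in any prescribed isotopy class~\cite{FKMP15}; at each step one checks that no infeasible pair or trapped vertex is created. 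If you want to salvage your sweep/PQ-tree strategy, you would need an argument of that calibre (or a PQ-tree consistency theorem) in place of the ``first failure'' step; as written, the proposal leaves the decisive implication unestablished.
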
  

We remark that ``only if'' part of the theorem
is easy, and thus, it is the ``if'' part that is
interesting. 
Instead of proving Theorem~\ref{thm:main}
we prove its equivalent reformulation,
Theorem~\ref{thm:characterization}, that is less conceptual, but more convenient to work with. The characterization was extracted from the proof of a weak variant of the Hanani--Tutte theorem~\cite{F14} in the setting of strip clustered graphs.
However, the proof of Theorem~\ref{thm:main} presented here is quite different,
and adapts ideas of Minc~\cite{M97} and M.~Skopenkov~\cite{S03}.

As an application of our characterization we generalize the aforementioned  variant of the Hanani--Tutte theorem.

\begin{theorem}\label{thm:linearly}
If $(G,\gamma)$ admits an $x$-bounded drawing $\mathcal{E}$ in which every pair of edges cross evenly
then $(G,\gamma)$ admits an $x$-bounded embedding.
Moreover, there exists an $x$-bounded embedding of $(G,\gamma)$ with the same rotation system as in $\mathcal{E}$.
\end{theorem}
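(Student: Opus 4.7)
The plan is to reduce Theorem~\ref{thm:linearly} to Theorem~\ref{thm:main} via the rotation-preserving form of the weak Hanani--Tutte theorem (Pelsmajer, Schaefer and \v{S}tefankovi\v{c}). Starting from the $x$-bounded even drawing $\mathcal{E}$, that theorem furnishes a planar embedding $\mathcal{E}^*$ of $G$ whose rotation system at every vertex agrees with $\mathcal{E}$. Any $x$-bounded embedding isotopic to $\mathcal{E}^*$ will inherit this rotation system, so it suffices to verify the algebraic intersection condition of Theorem~\ref{thm:main} for $\mathcal{E}^*$.

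Fix chains $C_1,C_2$ with ball supports satisfying the $\gamma$-disjointness hypothesis; since $\dim C_1 + \dim C_2 = 2$, there are two cases. If both $C_i$ are $1$-chains (arcs), any common vertex of their supports would place a single $\gamma$-value simultaneously into $\gamma(C_1)\cap\gamma(\partial C_2)$ or into $\gamma(\partial C_1)\cap\gamma(C_2)$; so the supports are vertex- and edge-disjoint, hence drawn disjointly in the embedding $\mathcal{E}^*$, yielding $i_{\mathcal{E}^*}(C_1,C_2)=0$. If instead $C_1=\{v\}$ is a single vertex and $C_2=D$ is a $2$-chain with disc support, then $i_{\mathcal{E}^*}(v,D)$ equals the parity with which a generic arc from $v$ to infinity crosses the boundary cycle $\partial D$ in the drawing.

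To evaluate this parity, I observe that in the original drawing $\mathcal{E}$ every edge of $\partial D$ is confined by $x$-boundedness to the vertical strip $x^{-1}(\gamma(\partial D))$. Since $\gamma(v)$ is disjoint from the closed set $\gamma(\partial D)$, a narrow vertical strip around $v$ is free of any edge of $\partial D$, so a vertical ray from $v$ within this strip reaches infinity without crossing $\partial D$; hence the intersection parity of $v$ with $\partial D$ in $\mathcal{E}$ equals $0$. Evenness of $\mathcal{E}$ makes this parity well-defined independent of the choice of arc (any two such arcs differ by a closed curve, whose $\mathbb{Z}_2$-intersection with a $1$-cycle in $\mathbb{R}^2$ vanishes), and the Pelsmajer--Schaefer--\v{S}tefankovi\v{c} redrawing further shows that in any even drawing the $\mathbb{Z}_2$-intersection of a vertex with a cycle depends only on the rotation system; since $\mathcal{E}^*$ has the same rotation system and is itself even, we conclude $i_{\mathcal{E}^*}(v,D)=0$. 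Theorem~\ref{thm:main} now delivers an $x$-bounded embedding isotopic to $\mathcal{E}^*$, proving the main statement and the ``moreover'' clause. The principal technical obstacle is this last invariance, that in an even drawing the $\mathbb{Z}_2$-intersection of a vertex with a cycle is a function of the rotation system alone, which rests on the standard Hanani--Tutte redrawing machinery.
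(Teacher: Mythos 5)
Your overall strategy---apply the rotation-preserving weak Hanani--Tutte theorem and then verify the hypothesis of Theorem~\ref{thm:main} (the paper does this through the equivalent Theorem~\ref{thm:characterization})---matches the paper, but your verification fails in exactly the nontrivial cases. For two $1$-chains $C_1,C_2$, the hypothesis $\gamma(C_1)\cap\gamma(\partial C_2)=\emptyset$ and $\gamma(\partial C_1)\cap\gamma(C_2)=\emptyset$ only prevents an \emph{endpoint} of one path from having its $\gamma$-value inside the value range of the other; it does not force the paths to be disjoint. A shared interior vertex merely contributes to $\gamma(C_1)\cap\gamma(C_2)$, which the hypothesis does not constrain, so your step ``the supports are vertex- and edge-disjoint, hence $i_{\mathcal{E}^*}(C_1,C_2)=0$'' is a non sequitur. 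Indeed, the interleaving pairs of an $i$-cap and a $j$-cup in Theorem~\ref{thm:characterization} satisfy exactly this $\gamma$-disjointness and intersect in a path; proving that their algebraic intersection number (read off from the rotation at shared vertices) vanishes is the heart of the theorem, and your proposal never addresses it. The paper rules such infeasible pairs out by showing that, combined with the $x$-bounded even drawing, they would produce two closed curves crossing an odd number of times, contradicting evenness.

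The vertex-versus-disc case has a second genuine gap: the invariance you rely on---that in an even drawing the $\mathbb{Z}_2$-intersection of a vertex with a cycle depends only on the rotation system---is false. A triangle with a pendant vertex attached at a corner admits two planar embeddings with identical rotation systems, one with the pendant vertex inside the triangle and one with it outside; they differ in the choice of outer face, which the rotation system does not determine. Hence the vanishing parity you compute in the original drawing $\mathcal{E}$ (the vertical-ray argument there is fine) does not automatically transfer to the embedding $\mathcal{E}^*$ supplied by the weak Hanani--Tutte theorem. This is precisely the issue the paper resolves by proving Lemma~\ref{obs:outer-face} (an even drawing of a connected graph has an odd number of outer faces), choosing the outer face of the target isotopy class to be an outer face of the even drawing so that its boundary spans all values of $\gamma$, and then excluding trapped vertices by a connectivity argument in that specific isotopy class. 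Your proof needs both repairs: an actual argument for intersecting cap--cup pairs, and a correct mechanism (such as the outer-face lemma) to pin down the isotopy class in which the trapped-vertex condition holds.
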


The previous theorem is a special case of a corollary of a result Skopenkov~\cite[Theorem 1.5]{S03} and an extension of the following result of Pach and T\'{o}th.

\begin{theorem}\label{thm:mono}
Let $G$ denote a graph whose vertices are totally ordered.
Suppose that there exists a drawing $\mathcal{E}$ of $G$, in which $x$-coordinates of vertices respect their order, edges are $x$-monotone and every pair of edges cross an even number of times.
Then there exists an embedding of $G$, in which the vertices are drawn as in $\mathcal{E}$, the edges are $x$-monotone, and the rotation system is the same as in $\mathcal{E}$.
\end{theorem}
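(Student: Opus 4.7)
The plan is to prove Theorem~\ref{thm:mono} by induction on $\cro(\mathcal{E})$, the total number of edge crossings in $\mathcal{E}$, using a local \emph{arc-swap} operation that strictly decreases the crossing number while preserving $x$-monotonicity, vertex positions, the rotation system, and the even-crossing hypothesis.

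If $\cro(\mathcal{E})=0$ we are done. Otherwise, pick a pair of edges $e_1,e_2$ that cross; by evenness they cross at least twice, so I choose two consecutive crossings $p,q$ along $e_1$ with $x(p)<x(q)$. Let $A_i\subseteq e_i$ be the sub-arc between $p$ and $q$, so $C:=A_1\cup A_2$ is a simple closed curve enclosing a ``lens'' $R$ with $x$-range $[x(p),x(q)]$. The arc-swap replaces $A_1$ inside $e_1$ by $A_2$, and symmetrically $A_2$ inside $e_2$ by $A_1$, yielding new curves $e_1', e_2'$. The routine verifications are: (i) $e_i'$ remains $x$-monotone, being a concatenation of $x$-monotone pieces whose $x$-coordinates agree at the join points $p,q$; (ii) the rotation at every vertex is unchanged, because $e_i'$ coincides with $e_i$ in a neighborhood of each of its endpoints; (iii) the crossings at $p$ and $q$ become tangential touchings, removable by a small perturbation, so $\cro$ drops by exactly~$2$.

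The delicate point is to check that the arc-swap preserves the even-crossing property, so that the induction hypothesis continues to apply to the new drawing. For any third edge $e_3$, the parity of crossings with $e_1'$ (resp.\ $e_2'$) differs from that with $e_1$ (resp.\ $e_2$) by $|C\cap e_3|\bmod 2$, which by the Jordan curve theorem equals $1$ exactly when the two endpoints of $e_3$ lie on opposite sides of $C$. Consequently, if the lens $R$ contains no vertex of $G$, then every $e_3$ has both endpoints outside $C$ and every parity is preserved. I would therefore insist on choosing $p,q$ so that $R$ is vertex-free.

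The main obstacle, as I see it, is establishing that such a choice is always available. Let $p_1,\dots,p_{2k}$ enumerate the crossings of $e_1$ with $e_2$ in $x$-order; the $2k-1$ lenses $R_i$ bounded by consecutive pairs $(p_i,p_{i+1})$ are pairwise disjoint, so it suffices to show that some $R_i$ is empty of vertices. If, to the contrary, every $R_i$ contained a vertex, I would select an innermost bad lens and argue that the enclosed vertex $v$ must have an incident edge $e_3$ exiting $R_i$; this forces $e_3$ to cross $e_1$ or $e_2$ an odd number of times inside $R_i$, producing a smaller bad configuration whose analysis recurses, and the recursion must terminate at a vertex-free lens because the number of crossings inside the lens strictly drops at each step. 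Once such $p,q$ are found, a single arc-swap decreases $\cro(\mathcal{E})$ by two while preserving $x$-monotonicity, vertex positions, the rotation system, and the even-crossing property, and induction concludes the proof by reducing to the base case of a crossing-free $x$-monotone embedding.
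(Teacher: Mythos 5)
Your induction-plus-lens-swap plan is locally sound: the swap does preserve $x$-monotonicity, vertex positions and rotations, removes two crossings, and your Jordan-curve parity computation correctly shows that evenness is preserved provided no edge of $G$ has exactly one endpoint strictly inside the lens. The genuine gap is the step where you ``insist on choosing $p,q$ so that $R$ is vertex-free'': such a choice need not exist, so the induction cannot get started. Take two edges $e_1,e_2$ that cross exactly twice and draw a small separate component of $G$ (a single edge or a triangle) entirely inside the unique lens; all pairs of edges cross evenly and all edges are $x$-monotone, yet every lens of the drawing contains a vertex. Your fallback recursion also fails at its first step: a vertex enclosed in a lens need not have an incident edge leaving it (its entire component may lie inside), so no ``smaller bad configuration'' is produced; and even when an edge $e_3$ does exit, its compensating crossings with $e_1$ or $e_2$ typically lie \emph{outside} $R_i$, so the lenses formed by $e_3$ need not be contained in $R_i$ and your termination measure (``crossings inside the lens strictly drop'') is not well founded. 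Note that what the swap actually requires is weaker than vertex-freeness, namely that no edge has exactly one endpoint inside the lens (your whole-component obstruction is then harmless); but proving that some lens with this property always exists in an even $x$-monotone drawing with a crossing is precisely the crux, and your proposal contains no argument for it.

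For context: the paper itself does not prove Theorem~\ref{thm:mono}; it quotes it as a result of Pach and T\'oth, and the related statement it does prove, Theorem~\ref{thm:linearly}, is derived from the characterization in Theorem~\ref{thm:characterization}, whose proof normalizes the instance by contracting edges inside clusters and splitting vertices (following Minc and Skopenkov) rather than performing lens surgery. So your approach is in any case a different route from the paper's, and it would need the missing existence lemma (or a replacement move for ``pierced'' lenses) to become a proof.
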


To support our conjecture we prove the strong variant of Theorem~\ref{thm:linearly} under
the condition that the underlying abstract graph $G$ is a subdivision of a vertex three-connected graph.
In general, we only know that this variant is true for two clusters~\cite{FKMP15}.

\begin{theorem}\label{thm:linearlyStrong}
Let $G$ denote a subdivision of a vertex three-connected graph.
If  $(G,\gamma)$ admits an independently even $x$-bounded drawing  then $(G,\gamma)$
admits an $x$-bounded embedding.\footnote{The argument in the proof of Theorem~\ref{thm:linearlyStrong}
proves, in fact, a strong variant even in the case, when we require the vertices participating in a cut or two-cut to have the maximum degree three. Hence, we obtained a polynomial time algorithm even in the case of sub-cubic cuts and two-cuts.}
\end{theorem}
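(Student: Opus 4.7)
The plan is to reduce Theorem~\ref{thm:linearlyStrong} to Theorem~\ref{thm:linearly} by converting the given independently even $x$-bounded drawing of $(G,\gamma)$ into an $x$-bounded drawing in which \emph{every} pair of edges crosses evenly. This follows the standard blueprint for strong Hanani--Tutte for 3-connected graphs: use Whitney-type rigidity to force the rotation system of the drawing to coincide, up to reflection, with the planar one, and then locally correct the remaining odd crossings, which can only be between adjacent edges.

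First, I would extract from the given independently even $x$-bounded drawing $\mathcal{E}$ the rotation system $\rho$ that it induces at each vertex of $G$. Since $G$ is a subdivision of a vertex 3-connected graph $H$, Whitney's theorem applied to $H$ says that the planar rotation system of $H$, and hence of $G$, is unique up to global reflection on each connected component. I would then use the independently even hypothesis to argue that $\rho$ must coincide with this planar rotation system: for a vertex $v$ of $H$ and an edge $e$ disjoint from $v$, the parities of the crossings of $e$ with the edges incident to $v$ encode how $e$ ``sees'' the rotation at $v$, and 3-connectivity supplies enough independent such constraints over varying choices of $e$ to pin down the cyclic order at $v$.

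Once $\rho$ is identified as the planar rotation system, I would perform local ``finger moves'' near each vertex to flip the parities of crossings between adjacent edges as needed. For a degree-$2$ subdivision vertex between neighbours $u,w$, there is at most one pair of adjacent edges, and the correction is a local shift contained within the vertical strip $\{x : \gamma(u) \le x \le \gamma(w)\}$, which preserves $x$-boundedness. For a vertex of $H$ of degree at least three, the moves are executed in a small topological neighbourhood of the vertex lying in the admissible $x$-range around $\gamma(v)$. Applying Theorem~\ref{thm:linearly} to the resulting $x$-bounded drawing, in which every pair of edges crosses evenly, then yields the desired $x$-bounded embedding.

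The main obstacle will be the tension between $x$-boundedness and the parity-changing moves. Each vertex $v$ is confined to a narrow vertical strip near $x = \gamma(v)$, and all local redrawings must respect this confinement. The footnote to the theorem signals that the argument actually goes through whenever cut and 2-cut vertices have degree at most three, so the difficulty is concentrated at such low-degree vertices, where the geometry of nearby edges may be too rigid to accommodate arbitrary finger moves. Verifying that the required parity corrections can always be carried out within these geometric constraints -- and simultaneously that the combinatorial argument pinning $\rho$ down really uses only crossings between non-adjacent edges -- is the crux of the proof.
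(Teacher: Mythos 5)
Your overall plan (convert the independently even $x$-bounded drawing into a fully even $x$-bounded drawing and then invoke Theorem~\ref{thm:linearly}) matches the paper's reduction, but the mechanism you propose for the conversion has a genuine gap. The key claim that independent evenness plus three-connectivity ``pins down'' the rotation system to the (Whitney-unique) planar one is false: crossings between adjacent edges are completely unconstrained by the hypothesis, so you can take a planar embedding and arbitrarily permute the edge ends inside a small disc around any single vertex -- this introduces crossings only among edges sharing that vertex, hence the drawing stays independently even (and $x$-bounded), while the rotation at that vertex becomes arbitrary. Parities of crossings of a far-away edge $e$ with the edges incident to $v$ give at most parity information and cannot recover a cyclic order, so no amount of $3$-connectivity constraints of the kind you describe determines $\rho$. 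Consequently the first half of your argument does not get off the ground, and the second half (``local finger moves'' to kill all adjacent odd pairs) is precisely the hard part, which you leave as ``the crux'' rather than prove; it is not true that all adjacent odd crossings can be removed by redrawings confined to vertex neighbourhoods, because flipping the parity of one incident pair at $v$ can force an end piece to sweep across other edge ends at $v$ and create new odd pairs.

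The paper resolves exactly this difficulty differently (Lemma~\ref{lemma:removeOdd}): it processes cycles $C$ one at a time, uses local redrawings at the vertices of $C$ to make the edges of $C$ even, and then performs \emph{vertex splits} at vertices of $C$, separating the edges starting inside $C$ from those starting outside; the split vertices inherit the same $\gamma$ value, so $x$-boundedness and independent evenness are preserved. Three-connectivity enters only once, to rule out the obstruction to this scheme: if at a vertex $v$ of $C$ an edge starting inside $C$ crossed an edge starting outside $C$ oddly, one could build (via two internally disjoint paths from the two neighbours back to distinct vertices of $C$) two cycles whose edges cross an odd number of times in total, contradicting the fact that two closed curves in the plane cross evenly. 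After finitely many splits the drawing is even, Theorem~\ref{thm:linearly} gives an $x$-bounded embedding of the split graph, and contracting the split edges yields the embedding of $(G,\gamma)$. So the final reduction step in your proposal is the right one, but the rotation-rigidity route you take to reach it does not work, and the vertex-split plus odd-cycle-crossing argument (or some substitute for it) is the missing idea.
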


The strong variant of Theorem~\ref{thm:linearly} (see Section~\ref{sec:hanani} for the explanation of what is meant by the ``strong and weak variant''), which is conjectured
to hold, would imply the existence
of a polynomial time algorithm for the corresponding variant of the
c-planarity testing~\cite{FKMP15}.
To the best of our knowledge, a polynomial time algorithm
was given only in the case, when the underlying planar graph has a prescribed isotopy class for the resulting embedding~\cite{ADDF13+}.
Our weak variant gives a polynomial time algorithm if $G$ is sub-cubic, and in the same case as~\cite{ADDF13+}.
Nevertheless, we think that the weak variant is interesting in its own right.

We give an algorithm for testing $x$-bounded embeddability for trees.
The algorithm works, in fact, with 0--1 matrices having some elements ambiguous, and
can be thought of as a special case of Simultaneous PQ-ordering considered recently by
Bl{\"{a}}sius and Rutter~\cite{BR14}. However, we not need any  result from~\cite{BR14}
in the case of trees.

\begin{theorem}\label{thm:AlgTreeXBounded}
We can test in cubic time if  $(G,\gamma)$ admits an $x$-bounded embedding
when the underlying abstract graph $G$ is a tree.
\end{theorem}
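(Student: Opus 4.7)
The plan is to use the characterization of Theorem~\ref{thm:main} to reduce the existence of an $x$-bounded embedding of $(T,\gamma)$ to the problem of finding a mutually consistent family of linear orderings, one on each ``strip boundary'', and then to test feasibility of such a family by a left-to-right sweep that maintains a PQ-tree.

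For each integer $\ell$, let $E_\ell$ be the set of edges of $T$ whose endpoints lie on strips on opposite sides of the line $x=\ell+\tfrac{1}{2}$; in any $x$-bounded embedding the edges of $E_\ell$ acquire a linear order by their $y$-coordinate along this vertical line. Using Theorem~\ref{thm:main}, I would first establish that an $x$-bounded embedding exists if and only if one can assign a linear order to each $E_\ell$ such that (i) for every vertex $v$ with $\gamma(v)=\ell$ the leftward edges at $v$ are consecutive in the order on $E_{\ell-1}$ and the rightward edges at $v$ are consecutive in the order on $E_\ell$; (ii) edges that cross both boundaries of strip $\ell$ appear in the same relative order on both sides; (iii) the horizontal edges of strip $\ell$ admit a non-crossing drawing compatible with (i) and (ii). The ``only if'' direction is immediate; for the ``if'' direction I would realize any such family by an explicit $x$-monotone drawing, verify that it has no algebraic intersection between any admissible pair of chains, and invoke Theorem~\ref{thm:main} to conclude.

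Given the reformulation, I would implement the feasibility test by processing strips in order of increasing $\gamma$, maintaining a PQ-tree $Q$ whose admissible orderings on $E_\ell$ are exactly those extendable to a valid family on strips with $\gamma\le\ell$. To advance from $E_{\ell-1}$ to $E_\ell$, for each vertex $v$ with $\gamma(v)=\ell$ I would apply a consecutive-ones reduction forcing the leftward edges at $v$ to form a contiguous block in $Q$, declaring infeasibility if this reduction fails, and then substitute that block by a fresh PQ-subtree that encodes all admissible local orderings of the rightward edges at $v$ together with the horizontal subforest attached to $v$ inside strip $\ell$. Because $T$ is a tree, the horizontal subforest in a strip is itself a forest, so these local PQ-subtrees are nested P-nodes representing the unconstrained rotations at each horizontal-forest vertex, and the global consistency is carried by the single PQ-tree $Q$. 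The running time is dominated by $O(n)$ such vertex updates, each involving a consecutive-ones reduction and a substitution on a PQ-tree of size $O(n)$; with straightforward bookkeeping this amounts to $O(n^3)$.

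The main obstacle I expect is neither the PQ-tree machinery nor the sweep itself, both of which are standard, but rather the handling of horizontal edges inside a single strip. Unlike leftward and rightward edges, a horizontal edge $uv$ with $\gamma(u)=\gamma(v)=\ell$ never appears in any boundary $E_{\ell'}$, yet it couples the relative position of $u$ and $v$ within the strip, hence also the positions of the leftward-blocks of $u$ and $v$ in the order on $E_{\ell-1}$ and of their rightward-blocks on $E_\ell$. Encoding this coupling faithfully into the substitution step of the algorithm, and in particular ensuring via Theorem~\ref{thm:main} that every leaf-order of the resulting $Q$ extends to a genuine $x$-bounded embedding rather than merely an $x$-bounded drawing with even crossings, is the part requiring the most care.
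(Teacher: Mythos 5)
There is a genuine gap, and it sits exactly at the step you dismiss as ``immediate''. Your condition (i) — that the leftward (resp.\ rightward) edges at every vertex $v$ form a consecutive block in the linear order on the strip boundary — is a \emph{level-planarity} constraint, and it is not a necessary condition for an $x$-bounded embedding. In an $x$-bounded embedding the endpoints of edges live anywhere inside a two-dimensional strip, not on a line: two leftward edges at $v$ together with a segment of the boundary line bound a wedge region inside $v$'s strip, and another boundary-crossing edge may perfectly well cross the boundary between them, with its strip-$\ell$ endpoint (and a whole subtree) hidden inside that wedge. This extra freedom is precisely what separates $x$-bounded/strip planarity from level planarity (the paper stresses that the problem sits strictly between level planarity and c-planarity), so as formulated your sweep tests a strictly stronger property and can reject yes-instances; to rescue it you would have to prove that for trees the two notions coincide or weaken (i) appropriately, and neither is done. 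This is also why the paper does not work with boundary orders at all: it uses the cap/cup characterization (Theorem~\ref{thm:characterization}, Lemma~\ref{lemma:cap-cupG}) to reduce everything to non-alternation constraints on the \emph{rotations at vertices}, i.e.\ on a single cyclic order of the leaves of the tree, encoded as a $0$--$1$ matrix with ambiguous entries and tested by a circular-ones/PC-tree algorithm, with Lemmas~\ref{lemma:star} and~\ref{lemma:treelemma} showing that the trimmed constraint system loses no information.

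The algorithmic half has a second, independent problem: maintaining a \emph{single} PQ-tree $Q$ on $E_\ell$ whose leaf orders are ``exactly those extendable to a valid family on strips $\le\ell$'' is not justified. For a tree the already-swept part is almost always disconnected (every vertex with two or more leftward edges merges previously disjoint components, and subtrees lying entirely to the left float freely), and the set of extendable boundary orders of a disconnected left part — arbitrary interleavings and nestings of the components, subject to where each component may be inserted — is in general not the set of leaf orders of one PQ-tree. This is exactly the pitfall that broke the naive sweep for level planarity and forced the Jünger--Leipert--Mutzel machinery of multiple PQ-trees with delicate merge operations; your update step (reduce the leftward block, substitute a fresh subtree) silently assumes the merges are free. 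Together with the horizontal-edge coupling that you yourself flag as unresolved, the proposal as it stands neither establishes the combinatorial characterization it needs nor a sound data-structure invariant, so it does not yet constitute a proof of the cubic-time result.
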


Using a more general variant of Simultaneous PQ-ordering we prove that $x$-bounded planarity is 
polynomial time solvable also when the abstract graph is a set of internally vertex disjoint paths joining a pair of vertices. We call such a graph a \emph{theta-graph}.
Unlike in the case of trees, in the case of theta-graphs we crucially rely on the main result
of~\cite{BR14}. The following theorem follows immediately from Theorem~\ref{thm:theta_alg}.

\begin{theorem}\label{thm:theta}
We can test in quartic time if  $(G,\gamma)$ admits an $x$-bounded embedding
when the underlying abstract graph $G$ is a theta-graph.
\end{theorem}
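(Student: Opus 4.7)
The plan is to reduce $x$-bounded embeddability of a theta-graph $(G,\gamma)$ to a polynomially-sized instance of Simultaneous PQ-ordering of Bl\"asius and Rutter~\cite{BR14}. The reduction will use Theorem~\ref{thm:main} (via its reformulation Theorem~\ref{thm:characterization}) to translate the geometric condition into a purely combinatorial statement about admissible vertical orderings along the $\gamma$-levels, and the running time will follow from the general complexity bound of~\cite{BR14}.

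First, I would sort the vertices into level sets $L_1,\ldots,L_k$ according to $\gamma$ and consider the vertical strips induced by these levels. In any $x$-bounded embedding, each strip $i$ is cut by a totally ordered sequence of objects consisting of the vertices of $L_i$ together with the edges of $G$ whose projection contains that strip. The data of an $x$-bounded embedding is equivalent to choosing, for every strip $i$, a linear order $\sigma_i$ on these objects, subject to: (a) adjacent strips agree on the edges that persist (monotonicity along edges), (b) at each vertex $v\in L_i$, the edges incident with $v$ that enter the strip to its right and those that enter it to its left form consecutive blocks around $v$, and (c) the combined rotation system is realized in the prescribed isotopy class, which for a theta-graph reduces to a local condition at the two poles $s$ and $t$. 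By Theorem~\ref{thm:main}, once (a)--(c) hold, the resulting embedding is automatically $x$-bounded.

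Second, I would observe that each internally disjoint $s$--$t$ path $P_j$ is a tree, so Theorem~\ref{thm:AlgTreeXBounded} yields, for $P_j$ alone, a PQ-tree $T_i^{(j)}$ describing precisely the orderings of $P_j\cap L_i$ and the $P_j$-edges crossing strip $i$ that extend to an $x$-bounded embedding of $P_j$. The global orderings $\sigma_i$ must restrict to an admissible ordering of $T_i^{(j)}$ for every $j$ simultaneously. Merging the per-path PQ-trees at each strip gives a single PQ-tree $T_i$ whose leaf set is the full collection of objects in strip $i$; the remaining constraints, namely monotonicity of edges between consecutive strips and the cyclic rotations at the two poles, take the form of ``the leaf order of $T_i$ restricted to a common subset must equal the leaf order of $T_{i+1}$ restricted to the same subset,'' plus two rotational constraints at the strips containing $s$ and $t$. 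This is precisely the input format for Simultaneous PQ-ordering as defined in~\cite{BR14}.

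Third, the main result of~\cite{BR14} solves Simultaneous PQ-ordering in time polynomial in the total size of the PQ-trees and consistency relations. For a theta-graph with $n$ vertices, the number of strips is $O(n)$ and each strip contains $O(n)$ objects, so the instance has size $O(n^2)$; a direct substitution into the Bl\"asius--Rutter bound yields the claimed $O(n^4)$ running time stated in Theorem~\ref{thm:theta_alg}, from which Theorem~\ref{thm:theta} follows. The main obstacle I anticipate is step two: namely, checking that the per-path PQ-trees, after being merged at each strip and equipped with inter-strip consistency relations, really do capture \emph{all} and \emph{only} the global orderings that extend to an $x$-bounded embedding of the whole theta-graph. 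Two subtleties must be handled carefully here: the cyclic rotations at the poles $s$ and $t$ must be encoded by fixing a reference path (breaking the cycle so as to obtain linear orders compatible with PQ-trees), and one must verify that the local condition (b) above, which at non-pole vertices is automatic for a path, composes correctly across the three or more paths meeting at $s$ and $t$ so that no spurious orderings are admitted.
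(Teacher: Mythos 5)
There is a genuine gap, on two counts. First, your complexity step rests on a misstatement of Bl\"asius--Rutter: Simultaneous PQ-ordering is NP-complete in general, and \cite{BR14} gives a polynomial algorithm only for \emph{2-fixed} instances. Your instance --- one merged tree per strip, with ``common restriction'' constraints in both directions between consecutive strips plus rotational constraints at the two poles --- is never shown to be 2-fixed, and with that DAG shape it is not clear that it is. The paper's construction is engineered precisely to avoid this: its DAG is a single path of embedding trees $T_0,T_1,\ldots,T_m$ (obtained from one auxiliary tree $G'$ built from two subdivided stars $G_u$, $G_v$ joined by the path $P_\alpha$ of minimal interval) together with one consistency tree $T_C$ attached to $T_0$ by two maps, and the paper explicitly verifies (modulo a footnoted repair of a multi-edge subtlety) that this instance is 2-fixed before invoking \cite[Theorem 3.3, Lemma 3.5]{BR14}. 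Without an analogous argument your quartic bound does not follow.

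Second, the heart of the matter --- which you flag as ``the main obstacle'' and leave open --- is exactly where the proof lives. Theorem~\ref{thm:AlgTreeXBounded} does not output strip-wise PQ-trees describing vertical orderings of $P_j\cap L_i$; it outputs constraints (matrices with ambiguous entries / PC-trees) on the \emph{cyclic order of leaves or edges around branch vertices}, and for a single $s$--$t$ path these constraints are essentially vacuous, since a lone path always admits an $x$-bounded embedding. All obstructions in a theta-graph come from the interaction of two different paths: infeasible interleaving cap/cup pairs and vertices trapped in the cycle formed by two paths (Theorem~\ref{thm:characterization}). So after merging your per-path trees, every nontrivial constraint still has to be argued from scratch, and nothing in the proposal does this. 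The paper instead encodes everything as constraints on the two pole rotations ($T_0$ linking $u$ and $v$ with opposite orientations via $T_C$, a dedicated tree $T_1$ for trapped vertices, and embedding trees from $G'$), and the bulk of Theorem~\ref{thm:theta_alg} is a case analysis showing these constraints capture all infeasible interleaving pairs --- in particular the delicate case where the two paths meet along a path $P_l$ whose interval does not contain $I_\alpha$, which is resolved by a small logical analysis of the orderings at the poles. A correct proof along your strip-wise route would need a replacement for that entire argument, plus a tractability guarantee for the resulting PQ-ordering instance; neither is present.
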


Similarly as for trees we are not aware of any previous algorithm with a polynomial running time in this case.

\section{Preliminaries}

\label{sec:notation}

\subsection{Notation}

\label{sec:notation}

%
%
%
%
%
%
%
%
%
%

\paragraph{Algebraic intersection number.}
Let $M_1$ and $M_2$, respectively, be $n_1$ and $n_2$-dimensional orientable manifold (possibly with boundaries) such that $n_1+n_2=n$.
Assume that $M_1$ and $M_2$ are PL embedded into $\mathbb{R}^n$ such that they are in general position,
i.e., they intersect in a finite set of points. Let us fix an orientation on $M_1$ and $M_2$.
The algebraic intersection number $i_\mathcal{E}(M_1,M_2) = \sum_{p}o(p)$, where we sum over all intersection
points $p$ of $M_1$ and $M_2$ and $o(p)$ is 1 is if the intersection point is positive and -1 if the 
intersection point is negative with respect to the chosen orientations.
If $M_1$ and $M_2$ are not in a general position $i_\mathcal{E}(M_1,M_2)$ denotes
$i_\mathcal{E}(M_1',M_2')$, where $M_1'$ and $M_2'$, respectively, is slightly perturbed $M_1$ and $M_2$.
(A perturbation eliminates ``touchings'' and does not introduce new ``crossings''.)
Note that $i_\mathcal{E}(M_1,M_2)=0$ is not affected by the choice of orientation.

 \paragraph{Graphs and its drawings.}
 Let $G=(V,E)$ denote a connected planar graph possibly with multi-edges but without loops.
A \emph{drawing} of $G$ is a representation of $G$ in the plane where every vertex
 in $V$ is represented by a unique point and every
edge $e=uv$ in $E$ is represented by a Jordan arc joining the two points that represent $u$ and $v$. 
We assume that in a drawing no edge passes through a vertex,
no two edges touch and every pair of edges cross in finitely many points.
An \emph{embedding} of $G$ is an  edge-crossing free drawing.
If it leads to no confusion, we do not distinguish between
a vertex or an edge and its representation in the drawing and we use the words ``vertex'' and ``edge'' in both
 contexts.
 Since in the problem we study connected components of $G$ can be treated separately, we can afford to assume that $G$ is connected throughout the paper.
A  \emph{face} in an embedding is a connected component of the complement of the embedding 
of $G$ (as a topological space) in the plane.
 The \emph{facial walk} of $f$ is the walk in $G$ with a fixed orientation that we obtain by traversing the boundary of $f$ counter-clockwise.
In order to simplify the notation we sometimes denote the facial walk of a face $f$ by $f$. 
The cardinality $|f|$ of $f$ denotes the number of edges (counted with multiplicities) in the facial walk of $f$.
Let $F$ denote a set of faces in an embedding. We let $G[F]$ denote the subgraph of $G$ induced by the edges incident to the faces of $F$.
  A pair of consecutive edges $e$ and $e'$ in a facial walk $f$ create a \emph{wedge} incident to $f$ at their common vertex.
  A vertex or an edge is \emph{incident} to a face $f$, if it appears on its facial walk.
The \emph{rotation} at a vertex is the counter-clockwise cyclic order of the end pieces of its incident edges
in a drawing of $G$.
 The \emph{rotation system} of a graph is the set of rotations at all its vertices.
An embedding of $G$ is up to an isotopy and the choice of an outer (unbounded) face described by the rotations at its vertices. We call such a description of an embedding of $G$ 
a \emph{combinatorial embedding}.
The \emph{interior} and \emph{exterior} of a cycle in an embedded graph is the bounded and unbounded, respectively, connected component
of its complement in the plane. 
Similarly, the \emph{interior} and \emph{exterior} of an inner face in an embedded graph is the bounded and unbounded, respectively, connected component
of the complement of its facial walk in the plane, and vice-versa for the outer face.
We when talking  about interior/exterior or area of a cycle  
in a graph $G$ with a combinatorial embedding and a \emph{designated} outer face  we mean it with respect to an embedding in the isotopy class that $G$ defines.
For $V'\subseteq V$ we denote by $G[V']$ the subgraph of $G$ induced by $V'$.

\paragraph{Simple and semi-simple faces.}
Let $\gamma:V \rightarrow \mathbb{N}$ be the given labeling of the vertices of $G$ by integers. Given a face $f$ in an embedding of $G$,
a vertex $v$ incident to $f$ is a \emph{local minimum} (\emph{maximum}) of $f$ if in the corresponding facial walk $W$ of $f$ the value
of $\gamma(v)$ is not bigger (not smaller) than the value of its successor and predecessor on $W$. A minimal and maximal, respectively, local
minimum and  maximum of $f$ is called  \emph{global minimum} and \emph{maximum} of $f$.
The face $f$ is \emph{simple} with respect to $\gamma$ if $f$ has exactly one local minimum and one local maximum.
The face $f$ is \emph{semi-simple} (with respect to $\gamma$) if $f$ has exactly two local minima and these minima have the same value, and two local maxima and these maxima have the same value.
A path $P$ is \emph{(strictly) monotone with respect to $\gamma$} if the labels of the  vertices on $P$ form a (strictly) monotone sequence if ordered
in the correspondence with their appearance on  $P$.

\paragraph{Clustering.}
Given a pair  $(G,\gamma)$ we naturally associate with it 
a partition of the vertex set into the cluster $V_i$'s such that $v$ belongs to $V_{\gamma(v)}$. We refer to the cluster whose vertices get label $i$ as to the $i^{\mathrm{th}}$ cluster. 
Let $(\overrightarrow{G},\gamma)$ denote the directedgraph obtained from $(G,\gamma)$ by orienting every edge
$uv$ from the vertex with the smaller label to the vertex with the bigger label, and in case of a tie orienting $uv$ arbitrarily.
A \emph{sink} and \emph{source}, respectively, of $\overrightarrow{G}$ is
a vertex with no outgoing and incoming edges.

\paragraph{Flat clustered graph.}
A \emph{flat clustered graph}, shortly  \emph{c-graph}, is a pair $(G,T)$, where $G=(V,E)$ is a graph and $T=\{V_0, \ldots, V_{c-1}\}$, $\biguplus_i V_i=V$, is a partition of the
vertex set into \emph{clusters}. 
A  c-graph $(G,T)$ is \emph{clustered planar} (or briefly \emph{c-planar}) if $G$ has an
 embedding in the plane such that (i)
for every $V_i\in T$ there is a topological disc $D(V_i)$, where $\mathrm{interior}(D(V_i))\cap \mathrm{interior} (D(V_j))=\emptyset$, if $i\not=j$,
 containing all the vertices of $V_i$ in its interior, and (ii)
 every edge of $G$ intersects the boundary of $D(V_i)$ at most once for every $D(V_i)$.
A c-graph  $(G,T)$ with a given combinatorial embedding of $G$ is \emph{c-planar} 
if additionally the embedding is combinatorially described as given.
 A \emph{clustered drawing and embedding} of a flat clustered graph $(G,T)$ is a drawing and embedding, respectively,
 of $G$ satisfying (i) and (ii).
In 1995
 Feng, Cohen and Eades~\cite{Feng95,Feng95+} introduced the notion of clustered planarity for clustered graphs, shortly c-planarity, (using, a more general, hierarchical clustering)
as a natural generalization of graph planarity. (Under a different name
Lengauer~\cite{L89} studied a similar concept in 1989.)

\paragraph{Edge contraction and vertex split.}
A \emph{contraction} of an  edge $e=uv$ in a topological graph is an operation that turns
$e$ into a vertex
by moving $v$ along $e$ towards $u$ while dragging all the other edges incident to $v$ along $e$.
Note that by contracting an edge in an even drawing, we obtain again an even drawing.
By a contraction we can introduce multi-edges or loops at the vertices.

We will also often use the following operation which can be thought of as the inverse operation of the edge contraction
in a topological graph.
A \emph{vertex split} in a drawing of a graph $G$ is the operation that replaces a vertex $v$ by two vertices $v'$ and $v''$
drawn in a small neighborhood of $v$ joined by a short crossing free edge so that the neighbors of $v$ are partitioned into two parts
according to whether they are joined with $v'$ or $v''$ in the resulting drawing, the rotations at $v'$ and $v''$ are inherited from the
rotation at $v$, and the new edges are drawn in the small  neighborhood of the edges they correspond to in $G$.

\paragraph{Even drawings.}
A pair of edges in a graph is \emph{independent} if they do not share a vertex.
An edge in a drawing is \emph{even} if it crosses every other edge an even number of times.
An edge in a drawing is \emph{independently even} if it crosses every other non-adjacent edge an even number of times.
A drawing of a graph is \emph{(independently) even} if all edges are (independently) even. Note that an embedding is an even drawing.

\paragraph{Edge-vertex switch.}
In our arguments we use a continuous deformation in order to transform a given drawing into a drawing with desired properties.
Observe that during such transformation of a drawing of a graph
the parity of crossings between a pair of edges is affected only when an edge $e$ passes over a vertex $v$,
in which case we change the parity of crossings of $e$ with all the edges incident to $v$. Let us call such an event an \emph{edge-vertex switch}.

\subsection{Hanani--Tutte}

\label{sec:hanani}

The Hanani--Tutte theorem~\cite{C34,T70} is a classical result that provides an algebraic characterization of planarity with interesting algorithmic consequences~\cite{FKMP15}. 
The (strong) Hanani--Tutte theorem says that a graph is planar as soon as it can be drawn in the plane so that no pair of edges that do not share a vertex
cross an odd number of times.
Moreover, its variant known as the weak Hanani--Tutte theorem~\cite{CN00,PT00,PSS06} states that if we have a drawing $\mathcal{D}$ of a graph $G$ where every pair of edges cross an even number of times then $G$ has an embedding that preserves the cyclic order of edges at vertices from $\mathcal{D}$.
Note that the weak variant does not directly follow from the strong Hanani--Tutte theorem.
For sub-cubic graphs, the weak variant implies the strong variant.

Other variants of the Hanani--Tutte theorem in the plane were proved for $x$-monotone drawings~\cite{FPSS12,PT04_monotone},
partially embedded planar graphs, simultaneously embedded planar graphs~\cite{S12+}, and two--clustered graphs~\cite{FKMP15}.
As for the closed surfaces of genus higher than zero, the weak variant is known to hold in all closed surfaces~\cite{PSS09}, and the strong variant was proved only
for the projective plane~\cite{PSS09c}.
It is an intriguing open problem to decide if the strong Hanani--Tutte theorem holds
for closed surfaces other than the sphere and projective plane.

There is, however, another tightly related line of research on approximability or realizations of maps pioneered by Sieklucki~\cite{S69}, Minc~\cite{M97} and M.~Skopenkov~\cite{S03} that is completely independent from the aforementioned developments.
\cite[Theorem 1.5]{S03} is a weak variant of the Hanani--Tutte theorem for flat cluster graphs with three clusters or cyclic clustered graphs~\cite[Section 6]{FKMP15}.

To prove a strong variant for a closed surface it is enough to prove it for all the minor minimal graphs (see e.g.~\cite{D10} for the definition of a graph minor) not embeddable in the surface. Moreover, it is known that the list of such graphs is finite for every closed surface, see e.g.~\cite[Section 12]{D10}. Thus, proving or disproving the strong Hanani--Tutte theorem on a closed surface boils down to a search for a counterexample among a finite number of graphs. That sounds quite promising, since checking a particular graph is reducible to a finitely many, and not so many, drawings, see e.g.~\cite{SS13}. However, we do not have a complete list of such graphs for any surface besides the sphere and projective plane.

On the positive side, the list of possible minimal counterexamples for each surface was recently narrowed down to vertex two-connected graphs~\cite{SS13}.
See~\cite{S12} for a recent survey on applications of the Hanani--Tutte theorem and related results.

\subsection{Necessary conditions for $x$-boundedness}

\label{sec:crossing}

We present two necessary conditions for the isotopy class $\mathcal{E}$ of  an embedding of $(G,\gamma)$ to contain an $x$-bounded embedding. In Section~\ref{sec:char} we show that the conditions
are, in fact, also sufficient, which  implies Theorem~\ref{thm:main}.
For the remainder of this section we assume that $G$ is 
given by the isotopy class of its embedding $\mathcal{E}$.

In what follows we give an equivalent definition of the one from Secion~\ref{sec:notation} of $i_\mathcal{E}(P_1,P_2)$, \emph{algebraic intersection number}~\cite{CN00}  of a pair of oriented paths $P_1$ and $P_2$ in an isotopy class of an embedding of a graph. This definition is easier to work with. We orient $P_1$ and $P_2$ arbitrarily.
 Let $P$ denote the subgraph of $G$ which is the union of $P_1$ and $P_2$.
We define $cr_{P_1,P_2}(v)=+1$ ($cr_{P_1,P_2}(v)=-1$) if $v$ is a vertex of degree four in $P$ such that the paths $P_1$ and $P_2$ alternate in the rotation at $v$
and at $v$ the path $P_2$ crosses $P_1$ from left to right (right to left) with respect
to the chosen orientations of $P_1$ and $P_2$.
We define $cr_{P_1,P_2}(v)=+1/2$ ($cr_{P_1,P_2}(v)=-1/2$) if $v$ is a vertex of degree three in $P$ such that at $v$ the path
$P_2$ is oriented towards $P_1$ from left, or from $P_1$ to right (towards $P_1$ from right, or from $P_1$ to left) in the direction of $P_1$.
The algebraic intersection number of $P_1$ and $P_2$ is then the sum of $cr_{P_1,P_2}(v)$ over all vertices of degree three and four in $P$.

We extend the notion of algebraic intersection number to oriented walks as follows.
Let $i_\mathcal{E}(W_1,W_2)=\sum cr_{u_1v_1w_1,u_2v_1w_2}(v_1)$, where the sum runs over all pairs $u_1v_1w_1\subseteq W_1$ and $u_2v_1w_2\subseteq W_2$
of oriented sub-paths of $W_1$ and $W_2$, respectively. (Sub-walks of length two in which $u_1=w_1$ or $u_2=w_2$ does not
 have to be considered in the sum, since their contribution towards the algebraic intersection number is zero anyway.)
Note that $i_\mathcal{E}(W_1,W_2)$ is zero for a pair of closed walks.
Indeed, $i_\mathcal{E}(C_1,C_2)=0$ for any pair of closed continuous curves in the plane which can be proved by observing that the statement
is true for a pair of non-intersecting curves and preserved
under a continuous deformation.
Whenever talking about algebraic intersection number of a pair of walks we tacitly assume that the walks are oriented. The actual orientation is not important to us since
in our arguments only the absolute value of the algebraic intersection number matters.

Let $G'\subseteq G$. Let $\max (G')$ and $\min (G')$, respectively, denote the maximal and minimal value of $\gamma(v)$, $v\in V(G')$.

\begin{figure}[t]
  \centering
\centering
{
\includegraphics[scale=0.7]{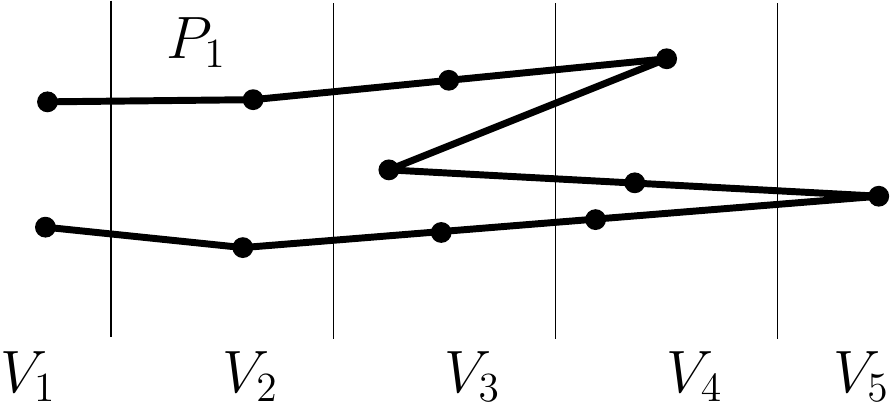}
    	} \hspace{10px}
{ 
\includegraphics[scale=0.7]{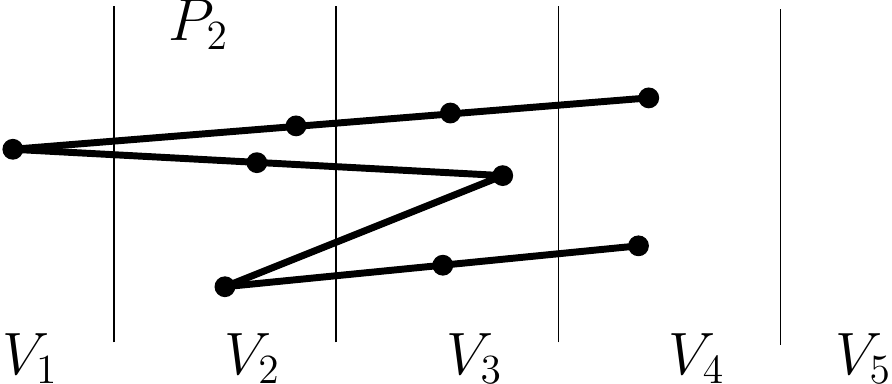}
		}
\caption{A path $P_1$ that is 1-cap (top); and a path $P_2$ that is a 4-cup (bottom).}
\label{fig:cupcup}
\end{figure}

\begin{figure}[t]
  \centering
\centering
{
\includegraphics[scale=0.7]{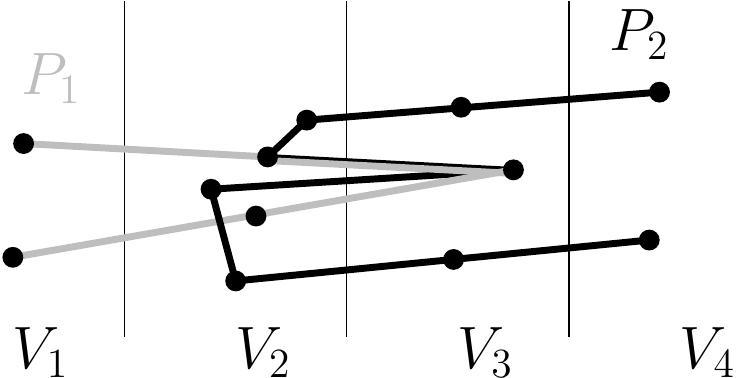}
    	} \hspace{10px}
{
\includegraphics[scale=0.7]{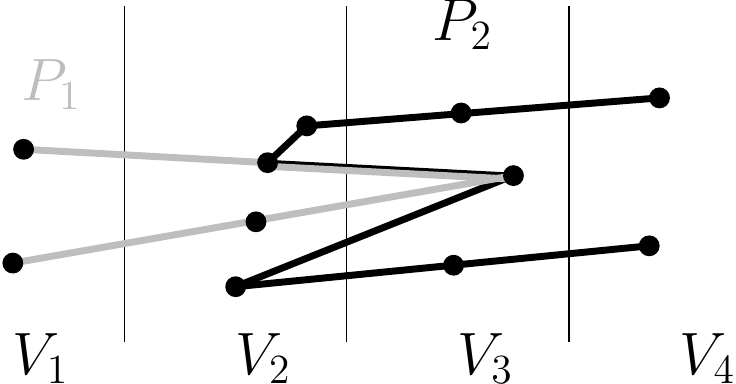}
		}
\caption{An infeasible pair of an 1-cap $P_1$ and a 4-cup  $P_2$ (top); and a feasible pair of an 1-cap $P_1$ and a 4-cup  $P_2$ (bottom).}
\label{fig:cupcup2}
\end{figure}

\paragraph{Definition of an $i$-cap and $i$-cup.}
A path $P$ in $G$ is an \emph{$i$-cap} and \emph{$j$-cup} if for the end vertices $u,v$ of $P$ and all $w\not=u,v$ of $P$
we have $\min (P) = \gamma (u) =\gamma(v)=i\not=\gamma(w)$ and $\max (P) = \gamma (u) =\gamma(v)=j\not=\gamma(w)$, respectively, (see Figure~\ref{fig:cupcup}).
A pair of  an \emph{$i$-cap} $P_1$ and \emph{$j$-cup} $P_2$ is \emph{interleaving} if (i) $\min(P_1)< \min(P_2)\le \max(P_1) < \max(P_2)$; and (ii) $P_1$ and $P_2$ intersect in a path (or a single vertex).
An interleaving pair of an oriented $i$-cap $P_1$ and $j$-cup $P_2$ is \emph{infeasible}, if  $i_\mathcal{E}(P_1,P_2)\not=0$, and \emph{feasible},
otherwise (see Figure~\ref{fig:cupcup2}).
Thus, feasibility does not depend on the orientation.
Note that $i_\mathcal{E}(P_1,P_2)$ can be either $0,1$ or $-1$. Throughout the paper by an infeasible and feasible pair of paths we mean an infeasible and feasible, respectively,  interleaving pair of an {$i$-cap} and {$j$-cup}.


\begin{observation}
\label{obs:crossing}
In $\mathcal{E}$ there does not exist an infeasible interleaving pair $P_1$ and $P_2$ of an $i$-cap and $j$-cup, $i+1<j$.
\end{observation}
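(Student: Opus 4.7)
The plan is to argue by contradiction using the basic fact that two closed curves in $\mathbb{R}^{2}$ have algebraic intersection number zero. Assume that $\mathcal{E}$ contains an $x$-bounded embedding but the interleaving pair $(P_{1},P_{2})$ of an $i$-cap and a $j$-cup is infeasible, so $i_{\mathcal{E}}(P_{1},P_{2})\neq 0$. Fix such an $x$-bounded embedding and let $a_{k}<b_{k}$ denote the left and right $x$-boundaries of the vertical strip $S_{k}$ containing cluster $V_{k}$.

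First I would read off the $x$-geometry of the two paths directly from the $x$-bounded conditions. Every vertex $w$ of $P_{1}$ satisfies $i\le\gamma(w)\le\max(P_{1})$, and condition (ii) forces the $x$-projection of every edge of $P_{1}$ to stay inside the union of strips $S_{i}\cup\cdots\cup S_{\max(P_{1})}$. Hence $P_{1}$ is confined to the closed half-plane $\{x\le b_{\max(P_{1})}\}$; symmetrically $P_{2}\subseteq\{x\ge a_{\min(P_{2})}\}$. The interleaving inequalities $i<\min(P_{2})\le\max(P_{1})<j$ then place the endpoints of $P_{1}$ strictly to the left of $S_{\min(P_{2})}$ and those of $P_{2}$ strictly to the right of $S_{\max(P_{1})}$.

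Next I would close each path to a simple closed curve by appending an arc disjoint from the other path and from the other closing arc. Extend $P_{1}$ to $\widetilde{P}_{1}=P_{1}\cup\alpha_{1}$, where $\alpha_{1}$ joins the two endpoints of $P_{1}$ by a Jordan arc running out to $x=-\infty$ and back, chosen inside $\{x<a_{\min(P_{2})}\}$ and disjoint from the rest of the drawing (possible since the drawing is bounded). Symmetrically, extend $P_{2}$ to $\widetilde{P}_{2}=P_{2}\cup\alpha_{2}$ with $\alpha_{2}\subseteq\{x>b_{\max(P_{1})}\}$. The strip inequality $a_{\min(P_{2})}\le b_{\max(P_{1})}$ forces $\alpha_{1}\cap\alpha_{2}=\emptyset$, while $\alpha_{1}\cap\widetilde{P}_{2}=\emptyset$ and $\alpha_{2}\cap\widetilde{P}_{1}=\emptyset$ by construction, so $\widetilde{P}_{1}\cap\widetilde{P}_{2}=P_{1}\cap P_{2}$.

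I would finish by comparing two notions of intersection number. After a standard perturbation of $\widetilde{P}_{1}$ to general position, every transverse intersection with $\widetilde{P}_{2}$ lies in a small neighborhood of a vertex of $P_{1}\cap P_{2}$; a local check at each such vertex identifies the signed transverse intersections with the combinatorial terms $\mathrm{cr}_{P_{1},P_{2}}(v)$, so the topological algebraic intersection number of $\widetilde{P}_{1}$ and $\widetilde{P}_{2}$ equals $i_{\mathcal{E}}(P_{1},P_{2})$. Since two closed curves in $\mathbb{R}^{2}$ have algebraic intersection number zero, this gives $i_{\mathcal{E}}(P_{1},P_{2})=0$, contradicting infeasibility. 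The hard part is really just this final local bookkeeping at the two degree-three endpoints of a shared subpath, where the half-integer contributions $\pm\tfrac{1}{2}$ from the definition must be matched to the actual perturbed transverse crossings produced by pushing $\widetilde{P}_{1}$ slightly off $\widetilde{P}_{2}$ near those vertices; everything else reduces to the geometric observation that in an $x$-bounded embedding an $i$-cap sits in a left slab and a $j$-cup in a right slab.
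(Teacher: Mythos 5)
Your overall strategy (close the cap to the left and the cup to the right into closed curves, use that two closed curves in the plane have algebraic intersection number zero, and match the perturbed transverse crossings along $P_1\cap P_2$ with the combinatorial terms $cr_{P_1,P_2}(v)$) is exactly the natural argument; the paper states this observation without proof and records precisely the needed fact ($i_\mathcal{E}=0$ for closed curves) just beforehand. However, there is a concrete gap in how you secure the disjointness of the closing arcs. First, condition (ii) of $x$-boundedness constrains an edge only at the abscissae of \emph{vertices}: the projection $x(uv)$ merely may not contain $x(w)$ for vertices $w$ with $\gamma(w)\notin[\gamma(u),\gamma(v)]$. Consequently an edge of $P_1$ with endpoints in cluster $\max(P_1)$ may bulge to the right of $b_{\max(P_1)}$ into the vertex-free vertical gap before the next occupied cluster, and an edge of $P_2$ at level $\min(P_2)$ may bulge to the left of $a_{\min(P_2)}$. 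So the claims ``edges of $P_1$ stay inside $S_i\cup\cdots\cup S_{\max(P_1)}$'', ``$P_1\subseteq\{x\le b_{\max(P_1)}\}$'' and ``$P_2\subseteq\{x\ge a_{\min(P_2)}\}$'' are false in general, and an arc $\alpha_1\subseteq\{x<a_{\min(P_2)}\}$ may very well meet $P_2$. Second, the fallback requirement that $\alpha_1$ be ``disjoint from the rest of the drawing (possible since the drawing is bounded)'' is not justified: $\alpha_1$ must emanate from the two endpoints of $P_1$, which are vertices of the embedding and need not lie on a common face, so an arc avoiding the whole drawing may simply not exist; boundedness is irrelevant here. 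As written, the step ``$\alpha_1\cap\widetilde{P}_2=\emptyset$ and $\alpha_2\cap\widetilde{P}_1=\emptyset$ by construction'' is therefore unsupported.

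The repair is easy and worth stating, because it shows which separations the definition really gives. You only need $\alpha_1$ to avoid $\widetilde{P}_2$ and $\alpha_2$ to avoid $\widetilde{P}_1$, not the whole drawing. Let $b_i$ be the largest abscissa of a vertex of cluster $i$ (nonempty, since the endpoints of $P_1$ lie there) and $a_j$ the smallest abscissa of a vertex of cluster $j$. Every edge $uv$ of $P_2$ has $\gamma(u),\gamma(v)\ge\min(P_2)>i$, so by (i) its endpoints have abscissae greater than $b_i$, and by (ii) its projection cannot contain $b_i$ (the abscissa of a level-$i$ vertex); since the projection is an interval, the whole edge lies in $\{x>b_i\}$, hence $P_2\subseteq\{x>b_i\}$. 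Symmetrically $P_1\subseteq\{x<a_j\}$. Now route $\alpha_1$ inside the closed half-plane $\{x\le b_i\}$, which contains both endpoints of $P_1$, and $\alpha_2$ inside $\{x\ge a_j\}$, which contains both endpoints of $P_2$; since $b_i<a_j$ by (i), the two half-planes are disjoint, so $\alpha_1\cap\widetilde{P}_2=\emptyset$ and $\alpha_2\cap\widetilde{P}_1=\emptyset$ hold automatically (the arcs may cross their own paths, which is harmless). With this substitution your endpoint placement, the cancellation $\widetilde{P}_1\cap\widetilde{P}_2=P_1\cap P_2$, and the final local bookkeeping at the one degree-four vertex or the two degree-three ends of the shared subpath go through, and the contradiction with $i_\mathcal{E}(P_1,P_2)\neq 0$ is obtained as you intended.
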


As a special case of Observation~\ref{obs:crossing} we obtain the following.

\begin{observation}
\label{obs:alternate}
The incoming and outgoing edges do not alternate at any vertex $v$ of $\overrightarrow{G}$  (defined in Section~\ref{sec:notation}) in the rotation given by $\mathcal{E}$, i.e., the
incoming and outgoing edges incident to $v$ form two disjoint intervals in the rotation at $v$.
\end{observation}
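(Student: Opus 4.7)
Plan. The plan is to derive Observation~\ref{obs:alternate} as a direct specialization of Observation~\ref{obs:crossing} by exhibiting, under the alternation hypothesis, an explicit infeasible interleaving pair consisting of an $i$-cap and a $j$-cup with $i+1<j$. Suppose toward a contradiction that at some vertex $v$ with $\gamma(v)=k$ the incoming and outgoing edges of $\overrightarrow{G}$ alternate in the rotation at $v$ given by $\mathcal{E}$. Then one may select four edges $e_1,e_2,e_3,e_4$ incident to $v$, appearing in this cyclic order in the rotation, with $e_1,e_3$ incoming and $e_2,e_4$ outgoing. Writing $u_s$ for the endpoint of $e_s$ distinct from $v$, the orientation of $\overrightarrow{G}$ gives $\gamma(u_1),\gamma(u_3)\leq k\leq\gamma(u_2),\gamma(u_4)$.

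Form the length-two walks $P_1:=u_1vu_3$ and $P_2:=u_2vu_4$. They share only the vertex $v$, and since the four edges alternate in the rotation at $v$, we have $|i_{\mathcal{E}}(P_1,P_2)|=|cr_{P_1,P_2}(v)|=1\neq 0$. In the principal case $\gamma(u_1)=\gamma(u_3)=i$ and $\gamma(u_2)=\gamma(u_4)=j$ with $i<k<j$, the walk $P_1$ is an $i$-cap and $P_2$ is a $j$-cup; the interleaving condition $\min(P_1)<\min(P_2)\leq\max(P_1)<\max(P_2)$ reads $i<k\leq k<j$, and the strict chain $i<k<j$ together with integrality gives $i+1<j$. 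Hence $(P_1,P_2)$ is an infeasible interleaving pair of the required type, contradicting Observation~\ref{obs:crossing}.

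To reduce the general situation to this principal case, whenever the two endpoints of $P_1$ (resp.\ $P_2$) do not already lie at a common level, I would prepend a strictly monotone extension in $\mathcal{E}$ to the shorter leg so as to bring both endpoints of $P_1$ down to $i:=\max(\gamma(u_1),\gamma(u_3))$ and both endpoints of $P_2$ up to $j:=\min(\gamma(u_2),\gamma(u_4))$; being monotone, these extensions contribute $0$ to the algebraic intersection number with the opposite walk, so $|i_{\mathcal{E}}(P_1,P_2)|=1$ is preserved, and the strict inequalities $i<k<j$ still yield $i+1<j$. Intra-cluster edges for which some $\gamma(u_s)=k$ are handled by first orienting ties in $\overrightarrow{G}$ consistently with the assumed $x$-bounded embedding, which once again reduces to the strict case. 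The main technical obstacle I foresee is verifying that the required monotone extensions always exist and can be chosen so as to not create new contributions to $i_{\mathcal{E}}(P_1,P_2)$; I expect this to follow from the face structure of $\mathcal{E}$ under the assumption that $\mathcal{E}$ contains an $x$-bounded embedding.
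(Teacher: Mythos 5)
Your overall route is the same one the paper takes: the paper offers no separate argument for Observation~\ref{obs:alternate} and simply declares it a special case of Observation~\ref{obs:crossing}, and your ``principal case'' is precisely the honest content of that remark. When the two incoming edges lead to neighbours at a common level $i<k=\gamma(v)$ and the two outgoing edges to neighbours at a common level $j>k$, the paths $u_1vu_3$ and $u_2vu_4$ do form an interleaving $i$-cap/$j$-cup pair with $|i_\mathcal{E}|=1$ and $i+1<j$, contradicting Observation~\ref{obs:crossing}; this covers in particular the situation in which the paper actually exploits the observation, namely after subdividing so that every neighbour of $v$ lies at level $\gamma(v)\pm 1$ and intra-cluster edges have been eliminated.

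The reduction of the general case, however, has a genuine gap, and it is exactly the step you flagged as an obstacle: the strictly monotone extensions need not exist. Consider a star with centre $v$ at level $5$ and leaf neighbours at levels $3,7,4,6$ appearing alternately in the rotation. Here no $i$-cap and no $j$-cup exists at all, so Observation~\ref{obs:crossing} is vacuous and no reduction to it can produce a contradiction; yet the alternating rotation is still impossible, so the observation in its stated generality must be proved directly rather than as a corollary of Observation~\ref{obs:crossing}. The direct argument is the intended ``easy necessity'' one: in an $x$-bounded embedding every edge is $x$-monotone and condition (i) gives $x(u)<x(v)$ for a strictly lower neighbour $u$, so all edges to strictly lower-level neighbours reach $v$ through the half-plane $x<x(v)$ and all edges to strictly higher-level neighbours through $x>x(v)$, and the two groups occupy two disjoint arcs of the rotation at $v$. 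Three smaller points: a cap needs both endpoints at the common \emph{minimum}, so you would extend the higher endpoint downward to $\min(\gamma(u_1),\gamma(u_3))$, not bring endpoints ``down to'' the maximum; even when extensions exist, monotonicity alone does not guarantee that the extended paths still intersect in a single path and contribute no new terms to $i_\mathcal{E}$ (one needs the level-disjointness of the extensions from the other path); and ties cannot be handled by ``first orienting ties consistently with the embedding,'' since the orientation of $\overrightarrow{G}$ is fixed before any embedding is chosen — the clean way out is that the observation is only needed once same-level components have been contracted.
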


We say that a vertex $v\in V(G)$ is \emph{trapped} in the interior of a cycle $C$
if in $\mathcal{E}$ the vertex $v$ is in the interior of $C$ and we have $\min(C) > \gamma(v)$ or $\gamma(v)> \max (C) $, where
$\max (C)$ and $\min (C)$, respectively, denotes the maximal and minimal label of a vertex of $C$.
A vertex $v$ is trapped if it is trapped in the interior of a cycle.

\begin{observation}
\label{obs:trap}
In $\mathcal{E}$ there does not exist a trapped vertex.
\end{observation}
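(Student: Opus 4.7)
The plan is to argue by contradiction: assume that $\mathcal{E}$ is realized by an $x$-bounded embedding $\mathcal{D}$ (which, by the convention fixed right after Lemma~\ref{lemma:intro}, may be taken with $x$-monotone edges) and that some vertex $v$ is trapped in the interior of a cycle $C$. Without loss of generality I would assume $\gamma(v) < \min(C)$; the other case, $\gamma(v) > \max(C)$, is entirely symmetric (reverse the role of ``left of'' and ``right of'').

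The first step is to use condition (i) of $x$-boundedness on the pair $(v,u)$ for every vertex $u$ of $C$: since $\gamma(u) \ge \min(C) > \gamma(v)$, one has $x(u) > x(v)$. The second step combines this with the $x$-monotonicity of the edges of $\mathcal{D}$: for every edge $u_1u_2$ of $C$, the $x$-projection of the Jordan arc representing it is the closed interval with endpoints $x(u_1)$ and $x(u_2)$, both strictly greater than $x(v)$. Consequently, the entire topological curve $C$ lies in the open half-plane $H = \{p \in \mathbb{R}^2 : x(p) > x(v)\}$.

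Now let $L$ be the vertical line $\{(x(v),y):y\in\mathbb{R}\}$. Because $L \cap H = \emptyset$ and $C \subset H$, we have $L \cap C = \emptyset$. Since $L$ is a connected unbounded subset of $\mathbb{R}^2 \setminus C$, it must lie entirely in the unique unbounded component of $\mathbb{R}^2 \setminus C$, i.e.\ in the exterior of $C$. In particular, $v \in L$ lies in the exterior of $C$ in $\mathcal{D}$. But the combinatorial inside/outside relation of a vertex and a cycle is an isotopy invariant of the embedding, so $v$ is also in the exterior of $C$ in $\mathcal{E}$, contradicting the assumption that $v$ is trapped in the interior of $C$.

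There is no real obstacle here: the argument is a direct topological consequence of the fact that in an $x$-bounded (hence $x$-monotone) embedding each edge projects onto the interval between the $x$-coordinates of its endpoints, turning the hypothesis $\gamma(v) < \min(C)$ into an actual geometric separation of $v$ from $C$ by the vertical line $L$. The only point worth flagging is that one must use the $x$-monotone version of the definition rather than condition (ii) alone; condition (ii) constrains vertices whose $x$-coordinates fall under the projection of an edge, whereas here we need the projection of each edge of $C$ to be contained in a specific half-line, which comes from $x$-monotonicity.
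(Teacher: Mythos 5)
Your proof is correct and coincides with the argument the paper leaves implicit: the observation is stated without proof as one of the two ``easy'' necessary conditions, and the intended justification is exactly your point that in an $x$-bounded ($x$-monotone) embedding realizing $\mathcal{E}$ a vertex with $\gamma(v)<\min(C)$ lies strictly to the left of every point of $C$, hence in the exterior of $C$, which is an isotopy invariant. Only your closing caveat is slightly off: condition (ii) alone would also do, since applying it with $w=v$ shows $x(v)\notin x(e)$ for every edge $e$ of $C$, so the vertical line through $v$ misses $C$ even without assuming $x$-monotone edges.
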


\subsection{Proof of Lemma~\ref{lemma:intro}}

\begin{proof}
W.l.o.g. we assume that $G$ is connected.
By~\cite[Lemma 2]{F14} we deform the given $x$-bounded embedding into an $x$-bounded drawing
in which every pair of edges cross an even number of times.
In the obtained drawing, we contract every connected component of $G$ induced by vertices with the same $\gamma$ value to a point thereby possibly obtaining loops at vertices. Let $(G',\gamma')$ denote the resulting pair.
Note that in the drawing of $G'$ every pair of edges still cross  an even number of times.
Hence, in $G'$ the loops can be redrawn in the close vicinity of their vertices
thereby making them crossing free without changing the rotation system.
See the proof of Theorem 1 in~\cite{FKMP15} for a more formal treatment of the previous argument.
 By using the corresponding variant of the weak Hanani--Tutte theorem~\cite[Theorem 1]{F14} 
we obtain an $x$-bounded embedding of $(G',\gamma')$ in which $x(e)$ of every non-loop edge  $e\in E$ is injective without changing the rotation at vertices.
Indeed, the {$x$-monotonicity} follows directly from the proof.
The contracted components $C$ can be recovered as follows.
We embed $C$ represented in $G'$ by a vertex $v$ in a close vicinity
of $v$ by Tutte's barycenter algorithm~\cite{tutte1963draw}.
To this end we first sub-divide edges incident to $v$ and un-contract $C$ (which is possible since we did not change
the rotations, and thus, the loops at $v$ are still edge-crossing free).
Let $C'$ denote the union of $C$ with the edges leaving $C$.
Due to sub-divisions of the edges incident to $v$ all the edges leaving $C$ have degree one.
Let $v_1,\ldots, v_k$ denote those degree-one (in $C'$) vertices.
Note that $v_i$'s have degree two in $G'$.
We augment the obtained embedding of $C'$ into an internally triangulated planar graph $C''$ having the outer face 
bounded by the cycle $v_1,\ldots, v_k$.
In the embedding $\mathcal{D}$ obtained by the variant of the weak Hanani--Tutte theorem~\cite[Theorem 1]{F14}
we replace a small disc  neighborhood $D_v$ of $v$ by the straight-line embedding of  $C''$
obtained by an application of Tutte's barycenter algorithm. 
We  assume that $v_i$'s are drawn on the boundary of $D_v$ in $\mathcal{D}$.
In the barycentric embedding of $C''$ the vertices $v_1,\ldots, v_k$ are prescribed
to lie on the boundary of $D_v$ as in $\mathcal{D}$.
By recovering contracted components in $\mathcal{D}$ one by one as above the claim follows.
\end{proof}

\section{Characterization of isotopy  classes containing $x$-bounded embeddings}
\label{sec:char}

In this section we prove our characterization of  isotopy
classes $\mathcal{E}$ of $G$ containing an $x$-bounded embedding w.r.t. $(G,\gamma)$
by reducing a general instance of to a normalized one.

\begin{theorem}
\label{thm:characterization}
 The isotopy class $\mathcal{E}$ of $G$ contains an $x$-bounded embedding  w.r.t. $(G,\gamma)$ if and only if $\mathcal{E}$ does not contain an infeasible interleaving pair of paths, or a trapped vertex.
\end{theorem}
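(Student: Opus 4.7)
The only-if direction is immediate from Observations~\ref{obs:crossing} and~\ref{obs:trap}, so I focus on the if direction. The plan is to prove it by induction on the size $|V(G)|+|E(G)|$, after first reducing to a normal form, in the spirit of Minc~\cite{M97} and Skopenkov~\cite{S03}.

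First I would normalize the instance. Using the contraction procedure from the proof of Lemma~\ref{lemma:intro}, I contract every connected component of $G[\gamma^{-1}(i)]$, for every $i$ in the image of $\gamma$, into a single vertex and draw the resulting loops crossing-free inside a small disc neighborhood. This preserves the isotopy class and neither creates a trapped vertex nor an infeasible interleaving pair, since any hypothetical witness in the contracted drawing lifts to a witness in the original by un-contracting the degenerate sub-paths: caps and cups have endpoints in the unique extremal cluster of their own path, so contracting monochromatic components cannot introduce new caps or cups. After this step $\overrightarrow{G}$ is a proper DAG and Observation~\ref{obs:alternate} holds at every vertex.

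Next, for the inductive step I would peel off a vertex $v$ achieving the global minimum of $\gamma$ on its connected component and incident to the outer face of $\mathcal{E}$. Such a vertex exists because, by Observation~\ref{obs:trap}, no global minimum is trapped inside any cycle of $\mathcal{E}$. By Observation~\ref{obs:alternate} the outgoing edges at $v$ form a single interval in the rotation at $v$, so locally $v$ looks like a ``fan'' of neighbors lying in strictly higher clusters. Deleting $v$ yields an isotopy class $\mathcal{E}' = \mathcal{E} - v$ on the smaller graph $G - v$. I would then verify that $\mathcal{E}'$ inherits both hypotheses: a newly trapped vertex would already be trapped in $\mathcal{E}$, since removing $v$ can only enlarge faces; and any infeasible interleaving pair $(P_1',P_2')$ in $\mathcal{E}'$ either coincides with an infeasible pair in $\mathcal{E}$ or can be extended through $v$ to one, using the additivity of the algebraic intersection number $i_\mathcal{E}$ over the common sub-walks. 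The induction hypothesis then supplies an $x$-bounded embedding of $G - v$ realizing $\mathcal{E}'$, into which I reinsert $v$ in a thin vertical strip at $x$-coordinate $\gamma(v)$ to the left of all its neighbors, routing its incident edges as $x$-monotone curves inside the preserved fan region.

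The hard part will be the lifting argument: extending a path $P_i'$ in $G-v$ back through $v$ can change which cluster contains its endpoints, so the extended pair may fail to be a cap/cup pair at all, and one must then argue through the more general pure chains of Theorem~\ref{thm:main} (equivalently through the pairs of walks defined at the end of Section~\ref{sec:crossing}). The delicate case is when $v$ is simultaneously close to both $P_1'$ and $P_2'$ in a way that attaching $v$ creates a high-degree vertex in $P_1 \cup P_2$; here I would resort to a refined complexity measure, for example the lexicographic triple $(|V|,|E|,\text{number of inner faces of }\mathcal{E})$, so that the fan at $v$ can be partially collapsed within the inductive call without losing any obstruction. Once the lifting is established, the base case (a single vertex, or a graph with $\gamma$ constant on each component) is trivial, and the final gluing along the reinserted fan produces the desired $x$-bounded embedding of $\mathcal{E}$.
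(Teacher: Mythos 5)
The only-if direction and your inheritance claims are fine --- in fact easier than you make them: paths and cycles of $G-v$ are paths and cycles of $G$ with unchanged rotations, so no ``lifting of obstructions'' from $\mathcal{E}'$ back to $\mathcal{E}$ is needed at all. Your route is genuinely different from the paper's: the paper never deletes a vertex. It subdivides so that every edge joins consecutive clusters, inducts on the number of clusters $c$ and on the total size $\lambda$ of monochromatic components, removes monochromatic edges by contraction, splits every vertex that has neighbours both above and below, merges each pair of consecutive cluster layers $G_i$ into a single cluster (reducing $c$ by one), and reassembles the final embedding using the fact that a two-cluster c-graph with both sides independent is c-planar in any prescribed isotopy class~\cite{FKMP15}. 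That global ``layer-merging'' induction is precisely designed to avoid the local insertion argument your scheme requires.

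And that insertion is where your proposal has a genuine gap. The induction hypothesis gives you \emph{some} $x$-bounded embedding of $(G-v,\gamma)$ in the class $\mathcal{E}-v$, but every hypothesis of the theorem that involves $v$ --- infeasible pairs passing through $v$, trapping of or by cycles through $v$, and the non-alternation of the rotation at $v$ (Observation~\ref{obs:alternate}) --- is discarded when you pass to $G-v$, yet these are exactly what must guarantee that the face of the embedding of $G-v$ into which $v$ has to go reaches down to the strip of cluster $\gamma(v)$, and that $x$-monotone, pairwise noncrossing edges from $v$ to all its neighbours (which may lie in many different clusters, separated along the face boundary by other parts of $G$) can be drawn there while restoring the isotopy class $\mathcal{E}$. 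You assert this in one sentence (``reinsert $v$ in a thin vertical strip \dots inside the preserved fan region''), and the paragraph you devote to ``the hard part'' addresses the wrong direction (extending obstructions from $G-v$ through $v$, which is vacuous since obstructions in $G-v$ are already obstructions in $G$) rather than proving extendability of the embedding; proving that reinsertion lemma is essentially as hard as the theorem itself. A smaller repair is also needed at the start: an untrapped global minimum need not lie on the outer face when the minimum label is attained on an enclosing cycle; the correct argument is that if no minimum-label vertex were incident to the outer face, some minimum-label vertex would lie inside a cycle all of whose vertices lie on the outer face and hence carry strictly larger labels, contradicting Observation~\ref{obs:trap}. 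As written, the proposal is an outline with its central step missing, not a proof.
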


Before we turn to the proof of Theorem~\ref{thm:characterization} we discuss its
relation to  Theorem~\ref{thm:main}.
The condition that $\mathcal{E}$ does not contain
a trapped vertex is an equivalent reformulation
of the condition that $i_\mathcal{E}(D,v)=0$,
where $D$ is a union of faces with a disc as a support, if $\gamma(\partial (D)=C) \cap \gamma(v)=
\emptyset$.
Regarding the condition for  pairs of paths, Theorem~\ref{thm:characterization} seems to be stronger than Theorem~\ref{thm:main}
due to a more restricted condition on pairs of paths we consider.
However, the strengthening is not significant, since it can be easily shown that forbidding an infeasible interleaving pair of paths
and trapped vertices renders the hypothesis of the ``if'' part of Theorem~\ref{thm:main} satisfied, and thus,
we get its equivalence with Theorem~\ref{thm:characterization}.

\begin{wrapfigure}{r}{.3\textwidth}
\centering
\includegraphics[scale=0.7]{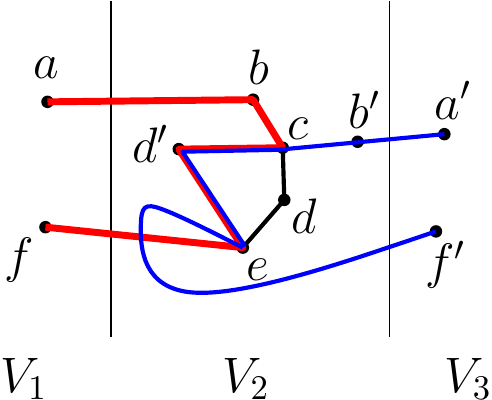}
\caption{Replacing the portion of $P_1'=abcdef$ with the portion of $P_2'=a'b'cd'ef'$ on the cycle $C=cded'$.}
\label{fig:ffEx}
\end{wrapfigure}

Indeed, if a pair of intersecting paths $P_1$ and $P_2$ satisfies $\gamma(P_1) \cap \gamma(\partial P_2) = \emptyset$ and $\gamma(\partial P_1) \cap \gamma(P_2) = \emptyset$,
there exist sub-paths $P_1'$ of $P_1$ and $P_2'$ of $P_2$ such that $i_\mathcal{E}(P_1,P_2)=i_\mathcal{E}(P_1',P_2')$ that either form an interleaving pair
or do not form an interleaving pair only because they do not intersect in a path.
In the latter, no end vertex of $P_1$ or $P_2$ is contained in the interior
of a cycle in $P_1' \cup P_2'$ due to the non-existence of trapped vertices.
Let $W_2$ be a walk obtained from $P_2'$ by replacing its portion on a cycle $C$
contained in $P_1' \cup P_2'$, such  that $P_1' \cap C$ is a path, with the portion of $P_1'$ for every such cycle (see Figure~\ref{fig:ffEx}). Let $P_2''$ denote the path in $W_2$ connecting
its end vertices.
We have $i_\mathcal{E}(P_1,P_2)=i_\mathcal{E}(P_1',P_2'')$, and $P_1'$ and $P_2''$ form an interleaving pair.
Hence, we just proved the following.

\begin{lemma}
\label{lemma:conjChar}
Given that $(G,\gamma)$ is free of trapped vertices,
if a pair of intersecting paths $P_1$ and $P_2$ in $\mathcal{E}$ satisfies $\gamma(P_1) \cap \gamma(\partial P_2) = \emptyset$ and $\gamma(\partial P_1) \cap \gamma(P_2) = \emptyset$\footnote{In the case of paths the boundary operator $\partial$ returns the end vertices.},
there exist sub-paths $P_1'$ of $P_1$ and $P_2'$ of $P_2$ such that $i_\mathcal{E}(P_1,P_2)=i_\mathcal{E}(P_1',P_2'')$, where $P_2''\subset P_1' \cup P_2'$ is constructed as above, forming an interleaving pair.
\end{lemma}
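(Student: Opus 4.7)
The plan is to execute the three-step program sketched in the paragraph immediately before the lemma statement. First I would cut $P_1$ and $P_2$ down to sub-paths $P_1'\subseteq P_1$ and $P_2'\subseteq P_2$ that are an $i$-cap and a $j$-cup respectively, with no intersection vertex lost and $i_\mathcal{E}(P_1,P_2)=i_\mathcal{E}(P_1',P_2')$. Next I would check that the label-ordering part of the interleaving condition is automatic from the hypotheses. Finally, if $P_1'\cap P_2'$ is already a single path (or a single vertex) the conclusion holds with $P_2''=P_2'$; otherwise I reroute $P_2'$ along each shared cycle of $P_1'\cup P_2'$ to produce a path $P_2''$ that meets $P_1'$ in a single path and still satisfies $i_\mathcal{E}(P_1',P_2'')=i_\mathcal{E}(P_1',P_2')$.

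For the first step I would exploit the label gaps: $\gamma(\partial P_1)\cap\gamma(P_2)=\emptyset$ forces the endpoints of $P_1$ to carry labels that $P_2$ never visits, and symmetrically for $P_2$, so no crossing vertex of $P_1\cap P_2$ lies at an endpoint of either path. This lets me slide the endpoints of $P_1$ inward to two vertices of label $i=\min(P_1)$ that bound a sub-path whose interior carries only labels strictly larger than $i$, and analogously for $P_2$ with $j=\max(P_2)$, without disturbing any crossing contribution. The trimmed $P_1'$ is then an $i$-cap, $P_2'$ a $j$-cup, and the algebraic intersection number is preserved. The second step is the direct translation of the hypotheses into $\min(P_1')<\min(P_2')\le\max(P_1')<\max(P_2')$, which is condition (i) of interleaving.

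The third step handles the failure of condition (ii). If $P_1'\cap P_2'$ is connected there is nothing to do. Otherwise $P_1'\cup P_2'$ contains a cycle $C$ whose intersection with $P_1'$ is an arc. The endpoints of $P_2'$ have degree one in $P_1'\cup P_2'$ (they are the only vertices of label $j$ in the union), so they do not lie on $C$; the no-trapped-vertex hypothesis then prevents them from sitting in the interior of $C$ because $\max(C)<j$. Hence they lie in the unbounded face of $C$, and I may redraw the $P_2'$-arc on $C$ along the $P_1'$-arc on $C$, as in Figure~\ref{fig:ffEx}. Iterating over all such cycles yields the walk $W_2$ and its spanning path $P_2''$. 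Each swap alters $P_2'$ by the closed curve $C\subseteq P_1'\cup P_2'$; since $C$ shares an arc with $P_1'$, the would-be crossings cancel along that shared arc and there are no crossings away from it, giving $i_\mathcal{E}(P_1',C)=0$ and hence $i_\mathcal{E}(P_1',P_2'')=i_\mathcal{E}(P_1',P_2')$.

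I expect the main obstacle to be the careful bookkeeping in the third step: showing that each individual swap really preserves $i_\mathcal{E}(P_1',\cdot)$ requires tracking how the half-integer contributions to the algebraic intersection at the degree-three branch vertices of $P_1'\cup P_2'$ migrate to the corresponding branch vertices of $P_1'\cup P_2''$. The order of the swaps is also delicate, since after one swap two previously disjoint components of $P_1'\cap P_2'$ may merge into one; a monotone progress measure such as the number of connected components of $P_1'\cap P_2'$ is needed to certify termination.
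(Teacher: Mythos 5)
Your proposal is correct and follows essentially the same route as the paper's own (quite terse) argument, namely the paragraph preceding the lemma: trim $P_1$ and $P_2$ to a cap/cup pair preserving $i_\mathcal{E}$, invoke the absence of trapped vertices to keep the cup's endpoints out of the cycles of $P_1'\cup P_2'$, and reroute $P_2'$ along each such cycle as in Figure~\ref{fig:ffEx} to obtain $P_2''$ without changing the algebraic intersection number. You simply spell out in more detail the two points the paper leaves implicit (why the label hypotheses yield condition (i) of interleaving, and why each cycle swap preserves $i_\mathcal{E}$), so no further comparison is needed.
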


\begin{proof}[proof of Theorem~\ref{thm:characterization}]
The proof is inspired by the work of Minc~\cite{M97} and M.~Skopenkov~\cite{S03}.
By sub-dividing edges of $G$ we (tacitly) assume  $|\gamma(u)-\gamma(v)|\leq 1$, for every edge $uv\in E(G)$.
We proceed by the induction on the number of clusters $c$ and  \\
$\lambda:=\lambda(G,\gamma)=\sum_{V_i}\sum_{C_j}(|V(C_j)|-1)$, where the inner sum is over the connected components $C_j$ induced by $V_i$, in this order.

Suppose that $\lambda>0$.
It follows that we have an edge $e$ in $E(G)$ between two vertices $u,v$ with the same $\gamma$ value.
We contract $e$ into a vertex $w$ in an embedding of $G$ from the given isotopy class thereby  decreasing $\lambda$. We put $\gamma'(w):=\gamma(u)$ and $\gamma'(z):=\gamma(z)$ for every other vertex of $G$. Let $(G',\gamma)$ denote the obtained pair. The resulting drawing is still an embedding but we could introduced a loop at $w$ by the contraction.
However, we did not introduce a trapped vertex or an infeasible interleaving pair of paths. In particular, if there exists a loop incident to $w$ it contains only vertices with the same $\gamma$ value as $w$.
We delete such loops together with its interior.
We apply the induction hypothesis on the obtained pair $(G'',\gamma'')$ with the isotopy class of its obtained embedding thereby obtaining an $x$-bounded embedding of $(G'',\gamma'')$.
In the $x$-bounded embedding of $G'$ we re-introduce deleted loops  with their interiors at the same position in the rotation at $w$.
Then by splitting $w$ into $e$ we obtain a desired $x$-bounded embedding of $(G,\gamma)$.


Hence, suppose that $\lambda=0$. Let $v_i\in V_i$, $i=0,\ldots, c-1$, such that $v_i$ is joined by an edge with a vertex $u_{i+1}\in V_{i+1}$  and a vertex $w_{i-1}\in V_{i-1}$.   We apply a vertex split to every such $v_i$ thereby obtaining a pair of new vertices $v_i'$ and $v_i''$ joined by an edge such that $\gamma(v_i')=\gamma(v_i''):=\gamma(v_i)=i$.
The vertex $v_i'$ is joined by an edge with neighbors of $v_i$ in $V_{i-1}$ and $v_i''$ is joined by an edge with neighbors of $v_i$ in $V_{i+1}$.
Let $(G',\gamma')$ denote the obtained pair.
We assume that the rotation system of $G'$ is such that by contracting the edges introduced by the splits we obtain an embedding in the given isotopy class of $G'$.
Since we have no infeasible interleaving pair in $(G,\gamma)$ this is possible.
Let $G_i$, for $i=1,\ldots, c-1$,  denote the sub-graph of $G'$ induced by the edges in $G'$ between $V_i'$ and $V_{i-1}'$, i.e,
the sets of vertices with $\gamma'$ value $i$ and $i-1$.
Note that $G_i$ is an induced sub-graph of $G'$.
Let $(G',\gamma'')$ be such that the image of $\gamma''$ has $c-1$ different values and $\gamma''(V(G_1))=1,\ldots, \gamma''(V(G_{c-1}))=c-1$.
We show that $(G',\gamma'')$  contains neither an interleaving pair of paths nor a trapped vertex.

For the sake of contradiction, suppose that $(G',\gamma'')$ contains an infeasible interleaving pair of paths, an $i$-cap $P_1'$ and $j$-cup $P_2'$. Note that we can assume that $P_1'$ ends in a vertex from $V(G_i)\cap V_{i-1}'$ and $P_2'$ ends in a vertex from $V(G_j)\cap V_{j}'$. Then we see that $P_1'$ and $P_2'$ yield an infeasible interleaving pair in $(G,\gamma)$ (contradiction). 
 Similarly, we can argue about trapped vertices. Hence, by the induction hypothesis $(G',\gamma'')$ admits an $x$-bounded embedding.
Let $\mathcal{E}$ denote the corresponding $x$-bounded embedding.

Note that $G_i$, for $i=1,\ldots, c-1$, is a bipartite graph with partitions 
$V_{i-1}'\cap V(G_i)$ and $V_i' \cap V(G_{i})$, both inducing an independent set.
We showed that every c-graph with two-clusters, both inducing an independent set, is c-planar~\cite{FKMP15}. Moreover, an arbitrary isotopy class can be chosen for the corresponding embedding.
Note that the restriction of the $x$-bounded embedding of $(G',\gamma'')$ to $G_i$ has vertices joining $G_i$ with $G'\setminus G_i$ on the outer face $f_i$ of $G'$, and in the facial walk of $f_i$ wedges containing the edges between $V(G_i)$ and $G_{i+1}$ do not alternate with wedges  containing the edges between $V(G_i)$ and $G_{i-2}$. For otherwise we would obtain an infeasible interleaving pair in $(G',\gamma'')$.
Hence,   such clustered embeddings of $G_i$, for $i=1,\ldots, c-1$, can be put next to each other and connected by edges so as to obtain an embedding of $G'$ in the given isotopy class.  It follows that if we put $\gamma'''(V_{i-1}\cap V(G_i)):=2i-2$ and 
 $\gamma'''(V_{i}\cap V(G_i)):=2i-1$,  the obtained embedding of $G'$ can be deformed to obtain an $x$-bounded  embedding  of $(G',\gamma''')$.
The corresponding $x$-bounded embedding of $(G',\gamma''')$ is turned into an $x$-bounded embedding of  $(G,\gamma)$  by edge contractions and re-scaling $\gamma'''$, and this concludes the proof. 
 \end{proof}

\section{Corollaries of the characterization}

\subsection{The variant of the weak Hanani--Tutte theorem for $x$-bounded drawings}
\label{sec:linearly}

In this section we prove the weak Hanani-Tutte theorem for $x$-bounded drawings, Theorem~\ref{thm:linearly}.

Given a drawing of a graph $G$ where every pair of edges cross an even number of times, by the weak Hanani-Tutte theorem~\cite{CN00,PT00,PSS06},
we can obtain an embedding of $G$ with the same rotation system, and hence, the facial structure of an embedding of $G$ is already present in an even drawing. This allows us to speak about faces in an even drawing of $G$. Hence, a face in an
even drawing of $G$ is the walk bounding the corresponding face in the embedding of $G$ with the same rotation system.

 A face $f$ in an even drawing corresponds to a closed (possibly self-crossing)
curve $C_f$ traversing the edges of the defining walk of $f$ in a close vicinity of its edges without crossing an edge that is being traversed, i.e,
$C_f$ never switches to the other side of an edge it follows.
An \emph{inner face} in an even drawing of $G$ is a face for which all the vertices of $G$ except those incident to $f$ are outside of $C_f$.
Similarly, an \emph{outer face} in an even drawing of $G$ is a face such that all the vertices of $G$ except those incident to $f$ are inside of $C_f$.
Note that by the weak Hanani--Tutte theorem every face is either an inner face or an outer face.
Unlike in the case of an embedding (in the plane), in an even drawing the outer face might not be unique. Nevertheless, an outer face always exists in an even drawing of a graph in the plane.

\begin{lemma}
\label{obs:outer-face}
Every even drawing of a connected graph $G$ in the plane has an odd number of outer faces.
\end{lemma}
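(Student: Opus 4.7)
The plan is to prove the lemma by a $\mathbb{Z}/2$ winding-number argument. The starting point is the observation that in an even drawing each edge of $G$ lies on exactly two facial walks counted with multiplicity (either two distinct faces share the edge, or a single face traverses the edge twice), so the formal $\mathbb{Z}/2$-sum $\sum_f C_f$ of the facial closed curves is the zero $1$-chain. Consequently the mod-$2$ winding-number functional
\[ W(p) \;:=\; \sum_f w(C_f, p) \pmod 2 \]
vanishes at every point $p$ in general position with respect to the drawing.

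Next I would evaluate $W$ at a point $v^+$ obtained by displacing an arbitrarily chosen vertex $v$ of $G$ slightly into a fixed wedge $W_0$ at $v$. Writing $\mathcal{W}_f(v)$ for the set of wedges of $f$ incident to $v$, the key claim is that for every face $f$,
\[ w(C_f, v^+) \;\equiv\; \mathbf{1}\!\bigl[W_0 \in \mathcal{W}_f(v)\bigr] + \mathbf{1}\!\bigl[f \text{ is outer}\bigr] \pmod 2. \]
For $f$ not incident to $v$, the first indicator vanishes and the claim reduces directly to the definition of an outer face. Summing the claim over all $f$ and using that exactly one face has $W_0$ as a wedge gives
\[ 0 \;\equiv\; W(v^+) \;\equiv\; 1 + N_{\text{outer}} \pmod 2, \]
whence the number $N_{\text{outer}}$ of outer faces is odd, as required.

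To establish the key claim for $f$ incident to $v$, I would proceed in two steps. First, I would prove the differential identity
\[ w(C_f, v^+_W) + w(C_f, v^+_{W'}) \;\equiv\; \mathbf{1}[W \in \mathcal{W}_f(v)] + \mathbf{1}[W' \in \mathcal{W}_f(v)] \pmod 2 \]
for any two wedges $W, W'$ at $v$ by sliding $v^+$ along a small circle around $v$: crossing an edge $e$ at $v$ flips $w(C_f, \cdot)$ precisely when $f$'s walk uses $e$ an odd number of times, equivalently when exactly one of the two wedges adjacent to $e$ lies in $\mathcal{W}_f(v)$, so the parity of flips along the arc matches the XOR of the endpoint indicators. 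Second, I would pin down the common additive constant as $\mathbf{1}[f \text{ outer}]$ by choosing a wedge $W\notin\mathcal{W}_f(v)$ and joining $v^+$ in $W$ to a vertex $u$ not on $f$'s walk through a path avoiding $f$-edges, so that $w(C_f, v^+) = w(C_f, u) = \mathbf{1}[f \text{ outer}]$ directly from the outer-face definition. The main obstacle is this second step, which requires ensuring both the existence of a wedge $W \notin \mathcal{W}_f(v)$ (which fails, for instance, when $G$ is a tree and the unique face occupies every wedge at $v$) and of an $f$-edge-avoiding path to a suitable $u$; the degenerate situations (graphs on one or two vertices, trees, and vertices all of whose wedges belong to $f$) I would dispatch separately, where in each case the unique face is vacuously outer and the conclusion is immediate.
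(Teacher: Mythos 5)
Your approach is genuinely different from the paper's (which contracts $G$ to a bouquet of loops at one vertex, characterizes outer faces by the parity of the winding of their walks around that vertex, and shows this parity is invariant under pulling loops over the vertex, reducing to a crossing-free drawing), and its first two ingredients are sound: the identity $\sum_f w(C_f,p)\equiv 0\pmod 2$ at a generic point, and the sliding identity relating $w(C_f,\cdot)$ at adjacent wedges to the wedge indicators. The genuine gap is in pinning down the additive constant, i.e.\ the claim $w(C_f,v^+_W)\equiv\mathbf{1}[f\text{ outer}]$ for a wedge $W\notin\mathcal{W}_f(v)$. You anchor it at a vertex $u$ not on the facial walk of $f$, but such a $u$ need not exist in perfectly non-degenerate situations: for any even drawing of a cycle both facial walks visit every vertex, and the same happens for any face with a spanning facial walk. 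None of these fall under your listed degenerate cases (one or two vertices, trees, ``all wedges at $v$ belong to $f$''), and your fallback ``the unique face is vacuously outer'' is false there: a face incident to all vertices need not be unique, and declaring all such faces outer breaks your own final computation already for a crossing-free cycle, where exactly one wedge indicator equals $1$ but you would have $N_{\text{outer}}=2$, so $0\not\equiv 1+2$; concretely, your key claim then fails for the bounded face of the cycle, whose curve has $w(C_f,v^+)=1$ while the right-hand side would be $0$. Deciding how such faces are classified (the paper's written inner/outer definition is vacuous for them) is exactly the content your anchoring step was supposed to supply.

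Even when a vertex $u$ off the walk exists, the anchoring is not justified as stated: a path from $v^+$ to $u$ avoiding the edges of $f$'s walk may not exist, because those edges can separate the plane (in an even but non-embedded drawing they can enclose the wedge containing $v^+$ while all off-walk vertices lie outside). And you cannot repair this by choosing the path cleverly: the parity of crossings of any generic arc with the edges used an odd number of times by the walk is determined by its endpoints alone, so $w(C_f,v^+)\equiv w(C_f,u)$ is precisely the statement to be proved, not something you can arrange. This is where the evenness of the drawing must enter in an essential way (so far you use it only through the inner/outer dichotomy), e.g.\ by comparing winding numbers along a path of the graph itself, using that every edge crosses the odd-multiplicity walk edges an even number of times, or by transferring the classification through the weak Hanani--Tutte embedding, in the spirit of the paper's contraction argument. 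A further small repair: when some face owns every wedge at your chosen $v$ but $G$ is not a tree, your dismissal is wrong; instead choose $v$ to be an endpoint of a non-bridge edge, which guarantees that at least two faces own wedges at $v$.
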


\begin{proof}
Refer to Figure~\ref{fig:obserd5}.
By successively contracting every edge in an even drawing of $G$ we obtain a vertex $v$ with a bouquet of loops, see e.g., the proof of ~\cite[Theorem 1.1]{PSS06}.
Let  $\mathcal{D}$ be the obtained drawing whose underlying abstract graph is not  simple
unless it is edgeless.
Let us treat $\mathcal{D}$ as an even drawing. Thus, we obtain the facial structure in $\mathcal{D}$
by traversing walks consisting of loops at $v$.
Every loop $l$ at $v$ corresponds to a cyclic interval in the rotation at $v$ containing
the end pieces of edges that are in a close neighborhood at $v$ contained inside
$l$. By treating every walk in $\mathcal{D}$ as a walk along cyclic intervals of the loops it traverses we define the \emph{winding number} of a face in $\mathcal{D}$ as
the number of times we walk around $v$ when traversing the intervals of its walk.
The winding number can be positive or negative depending on the sense of the traversal.

\begin{figure}[htp]
\centering
\includegraphics[scale=0.7]{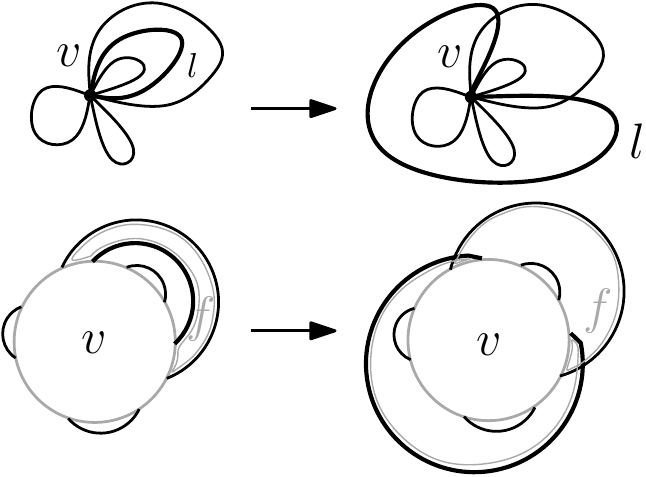}
\caption{ A bouquet of loops at $v$ and its modification after we pull a loop  $l$ over $v$ (top). The cyclic intervals
corresponding to the loops in the rotation at $v$ before and after we pull $l$ over $v$ (bottom).}
\label{fig:obserd5}
\end{figure}

Note that the outer faces in $\mathcal{D}$ are those faces $f$ whose corresponding walks wind around $v$ an odd number of times. This follows because whenever we visit $v$ during a walk of $f$ winding an odd number of times around $v$ the corresponding position in the rotation at $v$ is contained inside of an even number of loops of the walk, and hence outside of $C_f$.

By pulling a loop $l$ over $v$ we flip the cyclic interval in its rotation
that corresponds to the inside of $l$. It follows that we change the winding number of both facial walks that $l$ participates in by one.
Hence, we do not change the parity of the total number of outer faces in $\mathcal{D}$.
Since a crossing free drawing of $G$ has an odd number of outer faces the lemma follows.
\end{proof}

Given an even $x$-bounded drawing of  $(G,\gamma)$ we can associate it with the isotopy class $\mathcal{E}$ of a corresponding embedding of $G$. Note that by the connectivity of $G$ the interval spanned by the $\gamma$ values of the vertices
incident to the outer face $f$ in $\mathcal{E}$,
which can be chosen by Lemma~\ref{obs:outer-face}, span all the values $\gamma$.
By Theorem~\ref{thm:characterization} it is enough to prove that $\mathcal{E}$ does not contain an infeasible pair of paths
or a vertex trapped in the interior of a cycle. However, due to evenness of the given drawing of $G$ both of these
forbidden substructures would introduce a pair of cycles crossing an odd number of times (contradiction).
In order to rule out the existence of a trapped vertex we use the fact that
the boundary of the outer face $f$ spans all the values of $\gamma$. If a vertex $v$ is trapped in the interior of  a cycle $C$ then by the connectedness of $G$
we can join $C$ with $v$ by a path $P$ of $G$. By the evenness of the drawing it follows that the end piece of $P$ at $C$
in our drawing start outside of $C$. On the other hand, a path connecting $C$ with any vertex on the outer face $f$ must
also start at $C$ outside of $C$, since the boundary of $f$ spans all the values of $\gamma$.
 Thus, $v$ cannot be trapped, since the rotation system from the even drawing is preserved
in the embedding.

\subsection{Strip clustered graphs and c-planarity}
\label{sec:strip}

A \emph{clustered graph}\footnote{This type of clustered graphs is usually called flat clustered graph in the graph drawing literature. We choose this simplified notation in order not to overburden the reader with unnecessary notation.}  is an ordered pair $(G,T)$, where $G$ is a graph, and
$T=\{V_i| i=1,\ldots, k\}$ is a partition of the vertex set of $G$ into $k$ parts. We call the sets $V_i$ clusters.
A drawing of a \emph{clustered graph} $(G,T)$ is clustered if vertices in $V_i$ are drawn inside
a topological disc $D_i$ for each $i$ such that $D_i\cap D_j=\emptyset$ and every edge of $G$
intersects the boundary of every disc $D_i$ at most once.
We use the term ``cluster $V_i$'' also when referring to the topological disc $D_i$ containing the vertices in $V_i$.
A clustered graph $(G,T)$ is \emph{clustered planar} (or briefly \emph{c-planar}) if $(G,T)$ has a clustered embedding.

A strip clustered graph is a concept introduced recently
 by Angelini et al.~\cite{ADDF13+}\footnote{The author was interested in this planarity variant independently prior
  to the publication of~\cite{ADDF13+} and adopted
 the notation introduced therein.} For  convenience  we slightly alter their definition and define ``strip clustered graphs''
as  ``proper'' instances of ``strip planarity'' in~\cite{ADDF13+}.
In the present paper we are primarily concerned with the following subclass of clustered graphs.
A \emph{clustered graph} $(G,T)$ is \emph{strip clustered}
 if $G=\left(V_1\cup \ldots \cup V_k, E\subseteq \bigcup_i{V_i \cup V_{i+1} \choose 2}\right)$, i.e., the edges
 in $G$ are either contained inside a part or join vertices in two consecutive parts.
A drawing of a strip clustered graph $(G,T)$ in the plane is \emph{strip clustered} if $i<x(v_i)<i+1$ for
all $v_i\in V_i$, and every line of the form $x=i$, $i\in \mathbb{N}$, intersects every edge at most once.
 Thus, strip clustered drawings constitute a restricted class
of clustered drawings.
We use the term ``cluster $V_i$'' also when referring to the vertical strip containing the vertices in $V_i$.
A strip clustered graph $(G,T)$ is \emph{strip planar}
if $(G,T)$ has a strip clustered embedding in the plane.
Note that if we define $(G,\gamma)$, so that $\gamma(v)=i$ for $v\in V_i$, a strip clustered
drawing is $x$-bounded. Thus, Theorem~\ref{thm:AlgTreeXBounded} implies an efficient
algorithm for strip planarity testing.

\begin{lemma}
\label{lemma:strip3}
The problem of strip planarity testing is reducible in linear time
to the problem of c-planarity testing in the case of flat clustered graphs with three clusters.
\end{lemma}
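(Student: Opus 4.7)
The plan is to show a linear-time many-one reduction from strip planarity to c-planarity of flat clustered graphs with three clusters. Given a strip clustered instance $(G, T)$ with $T = \{V_1, \ldots, V_k\}$, I build a three-cluster instance $(G', T')$ where every $V_i$ is assigned to the super-cluster $W_{i \bmod 3}$, so that $T' = \{W_0, W_1, W_2\}$ with $W_j = \bigcup_{i \equiv j \pmod 3} V_i$. Every edge of $G$ is either internal to some $V_i$ (and hence internal to $W_{i \bmod 3}$) or runs between consecutive $V_i, V_{i+1}$, whose residues differ modulo $3$, so the edge goes between two distinct super-clusters. To ensure that the linear order of the $V_i$'s can be recovered from a three-cluster c-planar embedding, I augment $G$ by a \emph{spine}: introduce a representative vertex $p_i$ in each $V_i$ together with spine edges $p_i p_{i+1}$ for $i = 1, \ldots, k-1$. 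The whole construction runs in linear time.

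For the forward direction, I start from a strip planar embedding of $(G, T)$ in which each $V_i$ occupies the vertical strip $\{i - 1 < x < i\}$. For each colour $j \in \{0,1,2\}$, I extend the strips containing vertices of $W_j$ by thin appendages running outside the drawing and connect all such appendages into a single topological disc $D(W_j)$. The connecting arcs for different colours can be drawn pairwise disjointly by routing them on different sides of the drawing and at different ``depths'', using the fact that each colour's strips are pairwise separated by exactly two strips of the other colours. Edges internal to some $V_i$ stay inside $D(W_{i \bmod 3})$, and edges between consecutive $V_i, V_{i+1}$ cross only the boundaries of $D(W_{i \bmod 3})$ and $D(W_{(i+1) \bmod 3})$, each exactly once, yielding a c-planar embedding of $(G', T')$.

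For the reverse direction, I take a c-planar embedding of $(G', T')$ and extract a strip planar embedding of $(G, T)$ using the spine as a backbone. The spine $p_1 p_2 \cdots p_k$ traverses the three discs in the periodic pattern $D(W_1), D(W_2), D(W_0), D(W_1), \ldots$, each $p_i$ lying inside $D(W_{i \bmod 3})$ and each spine edge crossing exactly the two boundaries between adjacent discs in the pattern. I apply a continuous deformation of the plane that straightens the spine onto the $x$-axis at positions $x(p_i) = i$, simultaneously flattening each disc $D(W_j)$ along the spine into a sequence of vertical strips, one per sub-cluster $V_i \subseteq W_j$, with $V_i$ lying inside $\{i-1 < x < i\}$. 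Edges internal to a cluster stay in the deformed strip, and edges between consecutive clusters cross only the boundary between the two relevant strips, so we obtain a strip planar embedding of $(G, T)$.

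The main obstacle is the reverse direction, specifically justifying the unrolling of the c-planar embedding into a valid strip planar drawing. The spine only pins down the representative vertices $p_i$ in the desired linear order; one must argue that this order extends to all of each $V_i$ and to its incident edges, so that no edge is ``trapped'' in the wrong strip or crosses a strip boundary more than once. The key topological fact to invoke is that the c-planarity constraint forces every edge between $V_i$ and $V_{i+1}$ to leave $D(W_{i \bmod 3})$ on the same side as the spine edge $p_i p_{i+1}$, so the homeomorphism that straightens the spine also aligns all such edges with the strip structure and propagates the order locally forced at each $p_i$ to the full embedding of each $V_i$.
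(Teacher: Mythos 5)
Both directions of your argument have genuine gaps, and they are exactly the places where the paper has to work. In the forward direction you keep the strip drawing fixed and try to merge the strips of each residue class into one disc by ``thin appendages'' routed outside the drawing. This cannot be done in general: the required connections between strips $1$ and $4$, strips $2$ and $5$, and strips $3$ and $6$ are pairwise interleaving and carry pairwise distinct colours, and every corridor must avoid the drawing entirely (otherwise some edge would cross a disc boundary twice). With only ``above'', ``below'' and ``around the ends'' available (going around the ends is again a cyclic, outer-face situation), three mutually interleaving connections of three different colours cannot be realized disjointly; ``different depths'' does not help, since a higher corridor must still descend into a strip that a lower corridor of another colour spans. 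This is why the paper does not keep the drawing fixed: it first normalizes the cluster regions to wedges and then deforms the whole strip drawing by a spiral transformation $(\phi,r)\mapsto(k\phi,c\phi\cdot r)$, so that strips of the same residue become radially stacked inside a single wedge, which then serves as the disc.

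In the reverse direction the single ``continuous deformation of the plane that straightens the spine'' is not available. A c-planar embedding of the three-cluster instance does not control the radial/nesting order in which edges of level $i$ and of level $i+3$ cross the same boundary between two cluster regions, nor the position (winding) of portions of $G$ that are far from, or disconnected from, your spine; such an embedding need not be ambient-isotopic to any strip-shaped drawing, so no homeomorphism straightening the spine can work directly. The paper's proof does precisely the work you skip: it does not unroll the given embedding but constructs a \emph{new} clustered embedding inductively, attaching $G[V_i]$ and $E[V_{i-1},V_i]$ in the outer face of the part already embedded and maintaining the invariant $(*)$ comparing the distances $|pp_e|$ of edge crossings with the separating rays; only this normalized embedding is then unrolled into a strip clustered one. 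Your spine only pins the representatives $p_i$ and does not force this global nesting (and, as a smaller omission, you never verify that adding the spine preserves strip planarity in the forward direction, though routing it above the drawing does work). Note also that the paper's reduction uses the same graph $G$, so these normalization arguments, not a gadget, carry the proof.
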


\begin{proof}
Given an instance of $(G,T)$ of strip clustered graph we construct a clustered graph $(G,T')$ with three clusters $V_0', V_1'$ and  $V_2'$ as follows. We put $T'=\{V_j'|  \ V_j'=\bigcup_{i; \ i \mod 3=j}V_i  \}$. Note that without loss of generality
we can assume that in a drawing of $(G,T')$ the clusters are drawn as regions bounded by a pair of rays emanating from the origin. By the inverse of a projective transformation taking the origin to the vertical infinity we can also assume that the same is true for a drawing of $(G,T)$.
 Notice that such clustered embedding of $(G,T)$ can be continuously deformed by a rotational
transformation of the form $(\phi,r)\rightarrow (k\phi, c\phi \cdot r)$ for appropriately chosen $k,c>1$, which is expressed in polar coordinates,
 so that we obtain a clustered embedding of $(G,T')$.
 We remark that  $(x,y)$ in Cartesian coordinates corresponds to $(\phi,\sqrt{x^2+y^2})$ such that $\sin \phi=\frac{y}{\sqrt{x^2+y^2}}$
and $\cos \phi=\frac{x}{\sqrt{x^2+y^2}}$ in polar coordinates.
On the other hand, it is not hard to see that if $(G,T')$ is c-planar then there exists a clustered embedding of $(G,T')$ with the following property.
For each $i=0,1,2$ and $j$ the vertices of $V_i'$ belonging to $V_j$
and the parts of their adjacent edges in the region representing $V_j'$ belong to a topological disc
 $D_j$ such that $D_j\cap D_{j'}=\emptyset$ for $j\not=j'$ fully contained in this region.
To this end we proceed as follows.
Let $E_i=E[V_{i-1},V_{i}]$ denote the edges in $G$ between $V_{i-1}$ and $V_{i}$.
Let $r_i$ denote the ray emanating from the origin that separates $V_{i-1}'$ from $V_{i}'$.
Given a clustered drawing of $(G,T')$, $p_e(\mathcal{E})$, for $e\in E_i$, is the intersection point of $e$
with the ray $r_{i \mod 3}$. Let $p$ denote the origin. Let $|pq|$ for a pair of points in the plane denote the Euclidean distance between $p$ and $q$.
Recall that $G$ has clusters $V_1,\ldots, V_k$. We obtain a desired embedding of $(G,T')$ inductively as $\mathcal{E}_{k}$ starting with
$\mathcal{E}_4$.
For $\mathcal{E}_i=(G,T')$, $i=5,\ldots, k$, we maintain the following invariant.
For each $j, \ 5\leq j\leq i$, we have $$\max_{e\in E[V_{j-4},V_{j-3}]}|pp_e| < \min_{e\in E[V_{j-1},V_{j}]}|pp_e| \  \ (*).$$
Let $\mathcal{E}$ denote a clustered embedding of $(G,T')$.
We start with a clustered embedding of $\mathcal{E}_4$ of $(G[V_1\cup V_2\cup V_3\cup V_4],T')$ inherited from $\mathcal{E}$.  In the $i^{\mathrm{th}}$ step of the induction we extend
$\mathcal{E}_{i-1}$ of $(G[V_1\cup V_2\ldots \cup V_{i-1}],T')$ inside the wedge corresponding to $V_{i-1 \mod 3}'$ and $V_{i \mod 3}'$
 thereby obtaining an embedding $\mathcal{E}_{i}$ of
$(G[V_1\cup V_2\ldots \cup V_{i}],T')$ so that the resulting embedding $\mathcal{E}_{i}$
is still clustered, and $(*)$ is satisfied. Since by induction hypothesis we have
$\max_{e\in E[V_{i-3j-2},V_{i-3j-1}]}|pp_e| < \min_{e\in E[V_{i-2},V_{i-1}]}|pp_e|$, for all possible $j$,
in $\mathcal{E}_{i-1}$ we have $G[V_{i-1}]$ drawn in the outer face of  $G[V_1 \cup \ldots \cup V_{i-2}]$.
Thus, we can extend the embedding of $\mathcal{E}_{i-1}$ into $\mathcal{E}_{i}$ in which all the edges
of $E_i$ cross $r_{i \mod 3}$ in the same order as in $\mathcal{E}$ while maintaining the invariant $(*)$
and the rotation system inherited from $\mathcal{E}$.
The obtained embedding $\mathcal{E}_{k}$ of $(G,T')$  can be easily transformed into a strip clustered embedding.

 Thus, $(G,T)$ is strip planar if and only if $(G,T')$ is c-planar.
\end{proof}

If $G$ is a tree also the converse of Lemma~\ref{lemma:strip3} is true. In other words,
 given an instance of clustered tree $(G,T)$ with three clusters $V_0, V_1$ and  $V_2$ we can easily construct a strip clustered tree $(G,T')$
with the same underlying abstract graph
such that $(G,T')$ is strip planar if and only if $(G,T)$ is c-planar. Indeed, the desired equivalent instance
 is obtained by partitioning the vertex set of $G$ into clusters thereby obtaining $(G,T'=\{V_i'|\ i\in I\subset \mathbb{N}\})$ as follows.
In the base case, pick an arbitrary vertex $v$ from a non-empty cluster $V_i$ of $G$ into $V_{i}'$,
and no vertex is processed.

In the inductive step we pick an unprocessed vertex $u$ that was already put into a set $V_j'$ for some $j$.
 We put neighbors of
$u$ in $V_{j \mod 3}$ into $V_j'$, neighbors in $V_{j+1 \mod 3}$ into $V_{j+1}'$,
 and neighbors of $v$ in $V_{j-1 \mod 3}$ into $V_{j-1}'$. Then we mark $u$ as processed.
 Since $G$ is a tree, the partition $T'$ is well defined.
Now, the argument of Lemma~\ref{lemma:strip3} gives us the following.

\begin{lemma}
\label{lemma:striptree}
The problem of c-planarity testing in the case of flat clustered graphs with three clusters is reducible in linear time
to the strip planarity testing if the underlying abstract graph is a tree.
\end{lemma}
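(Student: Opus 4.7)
The plan is to verify that the BFS construction yields a legitimate strip-clustered instance and then prove both directions of the equivalence by reusing the argument of Lemma~\ref{lemma:strip3}. First I would check, by induction on the BFS process, that for every edge $uw$ of $G$, if $u\in V_j''$ and $w\in V_{j'}''$ then $|j-j'|\le 1$: when we process $u\in V_j''$, a neighbor of $u$ lying in original cluster $V_{(j\pm 1)\bmod 3}$ is placed into $V_{j\pm 1}''$ and a same-cluster neighbor stays in $V_j''$. Because $G$ is a tree, each non-root vertex is reached through a unique parent, so $T''$ is well defined and $(G,T'')$ is a valid strip clustered instance.

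For the forward implication, if $(G,T'')$ is strip planar, the first half of the proof of Lemma~\ref{lemma:strip3} applies without modification: the rotational transformation $(\phi,r)\mapsto(k\phi,c^{\phi}r)$ deforms a strip-clustered embedding of $(G,T'')$ into a c-planar embedding of the quotient clustering $V_i'=\bigcup_{j\equiv i\pmod 3}V_j''$. By construction $V_i'=V_i$, so this is a c-planar embedding of $(G,T)$.

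For the reverse implication, suppose $(G,T)$ is c-planar. I would draw $(G,T)$ with the three clusters represented as wedges of angular width $2\pi/3$ meeting at the origin, then mimic the inductive layout in the second half of the proof of Lemma~\ref{lemma:strip3}. Starting from the root vertex $v\in V_i''$ of the BFS, I add sub-trees one at a time in BFS order; the sub-tree hanging off a processed vertex $u\in V_j''$ is placed in the outer face of the currently drawn graph, with its edges entering clusters $V_{j-1}'',V_j'',V_{j+1}''$ as dictated by $T''$ and the rotation at $u$ in the fixed c-planar embedding. No conflict arises, and the analogue of the invariant $(*)$ of Lemma~\ref{lemma:strip3} remains satisfied throughout because, $G$ being a tree, the currently drawn portion is simply connected and the new sub-tree can always be placed freely in its wedge.

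The main obstacle is the reverse implication, specifically showing that the BFS-assigned levels in $T''$ are compatible with the rotations of some c-planar embedding of $(G,T)$. The point is that once a c-planar embedding is fixed, the direction (left-going or right-going, in the wedge picture) of each edge out of the root is determined, and by induction along the unique tree path the BFS label of every vertex is dictated; the absence of cycles in $G$ guarantees that no cyclic constraint around the origin is violated. This is precisely where the tree hypothesis is essential: for a graph with a cycle that ``winds'' around the origin, the BFS-based relabeling need not agree with any fixed c-planar embedding, and the straightforward reduction breaks down.
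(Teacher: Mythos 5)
Your proposal matches the paper's proof: the paper constructs exactly the same BFS-style relabeling into strip levels (well defined precisely because $G$ is a tree, so the mod-3 quotient of the new instance is the original three-clustering) and then simply invokes the two-directional argument of Lemma~\ref{lemma:strip3}, i.e.\ the rotational transformation for one direction and the inductive wedge layout with invariant $(*)$ for the other. The only cosmetic difference is that the cited argument re-embeds the graph level by level rather than subtree by subtree, but your stated intent to mimic that induction is exactly what the paper does.
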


\subsection{The variant of the Hanani-Tutte theorem for $x$-bounded drawings and 3-connected graphs}

\label{sec:linearlyStrong}

In this section we prove the Hanani-Tutte theorem  for $x$-bounded drawings if the underlying abstract graph is three connected, Theorem~\ref{thm:linearlyStrong}.

First, we prove a lemma that allows us to get rid of odd crossing pairs by doing only local redrawings and vertex splits.
A drawing of a graph $G$ is obtained from the given drawing of $G$ by \emph{redrawing edges locally at vertices}
if the resulting drawing of $G$ differs from the given one only in small pairwise disjoint neighborhoods of  vertices not containing any other
vertex. The proof of the following lemma is inspired by the proof of~\cite[Theorem 3.1]{PSS06}.

\begin{lemma}
\label{lemma:removeOdd}
Let $G$ denote a subdivision of a vertex three-connected graph drawn in the plane so that every pair of non-adjacent edges cross an even number of times.
We can turn the drawing of $G$ into an even drawing by a finite sequence of local redrawings of edges at vertices and vertex splits.
\end{lemma}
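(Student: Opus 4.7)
The plan is to process the vertices of $G$ one at a time, maintaining the invariant that the drawing is independently even globally and that every pair of edges incident to an already-processed vertex crosses evenly. For a current vertex $v$, introduce the auxiliary graph $H_v$ on the star of $v$ whose edges are the pairs of incident edges of $G$ that currently cross an odd number of times; the goal at $v$ is to achieve $H_v = \emptyset$, at which point all crossing parities at $v$ are even.

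Two operations act on $H_v$ without perturbing the drawing outside a small neighborhood of $v$. A local redrawing at $v$ that pushes an incident edge $e$ past $v$ inside a small disk toggles all edges of $H_v$ incident to $e$ (this is the edge-vertex switch action restricted to $v$); iterating, the reachable configurations from $H_v$ are exactly $H_v \oplus K_{S, \overline S}$ for any $S \subseteq \mathrm{star}(v)$. A vertex split at $v$ along a partition $(A, \overline A)$ of the star produces two vertices $v', v''$ joined by a short crossing-free edge drawn inside a neighborhood of $v$; pairs across the split become non-adjacent, so their parities are already even by the global independently-even hypothesis, and the residual problems at $v'$ and $v''$ are $H_v\!\restriction_A$ and $H_v\!\restriction_{\overline A}$.

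Degree-two vertices arising from the subdivisions are trivial: $|H_v|\le 1$ and a single local redrawing suffices. The substantive case is a branching vertex $v$ of degree at least three, inherited from the underlying vertex three-connected graph. Here I would use the three-connectivity: for any three edges $e_1, e_2, e_3$ at $v$, Menger's theorem applied to $G-v$ supplies internally disjoint paths between their far endpoints, which combined with the $e_i$'s yield three closed walks each containing two of $\{e_1,e_2,e_3\}$. Each such closed walk has algebraic intersection number zero with every other edge of the current drawing. Decomposing these algebraic crossing numbers into contributions from the adjacent-edge parities at $v$, the three identities jointly imply the cocycle relation
\[
H_v(e_1,e_2)\oplus H_v(e_2,e_3)\oplus H_v(e_1,e_3)=0
\]
for every triple at $v$. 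This condition is equivalent to $H_v$ being, up to a switch in the orbit described above, a complete bipartite graph $K_{A, \overline A}$ on the star of $v$. A single vertex split along $(A,\overline A)$ then clears $H_v$: pairs within $A$ or within $\overline A$ already cross evenly by bipartiteness, and pairs across are no longer adjacent.

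The main obstacle is establishing the cocycle identity at a three-connected branching vertex, i.e.\ ruling out the ``triangle'' obstructions in $H_v$ that no combination of local switches at $v$ can clear. This is exactly where the hypothesis that $G$ is a subdivision of a vertex three-connected graph is used, through the supply of three internally disjoint paths connecting prescribed pairs of neighbors of $v$ in $G-v$. Once this identity is in hand, the remaining work is bookkeeping and termination: the two operations are strictly local to $v$ and hence do not disturb crossing parities at previously processed vertices, while each operation either strictly reduces $|E(H_v)|$ or strictly reduces the number of unprocessed vertices of maximum degree, so the procedure terminates in finitely many steps and produces an even drawing of the resulting (possibly enlarged) graph.
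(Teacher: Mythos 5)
Your plan hinges on two claims, and both fail. First, the ``cocycle relation'' $H_v(e_1,e_2)\oplus H_v(e_2,e_3)\oplus H_v(e_1,e_3)=0$ is simply not a consequence of independent evenness plus three-connectivity. Take $G=K_4$ with vertices $v,a,b,c$, start from a planar drawing, and reroute the end piece of $va$ inside a tiny disk around $v$ so that it crosses $vb$ exactly once and nothing else. The drawing is independently even (the only odd pair is adjacent), $G$ is vertex three-connected, yet $H_v$ consists of a single edge, so your triple identity fails; since that triple sum is invariant under the full-star switches you allow, no sequence of your switches can clear $H_v$ either. The Menger-type argument you sketch cannot rescue this: the parity identities one actually gets from ``two closed curves cross evenly'' involve \emph{four} edge-germs at $v$ together with whether they interleave in the rotation, and they also pick up the crossing parities of the connecting paths with each other and with the $e_i$'s, which are adjacent-pair parities and need not vanish. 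Your intermediate assertion that each closed walk ``has algebraic intersection number zero with every other edge'' is false for edges sharing a vertex with the walk, which is exactly the situation at hand.

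Second, even where your structural claim would hold, the vertex split is used backwards. Splitting $v$ along the bipartition $(A,\overline A)$ whose \emph{cross} pairs are the odd ones does not repair anything: those pairs keep their odd crossing number and merely become non-adjacent, so the resulting drawing is not even (indeed not even independently even), while the lemma demands a genuinely even drawing of the split graph. In the paper the logic is the opposite: one first makes all edges of a chosen cycle $C$ even by local redrawings; three-connectivity is then used (via two cycles through $v$ built from paths to $C$, which would cross an odd number of times) to show that at each $v\in C$ every edge starting inside $C$ already crosses every edge starting outside $C$ evenly; only then is $v$ split along the inside/outside partition, precisely because the cross pairs are even, and termination follows from the decrease of $\sum_v \deg^3(v)$. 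So the split is a device for locking in evenness across an already-even bipartition, not for disposing of odd pairs, and the three-connectivity is invoked for a cycle-based parity contradiction, not for a per-triple identity at a vertex.
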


\begin{proof}
We process cycles in $G$ containing an edge crossed by one of its adjacent edges an
odd number of times one by one until no such cycle exists.
Let $C$ denote a cycle of $G$. By local redrawings at the vertices of $C$ we obtain a drawing of $G$, where every edge of $C$ crosses
every other edge an even number of times. Let $v$ denote a vertex of $C$.

First, suppose that every edge incident to $v$ and starting inside of $C$ crosses every edge incident to $v$ and starting outside of $C$ an even number of times.
In this case we perform at most two subsequent vertex splits.
If there exists at least two edges starting at $v$ inside (outside) of $C$, we split $v$ into two vertices $v'$ and $v''$ joined by a very short crossing free edge so that $v'$ is incident to the neighbors of $v$ formerly joined with
$v$ by edges starting inside (outside) of $C$, and $v''$ is incident to the rest of the neighbors of $v$.
Thus, $v''$ replaces $v$ on $C$.
Notice that by splitting we maintain the property of the drawing to be independently even,
and the property of our graph to be three-connected. Moreover, all the edges incident to the resulting vertex $v''$ of degree three or four cross one another an even number of times. Hence, no edge of $C$ will ever be crossed by another edge an odd number of times, after
we apply appropriate vertex splits at every vertex of $C$.

Second, we show that there does not exist a vertex $v$ incident to $C$ so that an edge $vu$ starting inside of $C$ crosses an edge $vw$ starting outside of $C$
an odd number of times. Since $G$ is a subdivision of a vertex three-connected graph, there exist two distinct vertices $u'$ and $w'$ of $C$ different from $v$ such that $u'$ and $w'$, respectively, is connected with $u$ and $w$ by a path internally disjoint from $C$. Let $uP_1u'$ and $wP_2w'$, respectively, denote this path. Note that $u$ can coincide with $u'$ and $w$ can coincide with $w'$.
Let $vP_3u'$ denote the path contained in $C$ no passing through $w'$. Let $C'$ denote the cycle obtained by concatenation of $P_1$, $P_3$, and $vu$.
Let $C''$ denote the cycle obtained by concatenating $P_2$ and the portion of $C$ between $w'$ and $v$ not containing $u'$.
Since $vw$ and $vu$ cross an odd number of times and all the other pairs of edges $e\in E(C')$ and $f\in E(C'')$ cross an even number of times,
the edges of $C'$ and $C''$ cross an odd number of times. It follows that their corresponding
curves cross an odd number of times (contradiction).

Notice that by vertex splits we decrease the value of the function $\sum_{v\in V(G)}deg^3(v)$ whose value is always non-negative.
Hence, after a finite number of vertex splits we turn $G$ into an even drawing of a new graph $G'$.
\end{proof}

We turn to the actual proof of Theorem~\ref{thm:linearlyStrong}.

We apply Lemma~\ref{lemma:removeOdd} to $(G,\gamma)$ thereby obtaining  $(G',\gamma')$, where each vertex obtained by a vertex split,
has the $\gamma'$ value as its parental vertex.
By applying Theorem~\ref{thm:linearly} to $(G',\gamma')$ we obtain a clustered embedding of $(G',\gamma')$. Finally, we contract the pairs of vertices
obtained by vertex splits in order to obtain an $x$-bounded embedding of $(G,\gamma)$.

\section{Trees}

\label{sec:tree}

In this section we give an algorithm proving Theorem~\ref{thm:AlgTreeXBounded}.

In order to make the present section easier do digest, as a warm-up we give an algorithm in the case, when $G$ is a subdivided star.
Then we show that a slightly more involved algorithm based on the same idea also works for general trees. 
Throughout the present section we (tacitly) assume the following $$ (*) \ \ \  |\gamma(u)-\gamma(v)|\leq 1$$ for every edge $uv\in E(G)$. Thus, we can think of proving
the result for strip clustered graphs, which is how we thought
about it originally.

\subsection{Subdivided stars}

\label{sec:star}

In the sequel $G=(V,E)$ is a subdivided star. Thus, $G$ is a connected graph that  contains a special vertex $v$, \emph{the center of the star},
of an arbitrary degree and all the other vertices in $G$ are either of degree one or two.
The assumption $(*)$ can be imposed without loss of generality.
Indeed, by subdividing every edge $uv$ by vertices $w$ for
which $\gamma(u)<\gamma(w)<\gamma(v)$ we do not change the embeddability.

Recall that $\max (G')$ and $\min (G')$, respectively, denote the maximal and minimal value of $\gamma(v)$, $v\in V(G')$, and that
a path $P$ in $G$ is an \emph{$i$-cap} and \emph{$i$-cup}, respectively, if for the end vertices $u$ and $v$ of $P$ and all $w\not=u,v$ of $P$
we have $\min (P) = \gamma (u) =\gamma(v)=i\not=\gamma(w)$ and $\max (P) = \gamma (u) =\gamma(v)=i\not=\gamma(w)$.

The following lemma is a direct consequence of our characterization stated in Theorem~\ref{thm:characterization}.

\begin{lemma}
\label{lemma:cap-cupG}
Let us fix a rotation at $v$, and thus, an embedding of $G$.
Suppose that every interleaving pair of an $i$-cap $P_1$ and $j$-cup $P_2$ in $G$ containing $v$ in their interiors is feasible in the fixed embedding of $G$.
 Then $(G,\gamma)$ admits an $x$-bounded embedding and in a corresponding  embedding of $(G,T)$ the rotation at $v$ is preserved.
\end{lemma}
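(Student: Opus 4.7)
The plan is to deduce the lemma directly from Theorem~\ref{thm:characterization} applied to the embedding $\mathcal{E}$ of $G$ determined by the fixed rotation at $v$. Since every vertex of $G$ other than $v$ has degree at most two, its rotation is trivial, and the rotation at $v$ alone pins down the isotopy class of $\mathcal{E}$. I then have to verify the two obstructions from Theorem~\ref{thm:characterization}: the absence of trapped vertices, and the absence of infeasible interleaving pairs of paths.

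The first condition will be immediate: $G$ is a tree and therefore contains no cycle at all, so no vertex can be trapped in the interior of a cycle.

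For the second, the crucial observation will be that in a subdivided star every vertex other than $v$ has degree at most two in $G$, so in the union $P_1\cup P_2$ of any two paths only the center $v$ can have degree three or four. By the definition of $i_\mathcal{E}(P_1,P_2)$, only vertices of degree three or four in $P_1\cup P_2$ contribute a nonzero summand $cr_{P_1,P_2}(\cdot)$; hence $i_\mathcal{E}(P_1,P_2)\neq 0$ can only happen when $v$ lies in $P_1\cup P_2$ with degree at least three.

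A short case analysis on the role of $v$ in the interleaving pair $(P_1,P_2)$ will then finish the argument. If $v$ has degree four in $P_1\cup P_2$, then $v$ is an interior vertex of both paths and feasibility is exactly what the hypothesis supplies. If $v$ has degree three, then $v$ is an endpoint of one path and an interior vertex of the other; comparing $\gamma(v)$ with the interleaving inequalities $\min(P_1)<\min(P_2)\leq\max(P_1)<\max(P_2)$ in either sub-case forces $\gamma(v)$ to be simultaneously equal to an extremum of one path and strictly interior to the other, a contradiction. If $v$ is an endpoint of both paths, then $\gamma(v)=\min(P_1)=i$ and $\gamma(v)=\max(P_2)=j$, contradicting $i<j$. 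Thus every interleaving pair in $\mathcal{E}$ is feasible, Theorem~\ref{thm:characterization} yields an $x$-bounded embedding within $\mathcal{E}$, and the rotation at $v$ is preserved by construction. The only real content is the algebraic-intersection-only-at-$v$ observation together with the accompanying case check, and I do not expect any serious obstacle beyond it.
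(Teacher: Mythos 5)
Your proposal is correct and takes essentially the same route as the paper, which simply declares the lemma a direct consequence of Theorem~\ref{thm:characterization}; you supply the routine verification (a tree has no cycles, hence no trapped vertices, and in a subdivided star only the center can contribute to $i_\mathcal{E}(P_1,P_2)$). One minor slip in your degree-three case: $v$ can also have degree three in $P_1\cup P_2$ while being interior to both paths, namely when the paths share an edge at $v$, but this configuration still has $v$ in both interiors (and in fact cannot occur for an interleaving pair in a subdivided star, since the other endpoint of the shared subpath would then also need degree three), so the conclusion is unaffected.
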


In what follows we show how to use Lemma~\ref{lemma:cap-cupG} for a polynomial-time $x$-bounded embeddability testing if the underlying abstract graph
is a subdivided star. The algorithm is based on testing in  polynomial time  whether the columns  of a  0--1 matrix can be ordered so that, in every row, either the ones or the zeros are
consecutive.
We, in fact, consider matrices containing $0,1$ and also an ambiguous symbol $*$. A matrix containing 0,1 and $*$ as its elements has the \emph{circular-ones property}
if there exists a permutation of its columns such that in every row, either the ones or the zeros are
consecutive among undeleted symbols after we delete all $*$.  Then each row in the matrix corresponds to a constraint imposed on the rotation at $v$ by Lemma~\ref{lemma:cap-cupG}
simultaneously for many pairs of paths.

By Lemma~\ref{lemma:cap-cupG} it is enough to decide if there exists a rotation at $v$
so that every interleaving pair of an $s$-cap $P_1$ and $b$-cup $P_2$ meeting at $v$ is feasible. Note that if either $P_1$ or $P_2$ does not contain
$v$ in its interior the corresponding pair is feasible.
%
An interleaving pair $P_1$ and $P_2$ passing through $v$ restricts the set of all rotations at $v$ in an $x$-bounded embedding of $(G,\gamma)$.
Namely, if $e_i$ and $f_i$ are edges incident to $P_i$ at $v$ then in an $x$-bounded embedding of $(G,\gamma)$ in the rotation
at $v$ the edges $e_1,f_1$ do not alternate with the edges $e_2,f_2$, i.e.,
 $e_1$ and $f_1$ are consecutive when we  restrict the rotation to
  $e_1,f_1,e_2,f_2$. We denote such a restriction by $\{e_1f_1\}\{e_2f_2\}$.
Given a cyclic order $\mathcal{O}$ of edges incident to $v$, we can interpret $\{e_1f_1\}\{e_2f_2\}$ as a Boolean predicate
which is ``true'' if and only if $e_1,f_1$ do not alternate with the edges $e_2,f_2$ in $\mathcal{O}$.
Of course, for a given cyclic order we have $\{ab\}\{cd\}$ if and only if $\{cd\}\{ab\}$,
and $\{ab\}\{cd\}$ if and only if $\{ba\}\{cd\}$.
Then our task is to decide in  polynomial time if the rotation at $v$ can be chosen so that
 the predicates $\{e_1f_1\}\{e_2f_2\}$ of all the interleaving pairs $P_1$ and $P_2$ are ``true''.
 The problem of finding a cyclic ordering satisfying a given set
  of Boolean predicates of the form $\{e_1f_1\}\{e_2f_2\}$  is $\cNP$-complete in general, since the problem of total ordering~\cite{O79} can be easily  reduced to it in  polynomial time.
  However, in our case the instances satisfy the following structural properties making the problem tractable
  (as we see later).


\begin{observation}
\label{obs:alegebera}
If $\{ab\}\{cd\}$ is false and $\{ab\}\{de\}$ (is true) then $\{ab\}\{ce\}$ is false.
\end{observation}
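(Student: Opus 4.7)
The plan is to argue directly from the geometric meaning of the predicate $\{xy\}\{zw\}$: it records whether, in the cyclic order of the four elements $x,y,z,w$, the pair $\{x,y\}$ fails to alternate with the pair $\{z,w\}$. Equivalently, deleting $x$ and $y$ from the cycle leaves $z$ and $w$ on a common arc. So I will work with a cyclic order $\mathcal{O}$ on the five elements $a,b,c,d,e$ and reason about which of the two open arcs determined by $a$ and $b$ contains each of $c,d,e$.

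First I would fix $a$ and $b$ and let $A_1, A_2$ denote the two open arcs of $\mathcal{O}$ they determine. The hypothesis that $\{ab\}\{cd\}$ is false says that $c$ and $d$ lie in different arcs; without loss of generality, $c \in A_1$ and $d \in A_2$. Next, the hypothesis that $\{ab\}\{de\}$ is true says that when $a,b$ are removed, $d$ and $e$ are consecutive among $\{a,b,d,e\}$, i.e.\ $d$ and $e$ lie in the same arc; hence $e \in A_2$.

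It then follows immediately that $c \in A_1$ and $e \in A_2$, so $c$ and $e$ lie in different arcs of $A_1, A_2$. By the definition of the predicate, this means that $\{c,e\}$ alternates with $\{a,b\}$, i.e.\ $\{ab\}\{ce\}$ is false, which is the desired conclusion.

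I do not anticipate any real obstacle: the statement is essentially a transitivity fact about the ``same arc / opposite arc'' relation induced by $a$ and $b$ on the remaining elements. The only small care needed is to make sure the definitions are applied symmetrically in the two pairs (recall from the paper that $\{xy\}\{zw\}$ is invariant under swapping within either pair), so that the cases handled above cover every cyclic arrangement up to this symmetry.
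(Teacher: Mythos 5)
Your argument is correct: encoding $\{xy\}\{zw\}$ as ``$z$ and $w$ lie in the same open arc determined by $x,y$'' and then chaining $c,d$ in different arcs with $d,e$ in the same arc is exactly the immediate justification the paper has in mind, which is why it states Observation~\ref{obs:alegebera} without proof. No gaps, assuming (as the paper does) that the five elements are distinct edges in the rotation.
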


The restriction on  rotations at $v$ by the pair of an $s$-cap $P_1$ and $b$-cup $P_2$ is \emph{witnessed} by
an ordered pair  $(s,b)$, where $s<b$. We treat such pair as an interval in $\mathbb{N}$. \\
Let $I=\{(s,b)| \ (s,b) \mathrm{ \ witness \ a \ restriction \  on \ rotations \ at} \ v \mathrm{\ by  \ an  \ interleaving \ pair \
of \ paths} \}$.

\begin{observation}
\label{obs:growth}
If an $s$-cap $P$ contains $v$ then $P$ contains an $s'$-cap $P'$ containing $v$ as a sub-path for every $s'$ such that $s<s'<\gamma(v)$.
Similarly, if a $b$-cup $P$ contains $v$ then $P$ contains a $b'$-cup $P'$ containing $v$ as a sub-path for every $b'$ such that $\gamma(v)<b'<b$.
\end{observation}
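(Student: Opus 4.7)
The plan is to use the assumption $(*)$, namely $|\gamma(u)-\gamma(v)|\le 1$ for every edge $uv$, which turns the $\gamma$-values along any path into an integer sequence whose consecutive terms differ by at most $1$. In this discrete setting, a ``discrete intermediate value'' argument shows that whenever the sequence passes from a value $\le s'$ to a value $>s'$, it must hit $s'$ exactly at some vertex. This is the only tool needed; the rest is just bookkeeping. I treat only the $s$-cap case, since the $b$-cup case follows by replacing $\gamma$ with $-\gamma$.

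Write the $s$-cap as $P=u_0u_1\ldots u_k$ with $\gamma(u_0)=\gamma(u_k)=s$ and $\gamma(u_i)>s$ for $0<i<k$, and let $v=u_j$ with $0<j<k$ (so $\gamma(v)>s$). Fix $s'$ with $s<s'<\gamma(v)$. The idea is to locate the innermost occurrences of the value $s'$ on either side of $v$ along $P$: let $i_1$ be the \emph{largest} index in $\{0,\ldots,j-1\}$ with $\gamma(u_{i_1})=s'$, and $i_2$ the \emph{smallest} index in $\{j+1,\ldots,k\}$ with $\gamma(u_{i_2})=s'$. Both indices exist: on the left, $\gamma(u_0)=s<s'<\gamma(u_j)$ and $(*)$ forces the integer sequence to take the value $s'$ somewhere in between, and symmetrically on the right. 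Define $P':=u_{i_1}u_{i_1+1}\ldots u_{i_2}$, which is a sub-path of $P$ containing $v$.

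It remains to verify that $P'$ is an $s'$-cap. By construction $\gamma(u_{i_1})=\gamma(u_{i_2})=s'$, so I have to check that $\gamma(u_i)>s'$ for every interior index $i_1<i<i_2$. Suppose for contradiction that some interior vertex has $\gamma(u_i)\le s'$. By the choice of $i_1$ and $i_2$, equality $\gamma(u_i)=s'$ is impossible, so $\gamma(u_i)<s'$. But $\gamma(u_j)=\gamma(v)>s'$ and $i_1<i<i_2$, so on one side of $i$ (the side toward $u_j$) the sequence goes from a value $<s'$ to a value $>s'$, and by $(*)$ it must equal $s'$ at some index strictly between $i_1$ and $i_2$, contradicting the maximality of $i_1$ or the minimality of $i_2$. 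Hence every interior vertex of $P'$ has $\gamma$-value strictly greater than $s'$, i.e. $P'$ is an $s'$-cap containing $v$, as required.

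I do not foresee any real obstacle here; the only subtlety is making sure the argument is run after the subdivision step that guarantees $(*)$, so that the discrete intermediate value property is available. The $b$-cup statement follows verbatim from the above by reversing inequalities, and this completes the proof.
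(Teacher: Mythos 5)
Your argument is correct: the paper states this as an unproved observation, and the intended justification is exactly the one you give, namely that after the subdivision guaranteeing $(*)$ the integer sequence of $\gamma$-values along $P$ changes by at most one per edge, so a discrete intermediate value argument yields the innermost vertices of value $s'$ on either side of $v$, and maximality/minimality of their indices makes the enclosed sub-path an $s'$-cap (and symmetrically for cups). Your write-up simply supplies the routine bookkeeping the paper omits, so no further changes are needed.
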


\begin{observation}
\label{obs:interleave}
Let $s<s'<b<b'$, $s,s',b,b'\in \mathbb{N}$. If the set $I$ contains both $(s,b)$ and $(s',b')$,
it also contains $(s,b')$ and $(s',b)$.
\end{observation}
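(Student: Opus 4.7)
The plan is to exhibit witnessing interleaving pairs for $(s,b')$ and $(s',b)$ from the pairs that already witness $(s,b),(s',b')\in I$, using one application of Observation~\ref{obs:growth} for the latter. Let $(P_1, P_2)$ be the interleaving witness for $(s,b)$ and $(Q_1, Q_2)$ the one for $(s',b')$; since each of the four paths contains $v$ in its interior, we have $s, s' < \gamma(v) < b, b'$, and the interleaving condition (i) gives $\max P_1 < b$, $\min P_2 > s$, $\max Q_1 < b'$, $\min Q_2 > s'$.

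For $(s, b') \in I$, the pair $(P_1, Q_2)$ already works. Mixing the inequalities above yields $s < s' < \min Q_2$ and $\max P_1 < b < b'$, while the middle inequality $\min Q_2 \le \max P_1$ follows from $\min Q_2 \le \gamma(v) \le \max P_1$. For $(s', b) \in I$ the analogous pair $(Q_1, P_2)$ is too weak, because neither $\max Q_1 < b$ nor $\min P_2 > s'$ is forced. The fix is to shrink: Observation~\ref{obs:growth} supplies an $s'$-cap $P_1' \subseteq P_1$ through $v$ (using $s < s' < \gamma(v)$) and a $b$-cup $Q_2'' \subseteq Q_2$ through $v$ (using $\gamma(v) < b < b'$). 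As sub-paths they inherit $\max P_1' \le \max P_1 < b$ and $\min Q_2'' \ge \min Q_2 > s'$, so $(P_1', Q_2'')$ again satisfies condition (i).

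The step that deserves comment is condition (ii) of the interleaving definition, namely that the chosen cap and cup intersect in a path or a single vertex. This is where the subdivided-star setting is used: since $v$ is the unique branching vertex, every cap or cup containing $v$ in its interior is the concatenation of two initial segments of legs of the star emanating from $v$, and any two such paths share at most two legs, with common part an initial segment from $v$ on each shared leg. Thus their intersection is either $\{v\}$ or a path through $v$, as required. Once (ii) is noted, the observation follows immediately from the elementary inequality bookkeeping and the one invocation of Observation~\ref{obs:growth} above.
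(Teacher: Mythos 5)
Your proof is correct and takes the route the paper evidently intends (the observation is stated without proof, immediately after Observation~\ref{obs:growth}, which is exactly the shrinking tool you invoke): the pair $(P_1,Q_2)$ directly witnesses $(s,b')$, trimming $P_1$ and $Q_2$ via Observation~\ref{obs:growth} witnesses $(s',b)$, and condition (ii) of interleaving is automatic in a subdivided star because both paths pass through the center, so their intersection is a single vertex or a path. Nothing further is needed.
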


We would like to reduce the question of determining if we can choose a rotation at $v$ making all the interleaving pairs feasible
to the following problem.
Let $S=\{e_1,\ldots, e_n\}$ of $n$ elements (corresponding to the edges incident to $v$).
Let $\mathcal{S}'=\{L_i',R_i'|\ i=1,\ldots \}$ of  polynomial size in $n$ such that $R_i',L_i'\subseteq S$ and  
$|L_i'|,|R_i'|\ge 2, \ L_{i+1}' \cup R_{i+1}' \subseteq L_i' \cup R_i'$.
  Can we cyclically order $S$ so that
both $R_i'$ and $L_i'$, for every  $R_i', L_i' \in \mathcal{S}'$, appear consecutively, when restricting the order to $R_i' \cup L_i'$?
Once we accomplish the reduction, we end up with the problem of testing the circular-ones property on matrices containing $0,1$ and $*$ as elements,
where each $*$ has only  $*$ symbol underneath. This problem is solvable in  polynomial time as we will see later.
We construct an instance for this problem which is a matrix $M=(m_{ij})$ as follows. The $i^\mathrm{th}$ row of $M$ corresponds to the pair $L_i'$ and $R_i'$ and each
column corresponds to an element of $S$.
For each pair $L_i',R_i'$ we have $m_{ij}=0$ if $j\in L_i'$, $m_{ij}=1$ if $j\in R_i'$, and $m_{ij}=*$ otherwise.
Note that our desired condition on $\mathcal{S}'$ implies that in $M$ each $*$ has only $*$ symbols underneath.
The equivalence of both problems is obvious.


In order to reduce our problem of deciding if a ``good'' rotation at $v$ exists, we first linearly
order intervals in $I$.
Let $(s_0,b_0)\in I$ be inclusion-wise minimal interval such that $s_0$ is the biggest and similarly let $(s_1,b_1)\in I$ be inclusion-wise minimal
 such that $b_1$ is the smallest one.
By  Observation~\ref{obs:interleave} we have $s_0=s_1$ and $b_0=b_1$.
Thus, let $(s_0,b_0)\in I$ be such that $s_0$ is the biggest and $b_0$ is the smallest one.
Inductively we relabel elements in $I$ as follows.
Let $(s_{i+1},b_{i+1})\in I$ be such that $s_{i+1}<s_i<b_i<b_{i+1}$ and subject to that condition $s_{i+1}$ is the biggest and $b_{i+1}$ is the smallest one. By  Observation~\ref{obs:interleave}
all the elements in $I$ can be ordered as follows \begin{equation}\label{eqn:order}{\bf (s_0,b_0)},
(s_{0,1},b_0),\ldots, (s_{0,i_0},b_0), (s_0,b_{0,1}),\ldots (s_0,b_{0,j_0}), {\bf (s_1,b_1)}, (s_{1,1},b_{1}) \ldots, (s_{1},b_{1,j_1}),  {\bf (s_2,b_2)}, \ldots\end{equation}
where $s_{k,i+1}<s_{k,i}<s_k$ and $b_{k,i+1}>b_{k,i}>b_k$.
For example, the ordering corresponding to the graph in Figure~\ref{fig:goingUp} is
$(4,6),(3,6),(2,6),(4,7),(3,7),(2,7)$.
Let $E(s,b)$ and $E'(s,b)$, respectively, denote the set of all the edges incident to $v$
contained in an $s$-cap and $b$-cup, where $(s,b)\in I$. Thus, $E(s,b) \cup E'(s,b)$ contain edges incident to $v$
 contained in an interleaving pair that yields a restriction on rotations at $v$ witnessed by $(s,b)$.
Note that $E(s,b) \cap E'(s,b) = \emptyset$.
By Observation~\ref{obs:growth}, $E(s_{k,j+1},b_k)\subseteq E(s_{k,j},b_k)$ and $E'(s_k,b_{k,j+1})\subseteq E'(s_k,b_{k,j})$.
 The restrictions witnessed by $(s,b)$
correspond to the  following condition. In the rotation at $v$ the edges in $E(s,b)$ follow the edges in $E'(s,b)$.
Indeed, otherwise we have a four-tuple of edges $e_1,e_2,f_1$ and $f_2$ incident to $v$,
 such that $e_1,f_1\in P_1$ and $e_2,f_2\in P_2$, where $P_1$ and $P_2$ form an interleaving pair of an $s_i$-cap and $b_i$-cup,
violating the restriction $\{e_1f_1\}\{e_2f_2\}$ on the rotation at $v$.
However, such a four-tuple is not possible in an embedding by Theorem~\ref{thm:characterization}.

Let $L_i=E(s,b)$ and $R_i=E'(s,b)$, respectively, for $(s,b)\in I$, where $i$ is the
index of the position of $(s,b)$ in~$(\ref{eqn:order})$.
Note that $E(s_{i+1},b_{i+1})\cup E'(s_{i+1},b_{i+1}) \subseteq E(s_{i},b_{i})\cup E'(s_{i},b_{i})$.
Our intermediate goal of reducing our problem to the circular-ones property testing would be easy to accomplish if $I$ consisted only of intervals of the form $(s_i,b_i)$ defined above.
However, in $I$ there might be intervals of the form $(s_i,b)$, $b\not=b_i$, or $(s,b_i)$, $s\not=s_i$.
Hence, we cannot just put $L_i':=L_i$ and $R_i':=R_i$ for all $i$, since
we do necessarily have the condition $L_{i+1} \cup R_{i+1} \subseteq L_i \cup R_i$ satisfied for all $i$.


\paragraph{Definition of $\mathcal{S}'$.}
Let $\mathcal{S}=\{L_i,R_i|\ i=1,\ldots \}$. We obtain $\mathcal{S}'$ from $\mathcal{S}$ by deleting the least
number of elements from $L_i$'s and $R_i$'s so that
$L_{i+1}' \cup R_{i+1}' \subseteq L_i' \cup R_i'$ for every $i$.
More formally, $\mathcal{S}'$ is defined recursively as $\mathcal{S}_m'$, where
$\mathcal{S}_1'=\{L_1',R_1'| \ L_1'=L_1, \ R_1'=R_1 \}$ and $\mathcal{S}_{j}'=\mathcal{S}_{j-1}\cup \{L_j',R_j'| \ L_j'=L_j\cap (L_{j-1}'\cup R_{j-1}'),
 R_j'=R_j\cap (L_{j-1}'\cup R_{j-1}')\} $.
 Luckily, the following lemma lying at the heart of the proof of our result shows that information
contained in $\mathcal{S}'$ is all we need.

\bigskip
\begin{figure}[htp]

\centering
\includegraphics[scale=0.7]{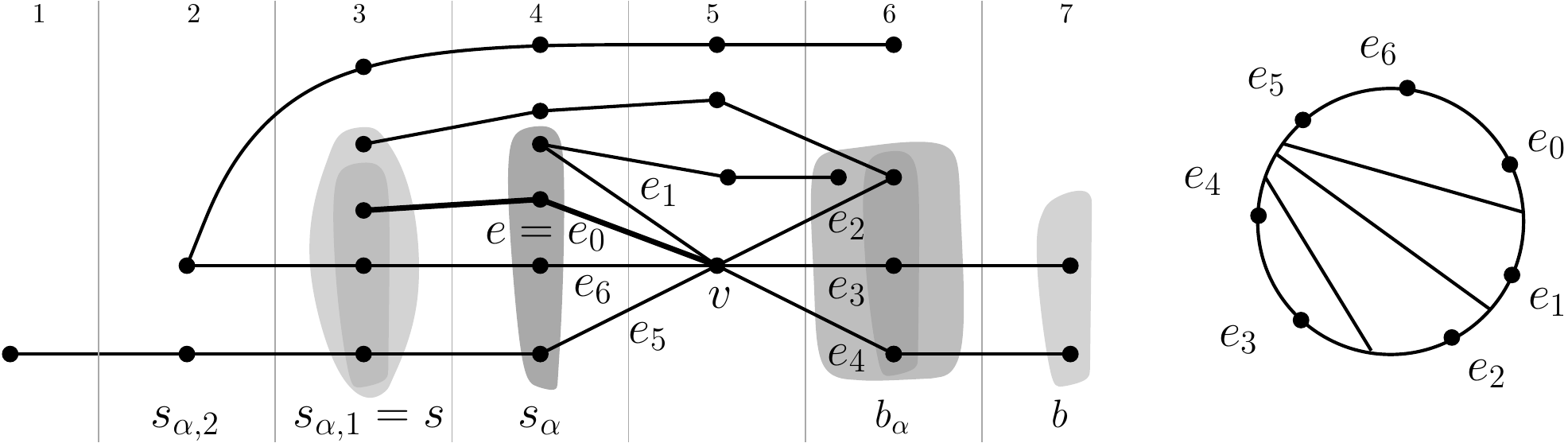}
\caption{A  subdivided star (on the left) with the center $v$, and some restrictions
  on the set of rotation at $v$ (on the right) corresponding to the intervals $(s_\alpha, b_\alpha), (s_{\alpha,1},b_\alpha)$ and $(s,b)$.
  We have $\{e_0,e_5,e_6\}=E(s_{\alpha,1},b_\alpha)\subseteq E(s,b)=\{e_0,e_2,e_5,e_6\}$ and $\{e_3,e_4\}=E'(s,b)\subseteq E'(s_{\alpha,1},b_\alpha)=\{e_3,e_4,e_1,e_2\}$.
Thus, by removing $e_0$ from $E(s,b)$ we obtain the same restrictions on the rotation at $v$.}
\label{fig:goingUp}
\end{figure}

\begin{lemma}
\label{lemma:star}
We can cyclically order the elements in $S$ so that every pair $L_i',R_i'$ in $\mathcal{S}'$ gives rise to two disjoint cyclic intervals
if and only if $(G,\gamma)$ admits an $x$-bounded embedding.
\end{lemma}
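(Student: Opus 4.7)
The plan is to prove the two directions of the equivalence separately, the ``if'' direction being the substantive one.

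For the ``only if'' direction, I would take an $x$-bounded embedding of $(G,\gamma)$ and read off the rotation at the center $v$ as a cyclic ordering $\mathcal{O}$ of $S$. By Observation~\ref{obs:crossing}, no interleaving pair of paths through $v$ is infeasible in this embedding, so for every $(s,b)\in I$ the sets $E(s,b)$ and $E'(s,b)$ form two disjoint cyclic arcs of $\mathcal{O}$. Since $L_i'\subseteq L_i = E(s_i,b_i)$ and $R_i'\subseteq R_i = E'(s_i,b_i)$, the restriction of $\mathcal{O}$ to $L_i'\cup R_i'$ still exhibits $L_i'$ and $R_i'$ as disjoint cyclic intervals, witnessing the claimed ordering.

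For the ``if'' direction, given a cyclic ordering $\mathcal{O}$ of $S$ that realises every pair in $\mathcal{S}'$, the fact that every non-central vertex of $G$ has degree at most two lets me fix a planar embedding of $G$ (unique up to isotopy and the choice of outer face) whose rotation at $v$ equals $\mathcal{O}$. By Lemma~\ref{lemma:cap-cupG}, it then suffices to show that $\mathcal{O}$ satisfies not only $\mathcal{S}'$ but the full set $\mathcal{S}$, since satisfying $(L_i,R_i)\in\mathcal{S}$ is exactly the statement that the interleaving pair witnessed by $(s_i,b_i)\in I$ is feasible. I would prove this by induction on the index $i$ in the ordering~(\ref{eqn:order}): the base $i=1$ is immediate from $(L_1',R_1')=(L_1,R_1)$. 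In the inductive step, assuming the conclusion for every $j<i$ and the hypothesis for $(L_i',R_i')$, the only edges that could possibly violate the disjoint-arcs requirement for $(L_i,R_i)$ are those in $(L_i\cup R_i)\setminus(L_{i-1}'\cup R_{i-1}')$. The key claim is that every such edge $e$ already lies in $L_{j'}\cup R_{j'}$ for some $j'<i$ whose constraint has been established inductively; this should follow because the step $i-1\to i$ changes exactly one of the two coordinates of $(s,b)$, Observation~\ref{obs:growth} makes the change monotone on the $E$-sets, and Observation~\ref{obs:interleave} certifies that any ``new'' coordinate paired with its previous partner still lies in $I$ and has thus been processed earlier. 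Observation~\ref{obs:alegebera} then propagates the placement of $e$ forced by the inductive hypothesis to the pair $(L_i,R_i)$, closing the induction.

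The main obstacle will be the careful bookkeeping at the ``block transitions'' of~(\ref{eqn:order}) -- namely, the step from $(s_{k,i_k},b_k)$ to $(s_k,b_{k,1})$ and from $(s_k,b_{k,j_k})$ to $(s_{k+1},b_{k+1})$ -- where the $L$-set genuinely expands while the $R$-set shrinks (or vice versa), so that genuinely new edges enter the union. At each such transition one must explicitly exhibit the earlier index $j'<i$ whose constraint already pins down any newly introduced edge; the example of Figure~\ref{fig:goingUp}, in which removing $e_0$ from $E(s,b)$ loses no information because $e_0$'s placement is already forced by the pair $(s_{\alpha,1},b_\alpha)$ processed earlier in~(\ref{eqn:order}), shows the flavour of this bookkeeping in miniature. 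Once the transitions are handled, the rest of the argument is routine monotonicity.
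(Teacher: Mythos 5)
Your overall framing coincides with the paper's: the direction from an $x$-bounded embedding to the ordering is the easy one, and for the converse you fix an embedding whose rotation at $v$ is $\mathcal{O}$, invoke Lemma~\ref{lemma:cap-cupG}, and try to show that $\mathcal{O}$ in fact satisfies the full family $\mathcal{S}$ by induction along the order~(\ref{eqn:order}). The gap sits exactly where you wave it away as ``bookkeeping''. Your key claim --- that every edge $e\in(L_i\cup R_i)\setminus(L_i'\cup R_i')$ lies in some earlier $L_{j'}\cup R_{j'}$ whose constraint has already been established --- is true but insufficient, and Observation~\ref{obs:alegebera} cannot be applied the way you suggest. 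The earlier constraint only relates $e$ to edges that were present at that earlier row; the dangerous four-tuples at the current row, witnessed by $(s,b)$, are those of the form $\{e_1'e_2'\}\{ef\}$ in which $f$ is a genuinely new edge, $f\in E(s,b)\setminus E(s,b_\alpha)$, where $(s,b_\alpha)$ is the earlier interval whose constraint covers $e$; no previously processed row says anything about the relative position of $e$ and $f$. To apply Observation~\ref{obs:alegebera} one needs a pivot edge $g$ that simultaneously (a) lies in $E(s,b_\alpha)$, so that $\{e_1'e_2'\}\{eg\}$ is witnessed by the earlier interval and holds by the outer induction hypothesis, and (b) has not itself been trimmed before $e$, so that $\{e_1'e_2'\}\{fg\}$ is already covered at the current row; the containments $E(s,b_\alpha)\subseteq E(s,b)$ and $E'(s,b)\subseteq E'(s,b_\alpha)$ (relation~(\ref{eqn:subset1})) are what guarantee both four-tuples are witnessed by the right intervals. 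Your sketch never identifies this pivot.

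Moreover, the existence of such a $g$ is not ``routine monotonicity'': a priori every edge of $E(s_{\alpha,\beta},b_\alpha)$ could have been trimmed even earlier than $e$. The paper handles this with a second, inner induction that reinstates trimmed edges one at a time, most recently trimmed first (this is what guarantees $g\in S_{j,k-1}$ ``by the choice of $e$''), and, when all candidates in $E(s_{\alpha,\beta},b_\alpha)$ are gone, chooses $g$ from a set $E(s'',b'')$ with $s''$ minimal occurring earlier in~(\ref{eqn:order}), arguing from the structure of the star that a path from $v$ reaching a lower cluster before the relevant cup level must exist; that particular choice of $g$ is also what the tree case later relies on. Observation~\ref{obs:interleave} only orders the intervals of $I$; it does not produce an untrimmed witness, and the claim that consecutive steps of~(\ref{eqn:order}) change only one coordinate fails precisely at the block transitions you single out. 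So the proposal correctly locates the difficulty but does not resolve it: the pivot argument and the proof that a usable pivot survives trimming are the substance of the paper's proof, and they are the part you defer.
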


\begin{proof}
The proof of the lemma is by a double-induction. In the ``outer--loop'' we induct over $|\mathcal{S}'|/2$ while respecting the order of pairs
 $L_{i},R_{i}$ given by~(\ref{eqn:order}). In the ``inner--loop'' we induct over the size of $S$, where in the base
 case of the $j^\mathrm{th}$ step of the ``outer--loop''
  we have $S_{j,0}=L_{j}' \cup R_{j}'$. In each ${k}^\mathrm{th}$ step of the ``inner--loop''
 we assume by induction hypothesis that a cyclic ordering $\mathcal{O}$ of $S$
 satisfies all the restrictions imposed by $\{L_{i}, R_{i}| i=1,\ldots, j-1\}$ and
 $L_j \cap S_{j,k-1}, R_j \cap S_{j,k-1}$. Clearly, once we show that $\mathcal{O}$ satisfies restrictions imposed by $L_j \cap S_{j,k}, R_j \cap S_{j,k}$,
 where $S_{j,k}=S_{j,k} \cup \{e\}$ and  $e\in (L_{j} \cup R_{j})\setminus S_{j,k-1}$ we are done.

Refer to Figure~\ref{fig:goingUp}.
Roughly speaking, by~(\ref{eqn:order}) a  ``problematic'' edge $e$ is an initial edge on a path starting at $v$ that never visits a cluster $b_\alpha$ after passing through the cluster $s_{\alpha}$ such that $e\in E(s_\alpha, b_\alpha)$ (or vice-versa with $E'(s_\alpha, b_\alpha)$).
 The edge $e$ is an \emph{$(\alpha,\beta)$-lower trim} (or $(\alpha,\beta)$-\ding{33}) if the lowest index $i$ for which $e\not\in L_i' \cup R_i'$ corresponds to $E(s_{\alpha,\beta},b_\alpha)\cup E'(s_{\alpha,\beta},b_\alpha)$, where $\beta >0$.
 Analogously, the edge $e$ is an \emph{$(\alpha,\beta)$-upper trim} (or $(\alpha,\beta)$-\ding{35}) if the lowest index $i$ for which $e\not\in L_i' \cup R_i'$ corresponds to $E(s_\alpha, b_{\alpha,\beta})\cup E'(s_\alpha, b_{\alpha,\beta})$, where $\beta >0$.
By~(\ref{eqn:order}) and symmetry (reversing the order of clusters) we can assume that $e$ is an $(\alpha, \beta)$-\ding{33}, and $e\in E(s_{\alpha,\beta-\beta'},b_\alpha)$, for some $\beta'>0$,
 where $s_{\alpha,0}=s_\alpha$, and $e\in E(s,b)=L_j$, where $s=s_{\alpha,\beta-\beta'}$ and $b>b_\alpha$, following $E(s_{\alpha,\beta},b_\alpha)$ in our order.
Moreover, we pick $e$ so that $e$ maximizes $i$ for which $e\in L_i' \cup R_i'$. We say that $e$ was ``trimmed'' at the $(i+1)^{\mathrm{th}}$ step.

\begin{figure}[htp]

\centering
\includegraphics[scale=0.7]{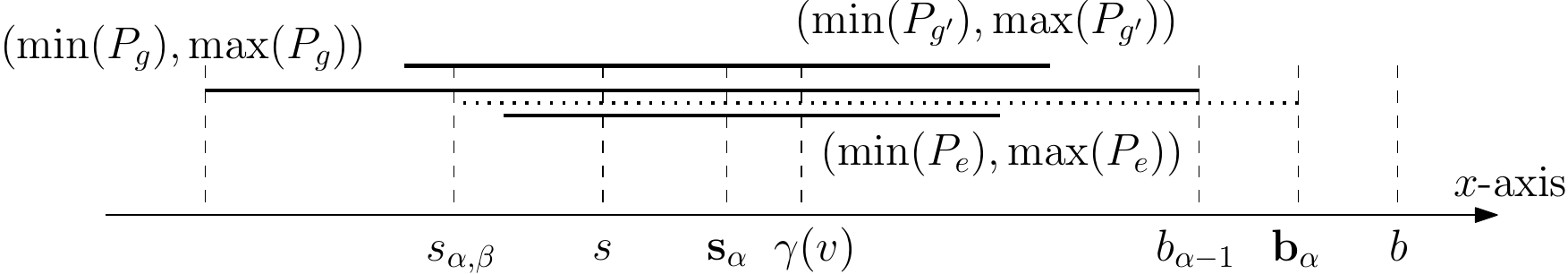}
\caption{Three intervals of clusters corresponding to three paths that start at $v$: $P_e$ that  passes through $e$ and ends in the first vertex in
 the cluster  $s_{\alpha,\beta-1}$, $P_{g'}$ that passes through $g'$ and ends in a leaf, and $P_{g}$ that ends in the first
vertex of the cluster $s''$. (An alternative interval for $P_{g}$ is dotted.) Here, $g'$ was ``trimmed'' before $e$.}
\label{fig:xAxis1}
\end{figure}

 Thus, $e$ is contained in $E(s,b)$ for some $s,b$ such that $E(s,b), E'(s,b)$
follows \\ $E(s_{\alpha,\beta},b_\alpha), E'(s_{\alpha,\beta},b_\alpha)$ in our order. However, it must be that
\begin{equation}
\label{eqn:subset1}
E(s_{\alpha,\beta-\beta'}=s,b_\alpha)\subseteq E(s,b) \  \mathrm{and} \  E'(s,b)\subseteq E'(s_{\alpha,\beta-\beta'}=s,b_\alpha),
\end{equation}
where the first relation follows directly from the fact $b>b_\alpha$ and the second relation is a direct consequence of Observation~\ref{obs:growth}.
In what follows we show that~(\ref{eqn:subset1}) implies that $\mathcal{O}$ satisfies all the required restrictions involving $e$.
We consider an arbitrary four-tuples of edges $e_1',e_2',e_3' \in S_{j,k-1}$ that together with $e$
gives rise to a restriction $\{e_1'e_2'\}\{e_3'e\}$ on $\mathcal{O}$ witnessed by $(s,b)$.
 The incriminating four-tuple must also contain
an element from $E(s,b)\setminus E(s,b_\alpha)$, let us denote it by $f=e_3'$.
Indeed, otherwise by~(\ref{eqn:subset1}) the restriction is witnessed by~$(s,b_\alpha)$ and we are done by induction hypothesis.
 Then $e_1',e_2'\in E'(s,b)$.
For the sake of contradiction we suppose that the order $\mathcal{O}$  violates the restriction $\{e_1'e_2'\}\{ef\}$.
Let $g\in L_{i'}'\subseteq E(s_{\alpha,\beta},b_{\alpha})$, for some $i'$. Note that $g$ exists (see Figure~\ref{fig:xAxis1}) for
if an edge $g' \in  E(s_{\alpha,\beta},b_{\alpha})$ is not in $L_{i'}'$ it means that $g'$ was ``trimmed'' before
$e$ and we can put $g$ to be an arbitrary element from $E(s'',b'')$ minimizing $s''$ appearing before
$E(s_{\alpha},b_{\alpha})$ in our order.

Here, the reasoning goes as follows.
Let $P_{g'}$ denote the path from $v$ passing through $g'$ and ending in a leaf.
Recall that $s_i$'s are decreasing and $b_i$'s are increasing
as $i$ increases. Thus, if we ``trimmed'' $g'$ before $e$, it had to be a \ding{33} by $s_{\alpha,\beta}<s_{\alpha}$, but then
there exists a path starting at $v$ that reaches a cluster with a smaller index than is reached by $P_{g'}$
before reaching even the cluster $b_{\alpha-1}<b_{\alpha}$.
Note that the edge $g$ can be also chosen as an edge in $E(s_{\alpha,\beta},b_{\alpha})$ minimizing $i$ such that the path starting at $v$
passing through $g$ has a vertex in the $i^\mathrm{th}$ cluster. This choice of $g$ plays a crucial role in our proof of the extension of the lemma
for trees.

 Thus, $g\in S_{j,k-1}$ by the choice of $e$, since $e\not\in L_{i'}'$.
Note that $g\in E(s,b_\alpha)$, and hence, $g\in E(s,b)$ by~(\ref{eqn:subset1}).
By Observation~\ref{obs:alegebera}
a restriction $\{e_1'e_2'\}\{fg\}$ is violated as well due to the restriction
 $\{e_1'e_2'\}\{eg\}$, that $\mathcal{O}$ satisfies by induction hypothesis,  witnessed by $(s,b_\alpha)$.  However, by~(\ref{eqn:subset1})  $\{e_1'e_2'\}\{fg\}$
is  witnessed by $(s,b)$ and we reach a contradiction with induction hypothesis.
\end{proof}

By Lemma~\ref{lemma:star} we successfully reduced our question to the problem stated above.
The  problem slightly generalizes the algorithmic question considered by  Hsu and McConnell~\cite{HC03} about testing  0--1 matrices for circular ones property.
An almost identical problem
of testing 0--1 matrices for consecutive ones property was already considered
by Booth and Lueker~\cite{Booth1976335} in the context of interval and planar graphs' recognition.
A matrix has the \emph{consecutive ones}  property if it admits
a permutation of columns resulting in a matrix in which ones  are consecutive in every row.
Our generalization is a special case of the related problem of simultaneous PQ-ordering considered recently by  Bl{\"{a}}sius and Rutter~\cite{BR14}.
In our generalization we allow some elements in the matrices to be ambiguous, i.e., they are allowed to play the roles of both zero or one.
However, we have the property that an ambiguous symbol can have only ambiguous symbols underneath in the same column.

The original algorithm in~\cite{HC03} processes the rows of the 0--1 matrix in an arbitrary order one by one.
In each step the algorithm either outputs that the matrix does not have the circular ones property and stops,
 or produces a data structure called the \emph{PC-tree} that stores \emph{all the
permutations} of its columns witnessing the circular ones property for the matrix consisting of the processed rows.
(The notion of PC-tree is a slight modification of the well-known notion of PQ-tree.)
The columns of the matrix corresponding to the elements of $S$ are in a one-to-one correspondence with the leaves of the PC-tree,
 and a PC-tree produced at every step
is obtained by a modification of the PC-tree produced in the previous step.
Let $\mathcal{Q}_i$ denote the set of permutations captured by the PC-tree after we process the first $i$ rows of the matrix.
Note that $\mathcal{Q}_{i+1}\subseteq \mathcal{Q}_i$.
By deleting some leaves from a PC-tree $T$ along with its adjacent edges we get a PC-tree $T'$
such that $T'$ captures exactly the permutations captured by $T$ restricted to their undeleted leaves.

The original algorithm in~\cite{HC03} runs in a linear time (in the number of elements of the matrix)  The straightforward cubic running time of our algorithm can be improved to a quadratic one.

\paragraph{Running time analysis.}
Let $l$ denote the degree of the center $v$ of the star $G$. 
Let $l_1<\ldots < l_{k-1}$ denote the lengths of paths ending in leaves in $G$ starting at $v$.
Thus, for each $l_i$ there exists such a path of length $l_i$ in $G$.
Let $l_i'$ denote the number of such paths starting at $v$ of length $l_i$.
The number of vertices of $G$ is $n=1+\sum_{i=0}^{k-1}l_il_i'$.
Let $l=\sum_{i=0}^{k-1}l_i$. Let $l'=\sum_{i=0}^{k-1}l_i'$.

Note that a path of length at most $l_i$ cannot ``visit'' more than $l_i$ clusters.
Thus, the number of 0's and 1's in the matrix corresponding to $(G,T)$ is upper bounded
by  $O\left(\sum_{i=0}^{k-1}\left(l'-\sum_{j=0}^{i-1}l_i'\right)l_i^2\right)$.
Indeed, each row of the matrix correponds to a pair of clusters 
and we have  $l'-\sum_{j=0}^{i-1}l_i'$ paths of length at least $l_i$.

In order to obtain a quadratic (in $n$) running time we need to 
upper bound the previous expression by $(\sum_{i=0}^{k-1}l_il_i')^2<n^2$.

We have the following 
 
 $$\sum_{i=0}^{k-1}\left(l'-\sum_{j=0}^{i-1}l_i'\right)l_i^2 \leq l\sum_{i=0}^{k-1}l_il_i' \leq \left(\sum_{i=0}^{k-1}l_il_i'\right)^2$$
 
 where the second inequality is obvious.
To show the first one we proceed as follows.

Consider the region $R$ of the plane bounded by the part of $x$-axis
between $(0,0)$ and $(l',0)$; a vertical line segment
from $(l',0)$ to $(l',l_{k-1})$; and a ``staircase polygonal line'' 
from $(l',l_{k-1})$ to $(0,0)$ with horizontal segments of lengths
$l_{k-1}',l_{k-2}',\ldots, l_0'$ and vertical segments of lenghts
$l_{k-1}-l_{k-2}, l_{k-2}-l_{k-3}, \ldots ,l_1-l_0,l_0$.
Thus, the polygonal line has vertices  \\ $(0,0), (0,l_0),(l_0',l_0), (l_0',l_1), (l_0'+l_1',l_1), \ldots, (l'-l_{k-1}', l_{k-1}), (l', l_{k-1})$.
Let $\mathcal{V}$ denote a three-dimensional set living in the Euclidean three-space obtained 
as a product of $R$ with the interval $[0,l]$ of length $l$ so that
 $\mathcal{V}$ vertically projects to $R$, and we assume that the base $R\times 0$ 
is contained in the $xy$-plane, and the rest of $\mathcal{V}$ is above this plane.
Note that the volume of $\mathcal{V}$ is exactly $ l\sum_{i=0}^{k-1}l_il_i'$.

\begin{figure}[htp]
\centering
\includegraphics[scale=0.7]{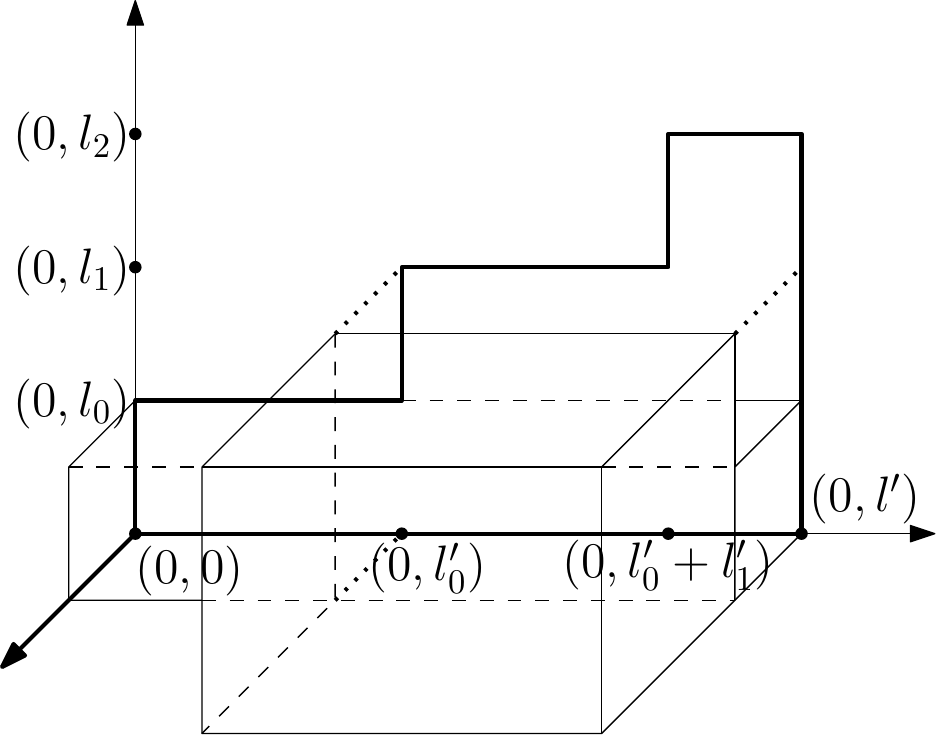}
\caption{The packing of boxes inside $\mathcal{V}$. Only two boxes out of three are shown.}
\label{fig:analysisByBoxes} 
\end{figure}

The expression $\sum_{i=0}^{k-1}\left(l'-\sum_{j=0}^{i-1}l_i'\right)l_i^2$
can be viewed as the sum of volumes of $k$ three-dimensional boxes with integer coordinates. 
Now, it is enough to pack the boxes inside $\mathcal{V}$.
We put the $i^\mathrm{th}$ box with dimensions $\left(l'-\sum_{j=0}^{i-1}l_i'\right)\times l_i \times l_i$ in an axis parallel fashion inside 
$\mathcal{V}$ such that its lexicographically smallest vertex has
coordinates $\left(\sum_{j=0}^{i-1}l_j', 0, \sum_{j=0}^{i-1}l_j\right)$.
It is a routine to check that the boxes are pairwise disjoint and contained in
$\mathcal{V}$ (see Figure~\ref{fig:analysisByBoxes} for an illustration). \\

\subsection{Trees}

In what follows we extend the argument from the previous section  to general trees.
Thus, for the remainder of this section we assume that $(G,\gamma)$ is such that $G$ is a tree.
Let $v$ denote a vertex of $G$ of degree at least three. Refer to Figure~\ref{fig:gvstar}. Let $(G_v, \gamma)$ be such that $G_v$ is a subdivided star
centered at $v$ obtained as follows. For each path $P$ from $v$ to a leaf in $G$ we include to $G_v$ a path $P'$ of the same length,
whose vertex at distance $i$ from $v$ 
has the same $\gamma$ value  as the vertex at distance $i$ from $v$ on $P$.
By slightly abusing notation we denote $\gamma$
also the corresponding function from $V(G_v)$.

\bigskip
\begin{figure}[htp]
\centering
\subfloat[]{\includegraphics[scale=0.7]{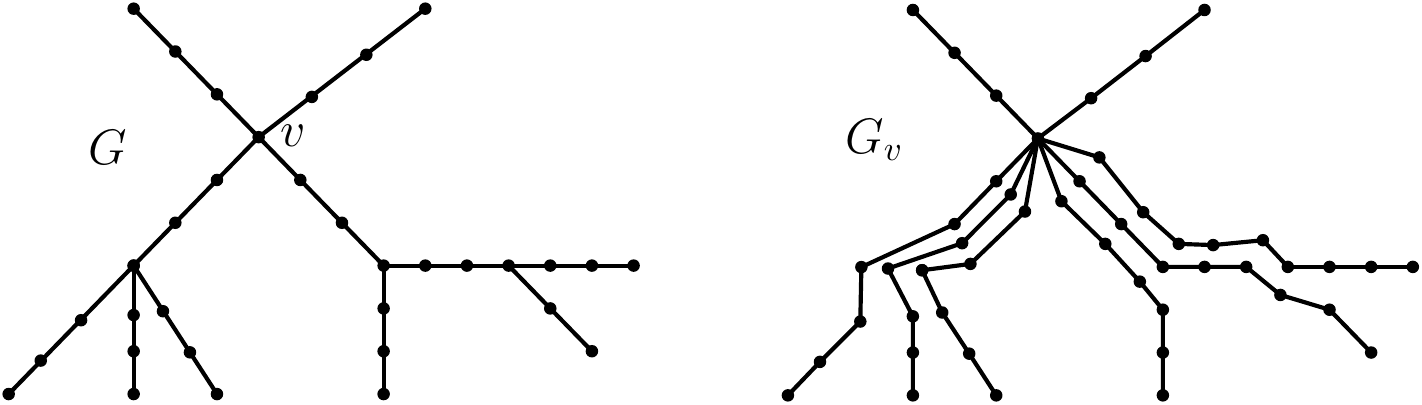}
	\label{fig:gvstar}
	} \hspace{10px}
\subfloat[]{
\includegraphics[scale=0.7]{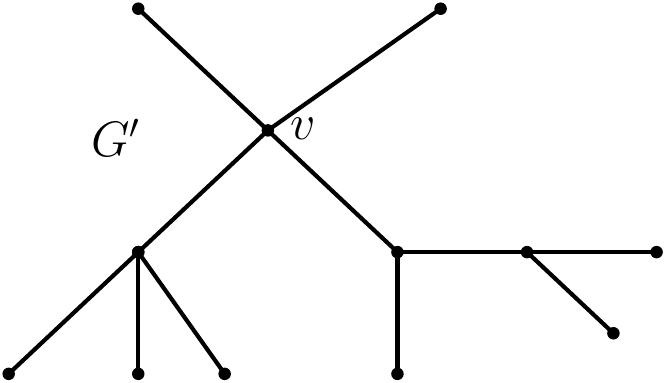}
\label{fig:suppress}}
\caption{(a) A tree $G$ (left) and its subdivided star $G_v$ centered at $v$ (right). (b) The tree $G'$ obtained after suppressing
vertices of degree two in $G$.}
\end{figure}

In the present section we prove Theorem~\ref{thm:AlgTreeXBounded}.

In the light of the characterization from Section~\ref{sec:char} a na\"ive algorithm to test $(G,\gamma)$ for $x$-bounded embeddability 
could use the algorithm from  the previous section to check all   $(G_v, \gamma)$, $v\in V(G)$ with degree at least three,
for $x$-bounded embeddability. However, there are two problems with this approach.
First, we need to take the structure of the tree $G$ into account, since we pass only a limited amount of information about $(G,\gamma)$ to the subdivided stars.
 Second, we need to somehow decide if the common intersection of the sets of possible cyclic orders of leaves of $G$
corresponding to the respective subdivided stars is empty or not.
This would be easy if we did not have ambiguous symbols in our 0--1 matrices corresponding to $(G_v,\gamma)$.

To resolve the first problem is easy, since for each star we can simply
start the algorithm from~\cite{HC03} with the PC-tree isomorphic to $G$, whose all internal vertices are of type $P$ (see~\cite{HC03}
for a description of PC-trees).
This modification corresponds to adding rows into our 0--1 matrix, where each added rows represents the partition of the leaves
of $G$ by a cut edge, or in other words, by a bridge. Let $M_G$ denote the 0--1 matrix representing these rows.
Since we add $M_G$  at the top of the 0--1 matrix with ambiguous symbols corresponding to the given  $(G_v,\gamma)$,
we maintain the property that an ambiguous symbol has only ambiguous symbol underneath.
Moreover, it is enough to modify the matrix only for one subdivided star.
To overcome the second problem we have a work a bit more.

First, we root the tree $G$ at an arbitrary vertex $r$ of degree at least three. Let us suppress all the non-root vertices in $G$ of degree two
and denote by $G'$ the resulting tree (see Figure~\ref{fig:suppress} for an illustration). Let us order $(G_v,\gamma)$'s, where
the degree of $v$ in $G$ is at least three, according to the distance of $v$ from $r$ in $G'$ in a non-increasing manner.
Thus,  $(G_v,\gamma)$ appears in the ordering after all the subdivided stars $(G_u,T)$ for the descendants $u$ of $v$.
For a non-root $v$ we denote by $P_v$ the path in $G$ from $v$ to its parent in $G'$.
Let $I_v=(\min(P_v),\max(P_v))$ denote the interval corresponding to $v$.
Let $I_r=(\min (G), \max (G))$.
Let $M_v$ denote a 0--1 matrix with ambiguous symbols defined by $(G_v,\gamma)$ as in Section~\ref{sec:star}, where each row corresponds
to an interval $(s,b)$ and each column corresponds to an edge incident to $v$ in $G'$ or equivalently to a leaf of $G_v$, and hence, to a leaf of $G$.
In every  0--1 matrix $M_v$ with ambiguous symbols representing $(G_v,\gamma)$ for $v\in G'$ with degree at least three
we delete rows that correspond to intervals $(s,b)$ strictly containing  $I_v$, i.e., $s<\min(P_v)<\max(P_v)<b$.
Let $M_v'$ denote the resulting matrix for every $v$. 

\paragraph{Running time analysis.}
We obtain a cubic running time due to the fact that there exists $O(|V(G)|)$ subdivided
stars $(G_v, T)$, $v\in V(G)$, each of which accounts for $O(|V(G)|^2)$ rows in $M$.

\paragraph{Definition of the matrix $M$.}
Refer to Figure~\ref{fig:matrica}.
Let us combine the obtained matrices $M_G$ and all $M_v'$ for $v\in V(G)$, in the given order so that the rows of $M_v'$ for some $v$ are
added at the bottom of already combined matrices. Let $M'$ denote the resulting matrix.
We replace in $M'$ the minimum number of 0--1 symbols by ambiguous symbols so that the resulting matrix has only
ambiguous symbols below every ambiguous symbol in the same column. Let $M$  denote the resulting matrix.
It remains to show the following lemma.

\begin{lemma}
\label{lemma:treelemma}
The matrix $M$ has circular ones property if and only if $(G,\gamma)$ admits an $x$-bounded embedding.
\end{lemma}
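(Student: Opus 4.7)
The plan is to apply Theorem~\ref{thm:characterization}: since $G$ is a tree it contains no cycles, so the ``trapped vertex'' obstruction is vacuous and an $x$-bounded embedding exists iff one can choose a combinatorial embedding of $G$ (equivalently, a rotation at each branching vertex) producing no infeasible interleaving pair of an $i$-cap and a $j$-cup. Any two paths in a tree share a subpath, so I would split the interleaving-pair analysis into two regimes: pairs meeting at a single branching vertex $v$, handled locally by Lemma~\ref{lemma:star} on the subdivided star $(G_v,\gamma)$, and pairs whose common subpath has positive length, which couple the rotations at two distinct branching vertices and require a rerouting argument.

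For the ``only if'' direction, extract from an $x$-bounded embedding $\mathcal{E}$ the cyclic order $\pi$ in which the leaves of $G$ appear along the counter-clockwise traversal of the outer face. Every bridge of $G$ partitions the leaves into two classes that form two contiguous arcs of $\pi$, so $\pi$ satisfies every row of $M_G$. At each branching vertex $v$, the rotation of $\mathcal{E}$ at $v$ together with the $v$-to-leaf paths yields an $x$-bounded embedding of $(G_v,\gamma)$ whose leaf order is exactly the restriction of $\pi$; Lemma~\ref{lemma:star} then forces $\pi$ to satisfy all rows of $M_v$, and in particular of $M_v'$. Hence $\pi$ satisfies $M'$, and passing from $M'$ to $M$ only replaces certain $0/1$ entries by $*$, which relaxes each circular-ones constraint, so $\pi$ also witnesses the circular ones property of $M$.

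For the ``if'' direction, let $\pi$ be a cyclic ordering witnessing the circular ones property of $M$. Satisfaction of $M_G$ together with the abstract tree structure determines a unique combinatorial planar embedding $\mathcal{F}$ of $G$ whose outer-face leaf order is $\pi$: in a tree, a rotation at each branching vertex is specified by the arcs of $\pi$ meeting each incident subtree, and the bridge constraints in $M_G$ guarantee these local choices glue consistently. I would then verify that $\mathcal{F}$ has no infeasible interleaving pair. For pairs meeting at a single branching vertex $v$, Lemma~\ref{lemma:star} applies provided $\pi_v := \pi|_{\mathrm{leaves}(G_v)}$ satisfies all of $M_v$, not only $M_v'$. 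For the deleted rows, whose intervals $(s,b)$ strictly contain $I_v$, both legs of the cap and both legs of the cup must exit the subtree rooted at $v$ through the single path $P_v$, which forces the corresponding rotational restriction at $v$ to coincide with a restriction already witnessed either by $M_G$ or by the matrix $M_u$ at an ancestor $u$ of $v$; the downward propagation of the $*$ symbols when constructing $M$ from $M'$ is designed precisely so that this implication holds. Pairs with shared subpath $Q$ of positive length and endpoints $u,w$ are reduced to the previous case by rerouting: I would replace one cap leg leaving $Q$ at $u$ by a homotopic path in the subtree hanging off $u$ that reaches the same $\gamma$-level, obtaining an equivalent interleaving pair meeting only at $w$ (whose infeasibility has already been ruled out).

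The main obstacle is the combined justification for deleting the rows with $(s,b)\supsetneq I_v$ and the rerouting step. One has to verify that $G$ admits the required rerouting whenever the offending interleaving pair exists, that the rerouting preserves the algebraic intersection number, and that the reduced constraint is actually among the rows present in $M$ after the $*$-propagation. I expect the careful bookkeeping to parallel the choice of the edge $g$ in the proof of Lemma~\ref{lemma:star} (where an edge reaching the smallest possible cluster index was selected) together with an induction on the distance from $v$ to the root of $G'$, so that ``long'' constraints at $v$ are always traceable to ``short'' constraints closer to the root.
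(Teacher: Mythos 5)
Your easy direction (embedding $\Rightarrow$ circular ones for $M$) is fine and matches the paper, but the hard direction has a genuine gap at exactly the two places you flag. First, your justification for discarding the rows of $M_v$ with $(s,b)\supsetneq I_v$ rests on a false claim: the legs of such a cap/cup need \emph{not} exit the subtree rooted at $v$ through $P_v$. The interval $I_v=(\min(P_v),\max(P_v))$ constrains only the path from $v$ to its parent in $G'$, not the $\gamma$-values of $v$'s descendants, so an $s$-cap and $b$-cup with $s<\min(P_v)<\max(P_v)<b$ can meet at $v$ with all four legs descending into $v$'s subtree; its row is deleted from $M_v'$, and nothing in your argument recovers it. (Also, within your ``single branching vertex'' case all four legs cannot share $P_v$ anyway.) The paper's mechanism is different: such a constraint is re-encoded in the row $(s,b)$ of $M_u'$ at the nearest ancestor $u$ with $(s,b)\not\supsetneq I_u$ (the root always qualifies), using the crucial property that if $u\notin P_1\cup P_2$ then neither path extends to a path through $u$, so the four subpaths are still $u$-represented by suitable leaf columns.

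Second, once the constraint lives at that ancestor, the relevant columns may have been turned into ambiguous symbols, and showing that the restriction is nonetheless implied by the surviving rows is the actual content of the proof — your sentence that the $*$-propagation ``is designed precisely so that this implication holds'' assumes the conclusion. The paper establishes it via the containments $L_y(s,b')\subseteq L_v(s,b)$ and $L_y'(s,b')\supseteq L_v'(s,b)$ (relation~(\ref{eqn:ahaha})), a double induction extending Lemma~\ref{lemma:star} with a carefully chosen untrimmed leaf $l''$ playing the role of the edge $g$ (chosen via a subtree minimizing the reached cluster index, and shown never to be trimmed when processing intermediate stars $G_{w''}$), and the rows of $M_G$ to handle the case where some untrimmed sibling descendant of $w$ remains. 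Finally, your rerouting step for pairs sharing a positive-length subpath is both unjustified (the subtree hanging off $u$ need not reach the same $\gamma$-level, and equality of the induced order restriction under rerouting is not argued) and unnecessary: the paper treats degenerate and non-degenerate shared subpaths uniformly by taking the vertex $v'$ of $P_1\cap P_2$ closest to the root and then passing to the appropriate ancestor as above.
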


\begin{proof}
We claim that $M$ has circular ones property if and only if $(G,\gamma)$ admits an $x$-bounded embedding.
The ``if'' direction is easy. For the ``only if'' direction we proceed as follows.

A path $P$ in $G$ starting at $v$ is \emph{v-represented} by a column of $M$ or $M_v'$ if the column
corresponds to a leaf $u$ connected with $v$ by a path containing $P$. Note that a path $P$ can be $v$-represented
by more than one column.
A path $P$ in $G$ starting at $v$ is \emph{limited} by the interval $(s,b)$
if $s<\min(P)<\max(P)<b$. Let us assume that $P$ joins $v$ with a leaf.
Note that the column of $M_v'$ $v$-representing $P$ limited by $(s,b)$
 contains the ambiguous symbol in the row corresponding to $(s,b)$.
We consider an interleaving pair of an $s$-cap $P_1$ and $b$-cup $P_2$ that are not disjoint.
Since $G$ is a tree, $P_1$ and $P_2$ share a sub-path $P'$
(that could degenerate to a single vertex). Let $v'$ denote the vertex of $P'$ closest to the root $r$.
If the interval $(s,b)$  does not strictly contain $I_{v'}$ we let $v:=v'$.
If the interval $(s,b)$  strictly contains $I_v$ we let $v$ be the closest ancestor of $v'$ in $G'$ for which
 $(s,b)$ does not strictly contain $I_{v}$. Note that at least the root would do.
 Note that it is possible that $v$ belongs to $P'=P_1 \cap P_2$, that it does not belong to $P_1\cup P_2$, and that it belongs
 to exactly one of $P_1$ and $P_2$. However, by the definition of $v$, if $v$ does not belong to $P_1 \cup P_2$ then none of the paths $P_1$ and $P_2$ can be extended into a path containing $v$.  See Figure~\ref{fig:cases}.

  This property is crucial, and it implies that a row of $M_v'$ corresponding to $(G_v,\gamma)$ gives rise to the restriction on the order
 of leaves corresponding to the pair $P_1$ and $P_2$. This in turn implies that in $M'$ there
 exists a row having
ones in two columns $v$-representing $P_{11}$ and $P_{12}$,
where $P_{i1}$ and $P_{i2}$ denote the paths in $G$ joining $v$ with the end vertices of $P_i$ for $i=1,2$,
and zeros in two columns $v$-representing $P_{21}$ and $P_{22}$ (or vice versa),
However, we need to show that there exists such a row in $M$, or similarly as in the case of subdivided stars, that the corresponding restriction on the order of leaves of $G$ is implied by
other rows, if such a row does not exist.

 \bigskip
\begin{figure}[htp]
\centering
\includegraphics[scale=0.7]{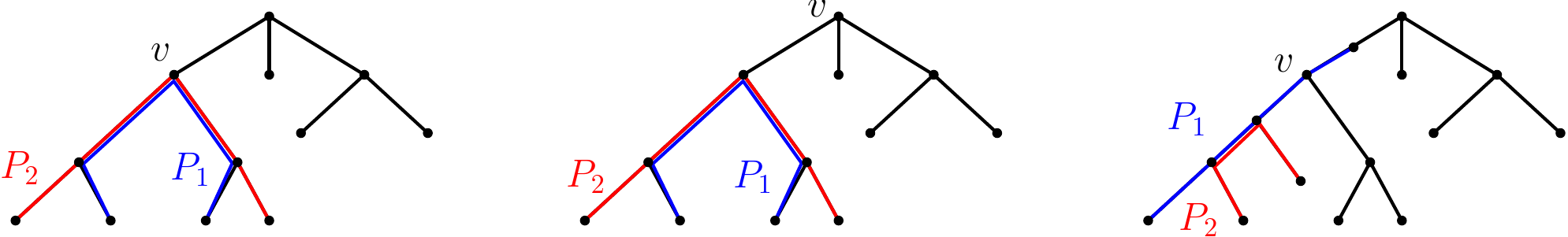}
\caption{Three distinct placements of $v$. On the left, $v$ belongs to $P_1 \cap P_2$, in the middle, $v$ does not belong to $P_1 \cup P_2$, and
on the right, $v$ belongs to exactly one of $P_1$ and $P_2$.}
\label{fig:cases}
\end{figure}

\bigskip
\begin{figure}[htp]
\centering
{
\includegraphics[scale=0.7]{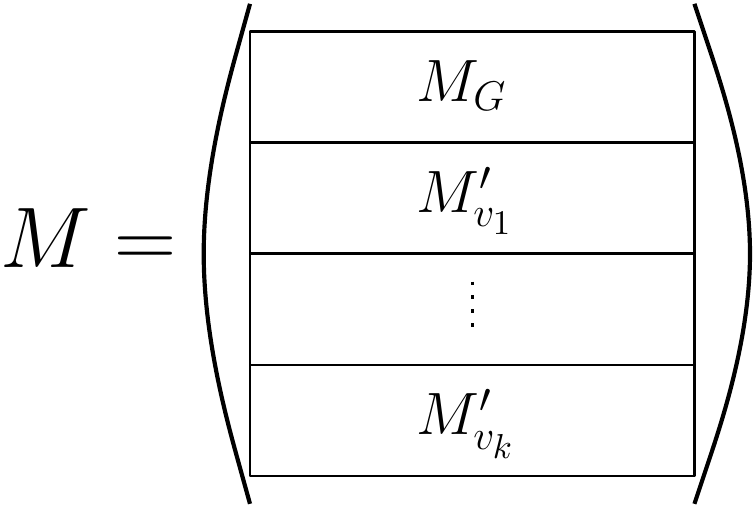}
}
\hspace{10px}
{
\includegraphics[scale=0.7]{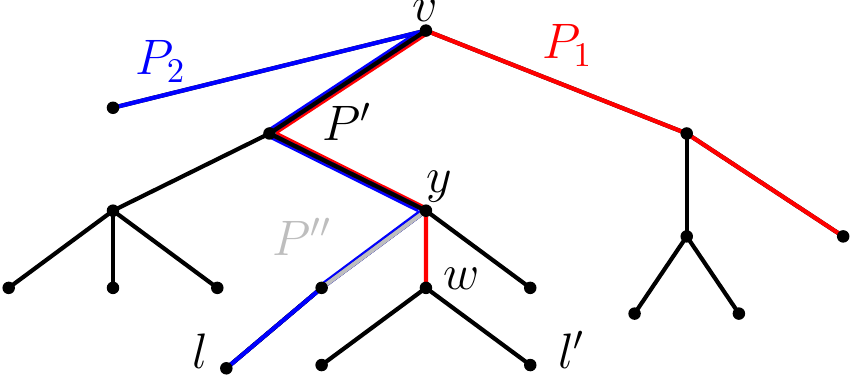}
}
\caption{(a) The composition of the matrix $M$. (b) An interleaving pair of an $s$-cap and $b$-cap $P_1$ and $P_2$ corresponding to
a restriction on a desired cyclic ordering of the leaves of $G$ involving a leaf $l'$ that was ``trimmed'' while processing $(G_y,\gamma)$.}
\label{fig:matrica}
\end{figure}

The PC-tree algorithm processes $M$ from the top row by row.
Enforcing an ambiguous symbol below every ambiguous symbol in $M_y'$ corresponds to ``trimming'' $G$ by shortening every
path $P$ joining $y$ with a leaf $l$ in $G_y$
 limited by the interval corresponding to the currently processed row so that $P$ ends in the parent of $l$ in $G'$.
  Note that we never ``trim'' the path starting at $y$ going towards its parent in $G'$, when
processing $(G_y,\gamma)$, since such a path can be limited only by an interval strictly containing $I_y$.
Also whenever we ``trim'' a path starting at $y$ we keep at least four paths starting at $y$, and thus, at least three paths
going from $y$ towards leaves.
Thus, if $w$, the other end vertex of $P_{\alpha\beta}$ than $v$, is not a descendant of $y$,
there exists at least one column $c$ of $M_y'$ such that $c$ $v$-represents the path $P_{\alpha\beta}$ and $c$ does not contain an ambiguous symbol in $M_y'$.
Unfortunately, if $w$ is a descendant of $y\not=v$, we could ``trim'' all the leaves that are descendants of $w$.
In this case we would like to argue similarly as in the case of a subdivided star that by introducing an ambiguous symbol in $M$  in the row corresponding
to $(s,b)$ of $M_v'$ in  all the columns $v$-representing $P_{\alpha\beta}$,
 we do not disregard required restrictions imposed on the order of leaves of $G$ by our characterization.

Let $y$ be a descendant of $v$ or equal to $v$.
Suppose that  we ``trimmed'' a descendant $l'$ of $w$ or $w$ (if $w$ is a leaf) that is a descendant of $y$, while processing $(G_y,\gamma)$ that appears, of course, before $(G_v,\gamma)$ in
our order or equals $(G_y,\gamma)$. 
Let $L_z(s'',b'')$ and $L_z'(s'',b'')$, respectively, be defined analogously as $E(s'',b'')$ and $E'(s'',b'')$ in Section~\ref{sec:star}
for some $(G_z,T)$ with leaves of $G$ playing the role of the edges incident to the center of the star and some $s''$ and $b''$, $s''<b''$.
We also have an order corresponding to~(\ref{eqn:order}) for every $(G_z,T)$.
By reversing the order of clusters, without loss of generality we can assume that $w$ is an end vertex of $P_1$ which is an $s$-cap.
Refer to Figure~\ref{fig:xAxis23}(a).
If $\gamma(w)\in I_y$ then $w$ cannot be a descendant of $y$, since we have  $I_y\subseteq (\min (P_1\setminus w), \max(P_1 \setminus w))$ and $w\not\in (\min (P_1\setminus w), \max(P_1 \setminus w))$.
 Thus, $\gamma(w)\not\in I_y$, and hence, $s\not\in I_y$. Moreover, $s<\min(I_y)$, since $P_1$ is an $s$-cap.
Since the descendant $l'$ of $w$ was ``trimmed'' while processing $(G_y,\gamma)$ by~(\ref{eqn:order}) there exists
 $L_y(s',b')$ for some $s',b'$ not containing $l'$.
Since the interval $(s',b')$ does not strictly contain $I_y$, contains $s$, and $s<\min(I_y)$ we have $s'\not\in I_y$, $s'<s$ and $b'\in I_y$.
  Moreover, if $y\not= v$ for some row of $M_y'$ we have the corresponding $L_y(s,b')$ containing $l'$
   since there exist at least two paths in $G_y$ from $y$ whose initial pieces correspond to the path from $y$ toward its parent as $v$ has degree at least three. Note that $L_y(s,b')$ can contain only descendants of $y$.
We claim the following (the proof is postponed until later)
\begin{equation}
{L_y(s,b')\subseteq L_v(s,b) \ \mathrm{and} \
L_y'(s,b')\supseteq L_v'(s,b)
\label{eqn:ahaha}}
\end{equation}
  Now, by using~(\ref{eqn:ahaha}) we can extend  the double-induction argument from Lemma~\ref{lemma:star}.

In the same manner as in the previous section $R_i$ and $L_i$ correspond to the $i^\mathrm{th}$ row of $M'$.
We define $\mathcal{S}'$ recursively as $\mathcal{S}_m'$, $m$ is the number of rows of $M'$, where
$\mathcal{S}_1'=\{L_1',R_1'| \ L_1'=L_1, \ R_1'=R_1 \}$ and $\mathcal{S}_{j}'=\mathcal{S}_{j-1}\cup \{L_j',R_j'| \ L_j'=L_j\cap (L_{j-1}'\cup R_{j-1}'),
 R_j'=R_j\cap (L_{j-1}'\cup R_{j-1}')\} $. Let $S_{j,0}=L_j' \cup R_j'$.
We need, in fact, to apply the condition~(\ref{eqn:ahaha}) only when a new leaf $l'$, such that $S_{j,k}=S_{j,k} \cup \{l'\}$,
 added to $S_{j,k-1}$ (playing the role of the edge $e$ from the proof of Lemma~\ref{lemma:star})
 is the only descendant of $w$ in ${S}_{j,k}$.
For the other descendants of $w$, $M_G$ forces the corresponding restriction.
  More formally, we need to show that a cyclic ordering of leaves $\mathcal{O}$ respecting restrictions imposed by the first $j-1$ rows, and
  the columns corresponding to $S_{j,k-1}$ in the $j^\mathrm{th}$ row, respects also restrictions imposed by
  the columns corresponding to $S_{j,k}$ in the $j^\mathrm{th}$ row. Let $\{l'l_1\}\{l_2l_3\}$ be such a restriction for a leaf $l'$ trimmed the most
  recently similarly as for $e$ in the case of  subdivided stars.
  Let the restriction  $\{l'l_1\}\{l_2l_3\}$, $l_1\in L_v(s,b)\setminus L_y(s,b')$, $l'\in L_v(s,b)$  and $l_2,l_3\in L_v'(s,b)$, induced by $S_{j,k}$ in the $j^\mathrm{th}$ row
 correspond to pair of an $s$-cap $P_1$ and $b$-cup $P_2$,
  where the leaf $l'$ $v$-represents a sub-path $P_{\alpha\beta}$ (for $v$ and $P_{\alpha\beta}$ defined as above)
  of $P_1$ ending in $w$ such that $\gamma(w)=s$.

\bigskip
\begin{figure}[htp]
\centering
{
\includegraphics[scale=0.7]{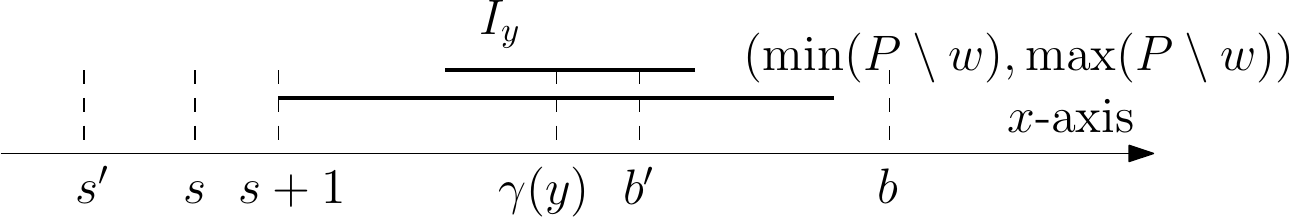}
}
\hspace{5px}
{
\includegraphics[scale=0.7]{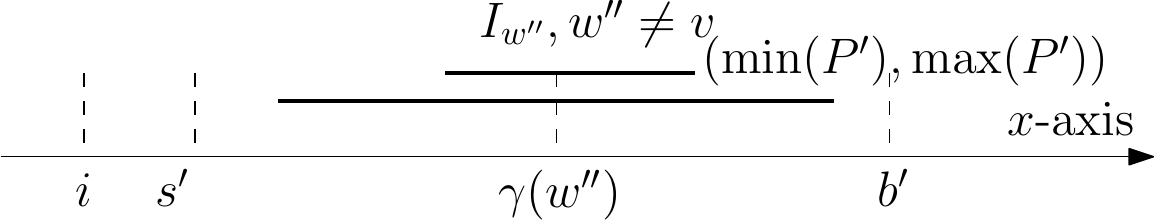}
}
\caption{(a) The interval corresponding to $P\setminus \{w\}$ containing $I_y$. (b) ) The interval corresponding to $P$ containing $I_{w'}$.}
\label{fig:xAxis23}
\end{figure}

  First, we assume that $y=v$. First, note that $l_2,l_3\in E_v'(s,b')$ by~(\ref{eqn:subset1}).
  We proceed by the same argument as in Lemma~\ref{lemma:star}, since we have~(\ref{eqn:order})
  for $G_v$. Here, $s,s',b$ and $b'$, respectively, plays the role of $s, s_{\alpha,\beta},b$ and $b_\alpha$.
After we find $l''\in  L_v(s,b')$ that was ``trimmed'' after $l'$ by induction hypothesis we have
   $\{l'l''\}\{l_2l_3\}$ and $\{l''l_1\}\{l_2l_3\}$. Thus, by Observation~\ref{obs:alegebera} we are done.
The only problem could be that we cannot find a leaf $l''\in L_v(s,b')$ (analogous to the edge $g$) in the proof of Lemma~\ref{lemma:star}
that was not ``trimmed'' before $l'$
   since all such leaves could be potentially ``trimmed'' while processing previous $(G_{y'},T)$. However, recall that we ``trim''
   only descendant of such $y'$ while processing $(G_{y'},T)$.  We show that we can pick $l''$ such that this does not happen.

Refer to Figure~\ref{fig:xAxis23}(b).
    Indeed, consider the path from $v$ to its descendant $w'$ that is an ancestor of $l''$
   such that $w'$ is an end vertex of an $s'$-cap witnessing presence of $l''$ in $L_v(s',b')$.
 Among all possible choices of $l''$ and $w'$, where, of course $l''$ is
   in the subtree rooted at $w'$, let us choose the one minimizing $i$ such that the subtree rooted at $w'$ contains a vertex in the $i^\mathrm{th}$  cluster.
   Moreover, we assume that $l''$ can be reached from $w'$ by following a path passing through the vertex in the $i^{\mathrm{th}}$ cluster.
   Let $P'$ denote the path between $v$ and $w'$.
   Let $w''$ be an ancestor of $w$  that belongs to $V(G')$ and is not an ancestor of $v$.
(For other choices of $w''$ we cannot trim $l''$ while processing $G_{w''}$.)
  If $v=w''$ we show that we cannot ``trim''  $l''$ while processing $G_{w''}$ by the argument from Lemma~\ref{lemma:star}.
Otherwise, suppose that $w''\not=v$.
   Since $I_{w''}\subseteq (\min(P'),\max(P'))$ and $b_\alpha\not\in (\min(P'),\max(P'))$, we cannot ``trim'' the descendant $l''$ of $w'$ while processing $(G_{w''},T)$. For otherwise  $l''$  is a \ding{33}
   and we would get into a contradiction with the choice of $w'$, since there exists a descendant $l'''$ of $w''$
   in a cluster with the index smaller than $i$ good for us.
    The choice of $w''$ and $l'''$ is good since the interval corresponding
     to the row of $M$ with the maximal index, in which a column of the leaf $l''$ has 0 or 1, does not strictly contain $I_{w''}$,
 and thus, the path from $v$ towards $l'''$ that ends in the cluster $s'$ never visits $b'$ cluster.

  Second, we assume that $y\not=v$. In this case we also have that $l_2,l_3\in E_y'(s,b')$ by~(\ref{eqn:ahaha}).
  We again repeat the argument from
Lemma~\ref{lemma:star} in the same manner as for the case  $y=v$.
  Here, again $s,s',b$ and $b'$, respectively, plays the role of $s, s_{\alpha,\beta},b$ and $b_\alpha$.
  Also a leaf $l''\in L_y(s,b')$ that was not ``trimmed'' before $l'$ playing the role of $g$ is found by the analysis in the previous paragraph,
where $y$ plays the role of $v$.

  If $l'$ is not a descendant of $v$, recall that
  there exists at least one ``untrimmed'' leaf $l''$ such that $l''$ $v$-represents the path $P_{\alpha\beta}$.
  Then by induction hypothesis a restriction $\{l''l_1\}\{l_2l_3\}$ witnessed by $(s,b)$ corresponding to the  $j^{\mathrm{th}}$ row gives
  the desired restriction $\{l'l_1\}\{l_2l_3\}$ on $\mathcal{O}$ by Observation~\ref{obs:alegebera}, since we have $\{l''l'\}\{l_2l_3\}$ by $M_G$.

It remains to prove~(\ref{eqn:ahaha}). Refer to Figure~\ref{fig:matrica}(b).
If $v=y$ we are done by the argument in Lemma~\ref{lemma:star}. Thus, we assume that $y\not= v$.
 We start by proving the first relation $L_y(s,b')\subseteq L_v(s,b)$.
If a leaf descendant $l$ of $y$ is in $L_y(s,b')$ then $l\in L_v(s,b)$, since $b>b'>\gamma(y)$ by $b'\in I_y,b\not\in I_y$,
and $s\not\in (\min(P'),\max(P'))$, where $P'$ is a path between $y$ and $v$.
Then the corresponding witnessing path from $y$ towards $l$ of the fact $l\in L_y(s,b)$ can be extended by $P'$ to the
path witnessing $l\in L_v(s,b')$.
In order to prove the second relation $L_y'(s,b')\supseteq L_v'(s,b)$ we first observe that
$L_y'(s,b')$ contains all the leaves that are not descendants of $y$, since $b'\in I_y$.
On the other hand, if a leaf descendant $l$ of $y$ is in $L_v'(s,b)$ then $l\in L_y'(s,b')$, since the corresponding witnessing path
from $v$ toward $l$ of the fact $l\in L_v'(s,b)$ contains a sub-path $P''$ starting at $y$ and ending in the cluster $b'$ due to $b>b'>\gamma(y)$
witnessing $l\in L_y'(s,b')$.

Thus, if the 0--1 matrix $M$ with ambiguous symbols has circular ones property then there exists an embedding of $G$ such that
every interleaving pair is feasible, and hence, by Theorem~\ref{thm:characterization}  graph $(G,\gamma)$ admits an $x$-bounded embedding.
\end{proof}

\section{Theta graphs}

\label{sec:theta}

In this section we extend result from the previous one to the class of   $x$-bounded drawings
whose underlying abstract graph is a \emph{theta graph} defined as a union of internally vertex disjoint paths joining a pair of vertices that we call \emph{poles}. Hence, in the present section $(G,\gamma)$ is such that $G$ is a theta graph. 
Similarly as in Section~\ref{sec:tree} in what follows we assume~$(*)$.

Our efficient algorithm for testing $x$-bounded embeddability of $(G,\gamma)$ relies on the work of 
Bl\"asius and Rutter~\cite{BR14}. We refer the reader unfamiliar with this work
to the paper for necessary definitions. Thus, our goal is to reduce the problem to the 
problem of finding an ordering of a finite set that satisfies constraints 
given by a collection of PC-trees.\footnote{Despite the fact that~\cite{BR14} has the word
``PQ-ordering'' in the title, the authors work, in fact, with un-rooted  PQ-trees, which are our
PC-trees.}

The construction of the corresponding instance $\mathcal{I}$ of the simultaneous PC-ordering for the given $(G,\gamma)$ 
is inspired by~\cite[Section 4.2]{BR14}.
Thus, the instance consists of a star $T_C$ having a P-node in the center  (\emph{consistency tree}),
and a collection of \emph{embedding trees} $T_0,\ldots, T_m$ constructed analogously
as in Section~\ref{sec:tree}.
The DAG (directed acyclic graph) representing $\mathcal{I}$ contains edges $(T_0,T_C,\varphi_1)$, $(T_0,T_C,\varphi_2)$, and $(T_i,T_{i+1},\phi_i)$ for $i=0,\ldots ,m-1$~\footnote{see~\cite[Section 3]{BR14} for
the definition of the DAG representing $\mathcal{I}$.} . Tree $T_0$ (see Figure~\ref{fig:thetaTree}) will consist, besides leaves and their incident edges, only of a pair of $P$-nodes joined by an edge.
It follows that the instance is solvable in a polynomial time by~\cite[Theorem 3.3 and Lemma 3.5]{BR14}\footnote{Using the terminology of~\cite{BR14} the reason is that the instance is 2-fixed.
Therein in the definition of $\mathrm{fixed}(\mu)$ it is assumed that 
every P-node $\mu$ in a  tree $T$ fixes in every parental tree $T_i$ of $T$ at most one 
P-node. This is not true in our instance due to the presence of multi-edges in the DAG.
However, multi-edges are otherwise allowed in the studied model.
 We think that the authors, in fact, meant to say that for every incoming edge of $(T_i,T)$
 the node  $\mu$ fixes in the corresponding projection of $T_i$ to the leaves of $T$ at most one P-node. Nevertheless, we can still fulfill the condition by getting rid of the multi-edge as follows. We introduce two additional copies of $T_0$, let's say $T'$ and $T''$,
 and instead of  $(T_0,T_C,\varphi_1)$ and $(T_0,T_C,\varphi_2)$ we
 put $(T',T_C,\varphi_1)$ and $(T'',T_C,\varphi_2)$. Finally, 
 we put $(T',T_0,\varphi)$ and $(T'',T_0,\varphi)$, where $\varphi$ is the identity.
 It is a routine to check that the resulting instance is 2-fixed.}
The running time follows by~\cite[Theorem 3.2]{BR14}.

\begin{figure}[htp]
\centering
\includegraphics[scale=0.7]{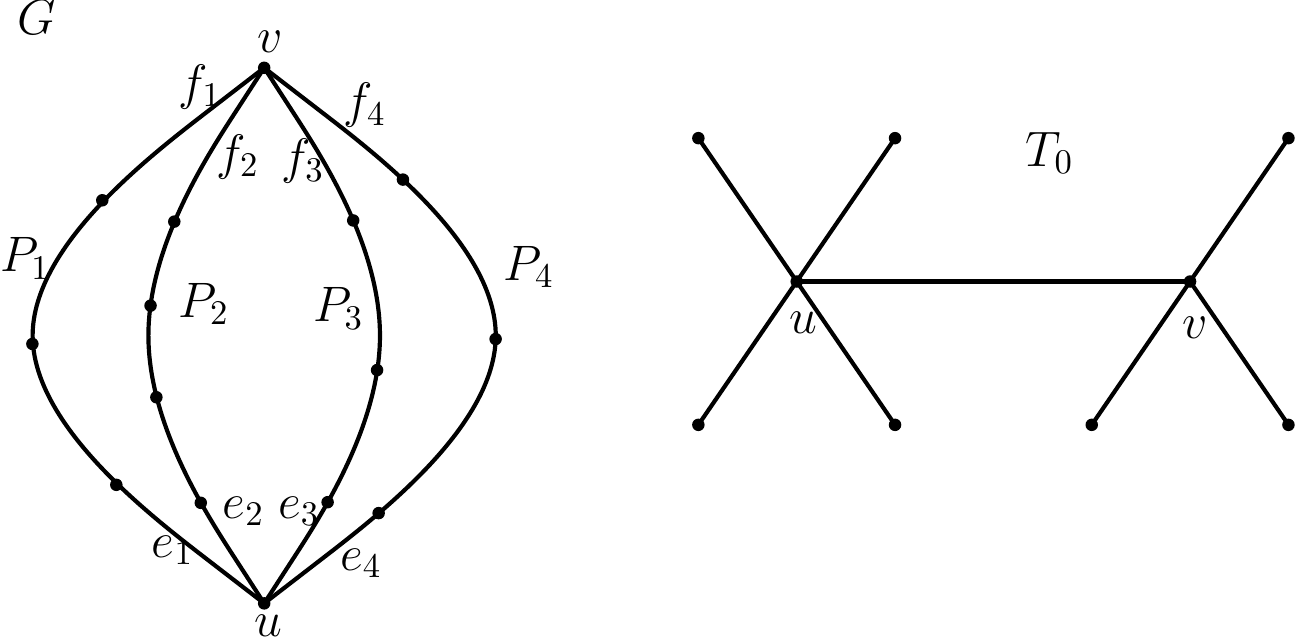}
\caption{The theta-graph $G$ (left) and the PC-tree $T_0$ in
its corresponding $\mathcal{I}$ instance.}
\label{fig:thetaTree}
\end{figure}

\paragraph{Description of $(T_0,T_C,\varphi_1)$, $(T_0,T_C,\varphi_2)$.}
Let $u$ and $v$ denote the poles of $G=(V,E)$.
Let $e_1,\ldots, e_n$ denote the edges incident to $u$.
Let $f_1,\ldots, f_n$ denote the edges incident to $v$.
We assume that $e_i$ and $f_i$ belong to the same  path $P_i$ between $u$ and $v$.
The non-leaf vertices of $T_0$ are $P$-nodes $u$ and $v$, corresponding 
to $u$ and $v$ of $G$, joined by an edge. The $P$-node $u$ is adjacent to $n$ leaves 
and $v$ is adjacent to $n-1$ leaves.
The tree $T_C$ is a PC-tree with a single $P$-node and $n$ leaves.
The map $\varphi_1$ maps injectively every leaf of $T_C$ except one to a leave adjacent to $u$
the remaining leaf is mapped to an arbitrary leave of $v$.
The map $\varphi_2$ maps injectively every leaf of $T_C$ except one to a leave adjacent to $v$
the remaining leaf is mapped arbitrarily such that the map is injective.\\

Let $I_{i}= (\min(P_i), \max(P_i))$ for every $i=1,\ldots,n$.
Let $I_j$ be chosen so that $I_i\subseteq I_j$ implies $i=j$.
Thus, $I_j$ is spanning a minimal number of clusters among $I_i$'s.
Let $I_{\alpha}:=I_j$ and $P_{\alpha}:=P_j$.
The leaves of $\varphi_1$ are mapped to the leaves of $u$ corresponding to edges 
incident to $u$ except for $e_{\alpha}$ and a single leave corresponding to the position of the 
the outer-face,
and we have the analogous compatible correspondence for $\varphi_2$ except that $v$ has 
no leave representing the outer-face that $\varphi_2$ avoids.
Since $T_C$ is the consistency PC-tree, the inherited cyclic orders of end pieces of
paths $P_i$ corresponding to $\varphi_1$ and $\varphi_2$ have opposite orientations. Thus,  one of the arcs  $(T_0,T_C,\varphi_1)$ and $(T_0,T_C,\varphi_2)$ is orientation preserving and the other one
orientation reversing.
\bigskip

\begin{figure}[htp]
\centering
\includegraphics[scale=0.7]{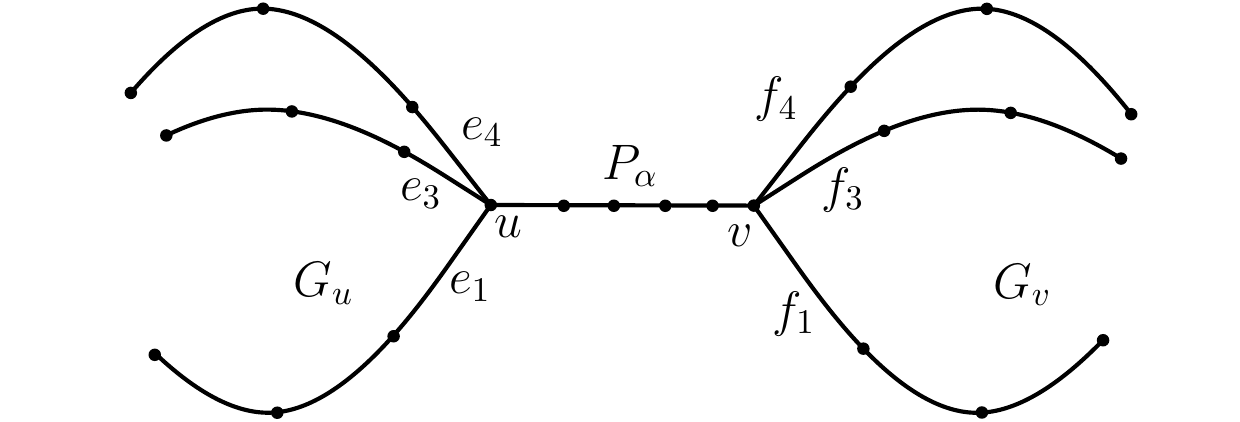}
\caption{The graph $G'$ consisting of $G_u$, $G_v$ and $P_\alpha$.}
\label{fig:thetaTree2}
\end{figure}

\paragraph{Description of $(T_i,T_{i+1},\phi_1)$.}
We will construct $(G',\gamma')$, where $G'$ is a tree,~(see Figure~\ref{fig:thetaTree2}) yielding desired embedding trees $T_i$'s, for $i>0$, in
$\mathcal{I}$. The tree $G'$ is obtained as the union of a pair of vertex disjoint
subdivided star $G_u$ and $G_v$, and the path $P_\alpha$ (defined in the previous paragraph) joining the centers of $G_u$ and $G_v$.
The graph $G_u$ is isomorphic to  $G\setminus (\{v\}\cup E(P))$ 
and $G_v$ is isomorphic to $G\setminus (\{u\}\cup E(P))$.
 The assignment $T'$ of the vertices to clusters
is inherited from $(G,\gamma)$.

The path $T_0T_1\ldots T_m$ in the  DAG of $\mathcal{I}$ corresponds to a variant $M'$ 
of the matrix $M$ from Section~\ref{sec:tree} constructed for $G'$ with additional rows between $M_{G'}'$ and $M_u'$.
 The leaves of $T_1$ corresponds to the columns of $M'$. \\

\noindent 
(i) The tree $T_0$ corresponds to $M_{G'}$. \\ (ii) $T_1$ takes care of the trapped vertices (that
we did not have to deal with in the case of trees).
 \\ (iii) $T_1,\ldots, T_l$, for some $l$, corresponds to $M_u'$, and  $T_{l+1},\ldots, T_m$, to $M_v'$. \\
 
The described trees naturally correspond to constraints on the rotations at $u$ and $v$.
Before we proceed with proving the correctness of the algorithm we describe the last missing
piece of the construction, the PC-tree $T_1$.

\paragraph{Description of $T_1$.}
The PC-tree $T_1$ corresponds to the set of constraints given by the following 0--1 matrix $M''$.
The leaves corresponding to $e_{\alpha}$ are all the leaves incident to $v$, and the leaves
corresponding to $f_{\alpha}$ are the leaves incident to $u$.
The correspondence of the remaining edges incident to $u$ and $v$ to the columns of $M''$ is
given by their correspondence with leaves of $T_1$ explained above.
The matrix $M''$ has only zeros in the column representing the outer-face in the rotation at $u$.
Let us take the maximal subset of edges incident to $u$, whose elements
$e_1,\ldots, e_{n'}$  are ordered (we relabel the edges appropriately) such that
$i<i'$ implies $\min (P_i)<\min(P_{i'})$. For every $j$, $1\leq j \leq n'$, we introduce
a row having zeros in the columns corresponding to $e_1,\ldots ,e_j$ and columns
corresponding to $e_i$ for which there exists $i'$, $1\leq i'\leq j$, $\min (P_i)=\min(P_{i'})$, and having ones
in the remaining columns except the one representing the outer-face.

Let us take the maximal subset of edges incident to $u$, whose elements
$e_1,\ldots, e_{n'}$ are ordered  such that
$i<i'$ implies $\max (P_i)>\max(P_{i'})$. For every $j$, $1\leq j \leq {n'}$, we introduce
a row having zeros in the columns corresponding to $e_1,\ldots ,e_j$ and columns
corresponding to $e_i$ for which there exists $i'$, $1\leq i'\leq j$, $\max (P_i)=\max(P_{i'})$, and having  ones
in the remaining columns except the one representing the outer-face.  

A trapped vertex $v'$ on $P_{i}$ in a cycle $C$ consisting of $P_{i'}$ 
and $P_{i''}$ would violate a constraint $\{e_i'e_i''\}\{e_ie_0\}$,
where $e_0$ is the dummy edge on the outer-face enforced by $T_1$. \\

It remains to prove that the instance $\mathcal{I}$ is a if and only if 
any corresponding witnessing order of the leaves yields an isotopy class of $G$ that
contains an $x$-bounded embedding of $(G,\gamma)$ by Theorem~\ref{thm:characterization}.
This might come as a surprise since some constraints on the rotation system enforced by the original instance $(G,\gamma)$ 
might be missing in $\mathcal{I}$, and on the other hand some additional constraints might be introduced.

\begin{theorem}
\label{thm:theta_alg}
The instance $\mathcal{I}$ is a ``yes'' instance if and only if $(G,\gamma)$ admits an $x$-bounded embedding.
\end{theorem}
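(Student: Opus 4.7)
The plan is to invoke Theorem~\ref{thm:characterization}, so that the existence of an $x$-bounded embedding reduces to finding an isotopy class of $G$ free of both trapped vertices and infeasible interleaving pairs of caps and cups. An isotopy class of a theta graph $G$ is determined by a pair of cyclic rotations, one at each pole $u$ and $v$, that are compatible in the sense that the two rotations induce opposite cyclic orders on the common set of paths $P_1,\ldots, P_n$ when read from inside the cycle $P_\alpha\cup P_i$ containing the designated outer face. The role of the consistency tree $T_C$ with its two embeddings $\varphi_1,\varphi_2$ (one orientation-preserving, one orientation-reversing) is exactly to enforce this compatibility, while singling out the edge $e_\alpha$ (respectively $f_\alpha$) together with the outer-face slot on the opposite side to keep the matching bijective. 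So the first step is to observe that a solution of the sub-instance $(T_0,T_C,\varphi_1,\varphi_2)$ corresponds bijectively to an isotopy class of $G$ with a designated outer face that contains the path $P_\alpha$.

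For the ``only if'' direction I start from an $x$-bounded embedding of $(G,\gamma)$ and designate the outer face so that it is bounded by $P_\alpha\cup P_i$ for some $i$; this is possible because $I_\alpha$ is minimal. Reading the two rotations at $u$ and $v$ yields a labeling of the leaves of $T_0$, and the compatibility of the two rotations is precisely the condition that $\varphi_1,\varphi_2$ lift to a common cyclic order on $T_C$. I then need to verify that each of the trees $T_1,\ldots,T_m$ is satisfied: for $T_2,\ldots,T_l$ (resp.\ $T_{l+1},\ldots,T_m$) this follows by the argument of Section~\ref{sec:tree} applied to the subdivided star $G_u$ (resp.\ $G_v$) together with the path $P_\alpha$ glued through the auxiliary tree $G'$, because each row of $M_u'$ (resp.\ $M_v'$) translates an infeasible interleaving pair of paths of $G$ whose ``root-most'' common vertex is in $G_u$ (resp.\ $G_v$). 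For the tree $T_1$ one observes that a trapped vertex in $(G,\gamma)$ must lie on some $P_i$ and be enclosed by a cycle $P_{i'}\cup P_{i''}$, and the two families of rows in $M''$ (one for the lower cluster envelope, one for the upper) directly forbid exactly such a cyclic nesting around the dummy leaf standing for the outer-face wedge; so no trapped vertex implies all rows of $M''$ are satisfied.

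For the ``if'' direction, given a solution of $\mathcal{I}$, the pair $(T_0,T_C,\varphi_1,\varphi_2)$ fixes an isotopy class $\mathcal{E}$ of $G$ with a designated outer face. I must show $\mathcal{E}$ has no infeasible interleaving pair and no trapped vertex. A trapped vertex on a path $P_i$ enclosed by $P_{i'}\cup P_{i''}$ would produce in $T_1$ exactly the configuration ruled out by a row of $M''$, contradicting satisfaction. For an infeasible interleaving pair $(Q_1,Q_2)$ of an $s$-cap and a $b$-cup, I split into cases by whether their ``meeting sub-path'' sits in $G_u$, in $G_v$, or crosses $P_\alpha$. If it is wholly in $G_u$ (resp.\ $G_v$), the Lemma~\ref{lemma:treelemma} analysis for trees, applied to $G'$, yields a violated row of $M_u'$ (resp.\ $M_v'$). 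If the meeting sub-path crosses $P_\alpha$ then the pair necessarily engages both rotations, and the induced forbidden pattern on the leaves of $T_0$ is transported through $\varphi_1,\varphi_2$ into a consistency violation in $T_C$, using the minimality of $I_\alpha$ to ensure that $P_\alpha$ itself never acts as the $s$-cap or $b$-cup that creates the obstruction.

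The main obstacle is this last case: infeasible pairs that genuinely use both poles cannot be read off from $M_u'$ or $M_v'$ alone, and must be captured by the interplay of $T_C$ with $T_0$. Making this precise will require a careful accounting of how a cap or cup that passes through both $u$ and $v$ decomposes into a piece in $G_u$, a piece in $G_v$, and a (monotone portion of) $P_\alpha$, and showing that the resulting local obstruction at one of the poles is already forbidden by some row of $M_u'$ or $M_v'$ after propagation through the consistency maps; the minimality of $I_\alpha$, which prevents $P_\alpha$ from itself being a cap or cup of any pair it meets, is what makes this propagation sound.
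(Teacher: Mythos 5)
Your setup follows the paper's approach (reduce via Theorem~\ref{thm:characterization}, let $T_0$ and $T_C$ fix the isotopy class, $T_1$ handle trapped vertices, and the tree machinery of Section~\ref{sec:tree} applied to $G'$ supply the remaining constraints), but the proposal leaves unproven exactly the part that constitutes the substance of the theorem. The hard case in the ``if'' direction is not, as you frame it, the pair whose meeting sub-path ``crosses $P_\alpha$'' --- that case is directly encoded, since $P_\alpha$ is present in $G'$ in one piece. The hard case is an infeasible interleaving pair $P_1',P_2'$ whose common sub-path lies on some $P_l$ with $l\neq\alpha$ and whose two members run to opposite poles: such a pair has no counterpart among the paths of the tree $G'$, because $P_l$ is split there into two copies, one attached to $u$ and one to $v$, so no row of $M_u'$ or $M_v'$ imposes the corresponding restriction. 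One must show the restriction is nevertheless \emph{implied} by the encoded ones, and this is where the paper does real work: when $I_l\supseteq I_\alpha$ it combines the constraint obtained by substituting $P_\alpha$ for $P_l$ with the constraints of $T_1$, $T_0$ and $T_C$ to force opposite orientations of the relevant edge triples at $u$ and $v$; when $I_l\not\supseteq I_\alpha$ a separate orientation case analysis (the displayed propositional formula relating $\{e_pf_r\}\{e_qf_s\}$, $\{e_lf_l\}\{e_pe_q\}$ and $\{e_lf_l\}\{f_sf_r\}$) is needed. You explicitly defer this (``Making this precise will require a careful accounting\dots''), so the central implication is missing, and your stated use of the minimality of $I_\alpha$ (``$P_\alpha$ never acts as the $s$-cap or $b$-cup'') is not the role it plays in that derivation.

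There is also a gap in the soundness direction that you pass over: because each $P_l$, $l\neq\alpha$, appears twice in $G'$, the matrices $M_u'$, $M_v'$ contain rows coming from pairs of paths in $G'$ that use two copies of a single vertex of $G$; these do \emph{not} translate to interleaving pairs of $G$, contrary to your claim that every row does. The paper must (and does) argue that these extra constraints are exactly the ones preventing certain end vertices from being trapped in a cycle of $G$, hence are satisfied in any $x$-bounded embedding by Theorem~\ref{thm:characterization}. Without both of these arguments --- the implication of the non-encoded interleaving-pair constraints and the identification of the surplus constraints with trapped-vertex conditions --- the equivalence claimed in Theorem~\ref{thm:theta_alg} is not established.
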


\begin{proof}
By the discussion above it remains to prove that the order constraints corresponding to the
trees $T_2,\ldots, T_m$ are all the constraints given by the infeasible
interleaving pairs of paths $P_1'$ and $P_2'$, and possibly additional
constraint enforced by trapped vertices.

First, we note that no additional constraints are introduced due to the fact 
that we might have two copies of a single vertex of $G$ in $G'$. Indeed, such a 
constraint corresponds to a pair of paths $P_1'$ and $P_2'$ intersecting in the copy of $P_{\alpha}$,
such that their union contains at least one whole additional $P_l$, for $l\not=\alpha$.
However, in this case it must be that the two copies of a single vertex in the union of $P_1'$ and $P_2'$ are the endpoints of, let's say $P_1'$. Thus, the constraint of $P_1'$ and $P_2'$ exactly prevents end vertices of $P_2'$ from being trapped
 in the cycle of $G$ obtained by identifying the end vertices of $P_1'$.

\begin{figure}[htp]
\centering
\includegraphics[scale=0.7]{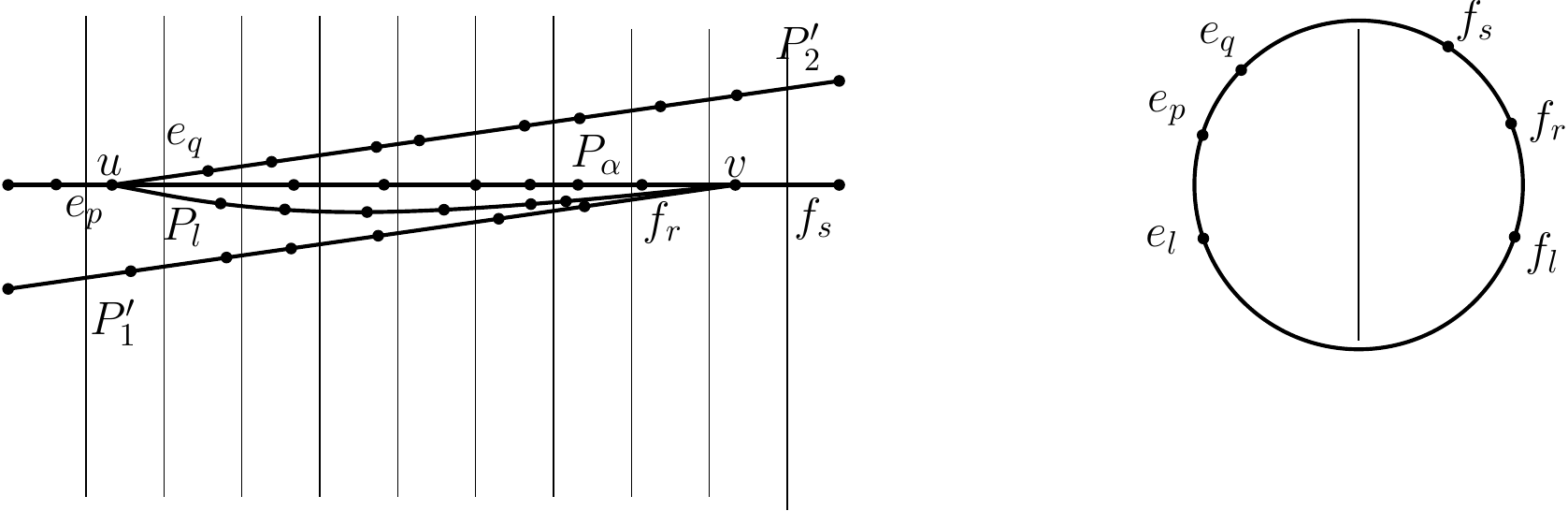}
\caption{The paths $P_1'$ and $P_2'$ meeting in $P_l$ such that $I_{l}= (\min(P_l), \max(P_l))$ 
contains $I_\alpha$ (left). The corresponding cyclic order of the leaves  corresponding to the edges incident to $u$ and $v$ captured by $T_0$ (right).}
\label{fig:thetaTree3}
\end{figure}

  If $P_1'$ and $P_2'$ intersect exactly in the vertex $u$ or $v$ the corresponding constraint is definitely captured. Similarly, if $P_1'$ and $P_2'$ intersects exactly in the path $P_\alpha$.
Otherwise, if $P_1'$ and $P_2'$ intersects in a path $P_l$ 
such that $I_{l}= (\min(P_l), \max(P_l))$ contains $I_\alpha$, we have the corresponding
constraint present implicitly. 

Refer to Figure~\ref{fig:thetaTree3}. Indeed, an ordering $\mathcal{O}$ of the columns of $M'$  witnessing that $\mathcal{I}$ is a ``yes'' instance satisfies 
$\{e_{\alpha}e_l\}\{e_pe_q\}$  by $T_1$, where
$e_p$ and $e_q$, respectively, is the edge incident to $u$ belonging to $P_1'$ and $P_2'$,
and  also $\mathcal{O}$ satisfies $\{f_{\alpha}f_l\}\{f_rf_s\}$, by the consistency tree $T_C$, where
$f_r$ and $f_s$, respectively, is the edge incident to $v$ belonging to $P_1'$ and $P_2'$,
Moreover, by the constraint obtained from the union of
paths  $P_1'$ and $P_2'$ by replacing $P_l$ with $P_{\alpha}$, we obtain
that $\mathcal{O}$ satisfies $\{e_pf_r\}\{e_qf_s\}$. By the constrains of $T_0$ it then 
follows that in the rotation at $u$ 
 the edges $e_l,e_p,e_q$ appear w.r.t. to this order 
 with the opposite orientation as $f_l,f_r,f_s$. Hence, the pair of $P_1'$ and $P_2'$ is feasible with respect to $\mathcal{O}$.

\begin{figure}[htp]
\centering
\includegraphics[scale=0.7]{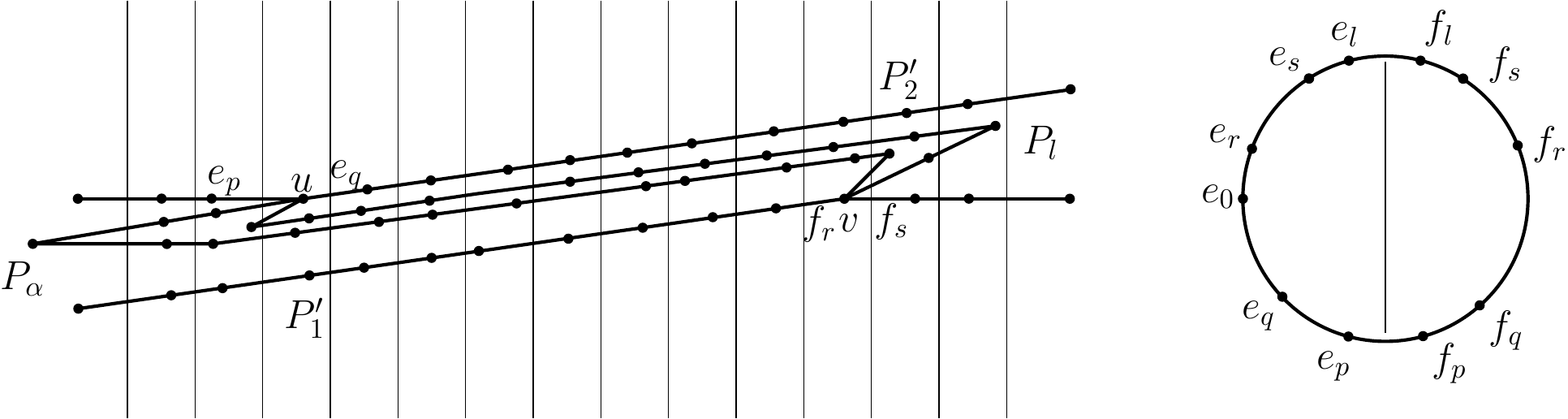}
\caption{The paths $P_1'$ and $P_2'$ meeting in $P_l$ such that $I_{l}= (\min(P_l), \max(P_l))$ 
does not contain $I_\alpha$ (left). The corresponding cyclic order of the leaves  corresponding to the edges incident to $u$ and $v$ captured by $T_0$ (right).}
\label{fig:thetaTree4} 
\end{figure}

Finally, if $I_{l}$ does not contain $I_\alpha$, the previous argument does not apply if the interval \\
$(\min(P_1'\cup P_2'), \max(P_1' \cup P_2'))$ does not contain 
$I_\alpha$. We assume that \\ $$\max(P_1' \cup P_2')>\max(I_l)>\max(I_\alpha)\ge \min(I_l)> \min(P_1' \cup P_2'))\ge \min(I)$$ and handle the remaining cases by the symmetry. We assume that $P_1'$ is a cap
 passing through edges $e_p$ and $f_r$.  We assume that $P_2'$ is a cup
 passing through edges $e_q$ and $f_s$. 
 
 The ordering $\mathcal{O}$ satisfies $\{e_pe_{\alpha}\}\{e_le_q\}$ and $\{f_rf_{\alpha}\}\{f_lf_s\}$.
 By  $T_1$ we have $\{e_qe_s\}\{e_{\alpha}e_l\}$. By $T_0$ and $T_C$,  $e_i$'s and $f_i$'s appear consecutively and they are reverse of each other in $\mathcal{O}$. Thus, we have $\{e_lf_l\}\{e_qf_s\}$.
 
 Refer to Figure~\ref{fig:thetaTree4}.
  A simple case analysis reveals that the observations in the previous paragraph gives us the following.
 If  $\{e_pf_r\}\{e_qf_s\}$ is
  satisfied by $\mathcal{O}$,
  then $\mathcal{O}$ satisfies $\{e_lf_l\}\{e_pe_q\}$ if and only if $\{e_lf_l\}\{f_sf_r\}$.
   On the other hand, if $\{e_pf_r\}\{e_qf_s\}$ is not satisfied by $\mathcal{O}$,
   then  $\mathcal{O}$ satisfies $\{e_lf_l\}\{e_pe_q\}$ if and only if it does not satisfy $\{e_lf_l\}\{f_sf_r\}$.
    By the symmetry there are four cases to check (see Figure~\ref{fig:thetaTree5}).
   Using the language of the formal logic the previous fact about $\mathcal{O}$
   is expressed by the following formula.

$$  \{e_pf_r\}\{e_qf_s\} \Rightarrow  (\{e_lf_l\}\{e_pe_q\} \Leftrightarrow \{e_lf_l\}\{f_sf_r\})  \ \bigwedge \  \neg \{e_pf_r\}\{e_qf_s\} \Rightarrow  (\{e_lf_l\}\{e_pe_q\} \Leftrightarrow \neg \{e_lf_l\}\{f_sf_r\}) $$
   
   \begin{figure}[htp]
\centering
\includegraphics[scale=0.7]{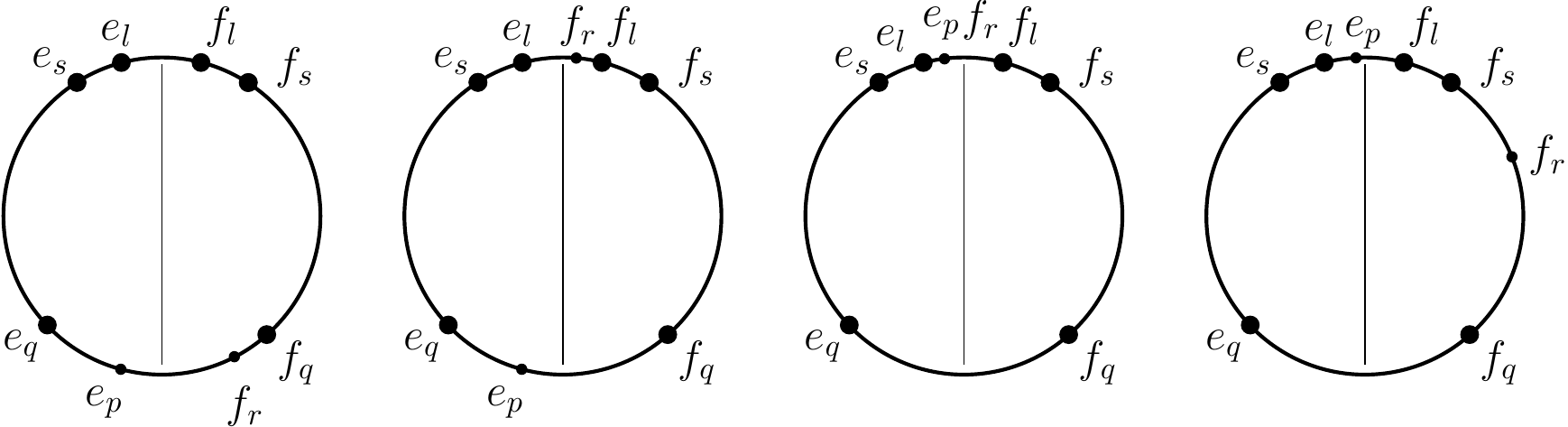}
\caption{The orderings of the leaves in $T_0$ corresponding to the true assignment of the propositional formula. Since only $e_p,e_q,f_r,f_s,e_l$ and $e_l$ appear in the formula.
We can fix $e_q,f_s,e_l$ and $f_l$ and use top-bottom symmetry.}
\label{fig:thetaTree5} 
\end{figure}

By the formula, it easily follows that the orientation  around $u$ and $v$, respectively, of the edges $e_l,e_p,e_q$ and $f_l,f_r,f_s$ is opposite of each other and we are done.
\end{proof}

\section{Beyond theta graphs and trees}
 
 One might wonder if our algorithm and/or Hanani--Tutte variant for theta graphs can be extended to 
arbitrary planar graphs.
 
 It is tempting to consider the following  definition of $(G',\gamma')$ corresponding 
 to an instance $(G,\gamma)$.
 Let us suppress every vertex of degree two in $G$. 
Let $G''$ denote the  resulting multi-graph. Each edge $e$ in $G''$ has a corresponding
path $P:=P(e)$ in $G$ (which is maybe just equal to $e$).
Let us assign a non-negative weights $w(e)$ to every edge $e$ of $G''$ equal to $\max(P)-\min(P)$.
Let $G'''$ denote the minimum weight spanning tree of $G''$.

The tree $G'$ is obtained as the union of the subgraph of $G$ corresponding to
$G'''$, and pairs of copies of paths $P(e)$, for each $e=uv$, where
one copy of each pair is attached to $u$ and the other one to $v$, but otherwise 
disjoint from $G'''$. The assignment of the vertices to the clusters  in $(G',\gamma')$ 
is inherited from $(G,\gamma)$.

Note that if $G$ is a theta graph the resulting $G'$ is almost equal to the one defined
in the previous subsections, except that earlier we shortened copies of $P(e)$'s for $e$ not in $G'''$
so that they do not contain one end vertex of $P(e)$. In the case of theta graphs this does
not make a difference, but for more general class of graphs our new definition of $G'$ might
be more convenient to work with. 

Now, we would like to use the construction of~\cite[Section 4.1]{BR14} enriched by
the constraints of $(G',\gamma')$, and other necessary contraptions if 
the graph is not two-connected.

If the graph is two-connected it seems plausible that
it is enough if we  generalize our construction of $T_1$ (taking care of trapped vertices),
and prove that the constraints of $(G',\gamma')$ together with other constraints account for 
all infeasible interleaving pairs of path.
To construct $T_1$ is not difficult, since we just repeat our construction of the matrix corresponding to $T_1$ for each consistency tree corresponding to a pair of vertices participating in the two-cut,
and combine the resulting matrices.

Then it seems that the only challenging part is to adapt the last paragraph in the proof of Theorem~\ref{thm:theta_alg}, which does not seem to be beyond reach. To this end it is likely that  a more efficient version of the characterization in Theorem~\ref{thm:characterization} is also needed, where by ``more efficient'' we mean a version
that restricts the set of interleaving pairs of paths considered. This should be possible,
since in our proofs for trees and theta graphs a considerably  restricted subset
of interleaving feasible pairs implied feasibility for the rest.

When the graph $G$ is not guaranteed to be two-connected, generalizing $T_1$ does not
seem to be a way to go. However, the general strategy of combining
simultaneous PC-orderings with our characterization could work, if the constraints preventing an occurrence of trapped vertices and infeasible interleaving pairs  are treated simultaneously using a more efficient version of 
 Theorem~\ref{thm:characterization}.

\section{Open problems}

\label{sec:open}

We also wonder if  the running time in Theorem~\ref{thm:AlgTreeXBounded} and~\ref{thm:theta} can be improved,
and if our main result, Theorem~\ref{thm:main}, can be extended
to radial~\cite{BBF04} or cyclic~\cite{ALBFMI15+,F16+} setting,
to higher genus surfaces and what are  its higher dimensional analogues.

\paragraph{Acknowledgment.}

We would like to express our sincere gratitude to the organizers and
participants of the 11th GWOP workshop, where we could discuss the research
problems treated in the present paper. In particular, we especially benefited from the discussions with Bettina Speckmann, Edgardo Rold\'{a}n-Pensado and Sebastian Stich.

Furthermore, we would like to thank J\'{a}n Kyn\v{c}l for useful discussions at the initial stage of this work and many useful comments,
 G\'abor Tardos for comments that helped to improve the presentation of the results, and to Arkadij Skopenkov for telling us about~\cite{S03}. An adaptation of the ideas from this work and also from~\cite{M97}, resulted in a tremendous simplification 
 of the proof of our main result, Theorem~\ref{thm:characterization}.

\bibliographystyle{plain}

\bibliography{bib}

\begin{thebibliography}{10}

\bibitem{ADDF13+}
Patrizio Angelini, Giordano Da~Lozzo, Giuseppe Di~Battista, and Fabrizio Frati.
\newblock Strip planarity testing.
\newblock In Stephen Wismath and Alexander Wolff, editors, {\em Graph Drawing},
  volume 8242 of {\em Lecture Notes in Computer Science}, pages 37--48.
  Springer International Publishing, 2013.

\bibitem{ALBFMI15+}
Patrizio Angelini, Giordano~Da Lozzo, Giuseppe~Di Battista, Fabrizio Frati,
  Maurizio Patrignani, and Ignaz Rutter.
\newblock Beyond level planarity.
\newblock 2015.
\newblock arXiv:1510.08274.

\bibitem{BBF04}
Christian Bachmaier, Franz~J. Brandenburg, and Michael Forster.
\newblock Radial level planarity testing and embedding in linear time.
\newblock In {\em Graph Drawing}, volume 2912 of {\em Lecture Notes in Computer
  Science}, pages 393--405. 2004.

\bibitem{BBLM94}
P.~Bertolazzi, Giuseppe Di~Battista, Giuseppe Liotta, and Carlo Mannino.
\newblock Upward drawings of triconnected digraphs.
\newblock {\em Algorithmica}, 12(6):476--497, 1994.

\bibitem{BR14}
Thomas Bl{\"{a}}sius and Ignaz Rutter.
\newblock Simultaneous {PQ}-ordering with applications to constrained embedding
  problems.
\newblock In {\em Proceedings of the Twenty-Fourth Annual {ACM-SIAM} Symposium
  on Discrete Algorithms, {SODA} 2013, New Orleans, Louisiana, USA, January
  6-8, 2013}, pages 1030--1043, 2013.
\newblock http://arxiv.org/abs/1112.0245.

\bibitem{BR14+}
Thomas Bl\"asius and Ignaz Rutter.
\newblock A new perspective on clustered planarity as a combinatorial embedding
  problem.
\newblock In Christian Duncan and Antonios Symvonis, editors, {\em Graph
  Drawing}, Lecture Notes in Computer Science, pages 440--451. 2014.

\bibitem{Booth1976335}
Kellogg~S. Booth and George~S. Lueker.
\newblock Testing for the consecutive ones property, interval graphs, and graph
  planarity using {PQ}-tree algorithms.
\newblock {\em Journal of Computer and System Sciences}, 13(3):335 -- 379,
  1976.

\bibitem{Brass2007117}
Peter Brass, Eowyn Cenek, Cristian~A. Duncan, Alon Efrat, Cesim Erten, Dan~P.
  Ismailescu, Stephen~G. Kobourov, Anna Lubiw, and Joseph~S.B. Mitchell.
\newblock On simultaneous planar graph embeddings.
\newblock {\em Computational Geometry}, 36(2):117 -- 130, 2007.

\bibitem{CN00}
Grant Cairns and Yury Nikolayevsky.
\newblock Bounds for generalized thrackles.
\newblock {\em Discrete Comput. Geom.}, 23(2):191--206, 2000.

\bibitem{D10}
Reinhard Diestel.
\newblock {\em {Graph Theory}}.
\newblock Springer, New York, 2010.

\bibitem{Feng95}
Qing-Wen Feng, Robert~F. Cohen, and Peter Eades.
\newblock How to draw a planar clustered graph.
\newblock In Ding-Zhu Du and Ming Li, editors, {\em Computing and
  Combinatorics}, volume 959 of {\em Lecture Notes in Computer Science}, pages
  21--30. Springer Berlin Heidelberg, 1995.

\bibitem{Feng95+}
Qing-Wen Feng, Robert~F. Cohen, and Peter Eades.
\newblock Planarity for clustered graphs.
\newblock In Paul Spirakis, editor, {\em Algorithms — ESA '95}, volume 979 of
  {\em Lecture Notes in Computer Science}, pages 213--226. Springer Berlin
  Heidelberg, 1995.

\bibitem{F14}
Radoslav Fulek.
\newblock Towards the {H}anani--{T}utte theorem for clustered graphs.
\newblock In {\em Graph-Theoretic Concepts in Computer Science - 40th
  International Workshop, {WG} 2014, Nouan-le-Fuzelier, France, June 25-27,
  2014. Revised Selected Papers}, pages 176--188, 2014.

\bibitem{F16+}
Radoslav Fulek.
\newblock C-planarity of embedded cyclic c-graphs.
\newblock 2016.
\newblock arxiv:1602.01346v2.

\bibitem{FKMP15}
Radoslav Fulek, Jan Kyn\v{c}l, Igor Malinovic, and D{\"{o}}m{\"{o}}t{\"{o}}r
  P{\'{a}}lv{\"{o}}lgyi.
\newblock Clustered planarity testing revisited.
\newblock {\em Electronic Journal of Combinatorics}, 22, 2015.

\bibitem{FPSS12}
Radoslav Fulek, Michael Pelsmajer, Marcus Schaefer, and Daniel
  \v{S}tefankovi\v{c}.
\newblock Hanani-{T}utte, monotone drawings and level-planarity.
\newblock {\em Thirty Essays in Geometric Graph Theory, J. Pach ed.}, pages
  263--288, 2012.

\bibitem{GT02}
Ashim Garg and Roberto Tamassia.
\newblock On the computational complexity of upward and rectilinear planarity
  testing.
\newblock {\em SIAM J. Comput.}, 31(2):601--625, February 2002.

\bibitem{C34}
Haim Hanani.
\newblock {\"{U}}ber wesentlich unpl{\"{a}}ttbare {K}urven im
  drei-dimensionalen {R}aume.
\newblock {\em Fundamenta Mathematicae}, 23:135--142, 1934.

\bibitem{HC03}
Wen-Lian Hsu and Ross~M. McConnell.
\newblock {PC}-trees and circular-ones arrangements.
\newblock {\em Theoretical Computer Science}, 296(1):99 -- 116, 2003.

\bibitem{JLM98}
Michael J{\"u}nger, Sebastian Leipert, and Petra Mutzel.
\newblock {\em Level Planarity Testing in Linear Time}, pages 224--237.
\newblock Springer Berlin Heidelberg, Berlin, Heidelberg, 1998.

\bibitem{L89}
Thomas Lengauer.
\newblock Hierarchical planarity testing algorithms.
\newblock {\em J. ACM}, 36(3):474--509, 1989.

\bibitem{M97}
Piotr Minc.
\newblock Embedding simplicial arcs into the plane.
\newblock {\em Topol. Proc. 22}, pages 305--340, 1997.

\bibitem{O79}
J.~Opatrny.
\newblock Total ordering problem.
\newblock {\em SIAM Journal on Computing}, 8(1):111--114, 1979.

\bibitem{PT00}
J{\'a}nos Pach and G{\'e}za T{\'o}th.
\newblock Which crossing number is it anyway?
\newblock {\em J. Combin. Theory Ser. B}, 80(2):225--246, 2000.

\bibitem{PT04_monotone}
J{\'a}nos Pach and G{\'e}za T{\'o}th.
\newblock Monotone drawings of planar graphs.
\newblock {\em J. Graph Theory}, 46(1):39--47, 2004.
\newblock updated version:
  \href{http://arxiv.org/abs/1101.0967}{arXiv:1101.0967}.

\bibitem{PSS09c}
Michael~J. Pelsmajer, Marcus Schaefer, and Despina Stasi.
\newblock Strong {H}anani--{T}utte on the projective plane.
\newblock {\em SIAM Journal on Discrete Mathematics}, 23(3):1317--1323, 2009.

\bibitem{PSS06}
Michael~J. Pelsmajer, Marcus Schaefer, and Daniel {\v{S}}tefankovi{\v{c}}.
\newblock Removing even crossings.
\newblock {\em J. Combin. Theory Ser. B}, 97(4):489--500, 2007.

\bibitem{PSS09}
Michael~J. Pelsmajer, Marcus Schaefer, and Daniel {\v{S}}tefankovi{\v{c}}.
\newblock Removing even crossings on surfaces.
\newblock {\em European Journal of Combinatorics}, 30(7):1704--1717, 2009.

\bibitem{S12}
Marcus Schaefer.
\newblock Hanani-{T}utte and related results.
\newblock To appear in Bolyai Memorial Volume.

\bibitem{S12+}
{Marcus} {Schaefer}.
\newblock Toward a theory of planarity: Hanani--{T}utte and planarity variants.
\newblock {\em Journal of Graph Algorithms and Applications}, 17(4):367--440,
  2013.

\bibitem{SS13}
Marcus Schaefer and Daniel {\v S}tefankovi{\v c}.
\newblock Block additivity of $\mathbb{Z}_2$-embeddings.
\newblock In {\em Proceedings of 21st International Symposium on Graph Drawing,
  Bordeaux, France}.
\newblock to appear.

\bibitem{S69}
K~Sieklucki.
\newblock Realization of mappings.
\newblock {\em Fundamenta Mathematicae}, 65(3):325--343, 1969.

\bibitem{S03}
Mikhail Skopenkov.
\newblock On approximability by embeddings of cycles in the plane.
\newblock {\em Topology and its Applications}, 134(1):1--22, 2003.

\bibitem{T70}
William~T. Tutte.
\newblock Toward a theory of crossing numbers.
\newblock {\em J. Combin. Theory}, 8:45--53, 1970.

\bibitem{tutte1963draw}
William~Thomas Tutte.
\newblock How to draw a graph.
\newblock {\em Proc. London Math. Soc}, 13(3):743--768, 1963.

\end{thebibliography}

\end{document}